\theoremstyle{plain}
\newtheorem{theorem}{Theorem}
\newtheorem*{theorem*}{Theorem}
\newtheorem{lemma}[theorem]{Lemma}
\newtheorem*{lemma*}{Lemma}
\newtheorem{corollary}[theorem]{Corollary}
\newtheorem*{corollary*}{Corollary}
\newtheorem*{proposition*}{Proposition}
\theoremstyle{definition}
\newtheorem{definition}[theorem]{Definition}
\newtheorem*{definition*}{Definition}
\newtheorem{example}[theorem]{Example}
\newtheorem*{example*}{Example}
\theoremstyle{remark}
\newtheorem*{remark*}{Remark}
\newcommand{\Haskell}{{\sc Haskell}}
\newcommand{\eg}{{e.g.}}
\newcommand{\ie}{{i.e.}}
\newcommand{\sect}[1]{Section~\ref{#1}}
\newcommand{\fig}[1]{Figure~\ref{#1}}
\renewcommand{\|}{{~\mid~}}
\renewcommand{\=}{{~~ = ~~}}
\newcommand{\der}{\vdash}
\newcommand{\emptyctxt}{()}
\newcommand{\inctxt}[2]{#1\der#2}
\newcommand{\set}[1]{\{ #1 \}}
\newcommand{\subst}[2]{(#1/#2)}
\renewcommand{\mod}[1]{\; ({\sf mod.} \; #1)}
\newcommand{\hmapsto}{~}
\newcommand\myeq{f}
\newcommand{\pabs}[1]{{\langle}#1{\rangle} \; }
\newcommand{\refl}[1]{{1_{#1}}}   
\DeclareMathOperator{\transport}{\mathsf{transp}} 
\newcommand{\J}{\mathsf{J}}
\DeclareMathOperator{\idrefl}{\mathsf{r}}
\newcommand{\dM}{{\sf dM}}
\newcommand{\II}{\mathbb{I}}
\newcommand{\FF}{\mathbb{F}}
\newcommand{\NN}{{\sf N}}
\newcommand{\natrec}{{\sf natrec}}
\newcommand{\s}{{\sf s}}
\newcommand{\suc}[1]{\s \; #1}
\newcommand{\0}{{\sf 0}}
\newcommand{\Id}{{\sf Id}}
\newcommand{\ident}{{\sf id}}
\newcommand{\Path}{{\sf Path}}
\newcommand{\Equiv}{{\sf Equiv}}
\newcommand{\isEquiv}{{\sf isEquiv}}
\newcommand{\ext}{{\sf contr}}
\newcommand{\isContr}{{\sf isContr}}
\DeclareMathOperator{\dom}{dom}
\def\NN{\hbox{\sf N}}
\def\N0{\hbox{\sf N}_0}
\newcommand{\lam}[2]{{\langle}#1{\rangle}#2}
\newcommand{\id}{\mathrm{id}}
\newcommand{\pp}{{\sf p}}
\newcommand{\qq}{{\sf q}}
\newcommand{\comp}{{\sf comp}}
\newcommand{\hcomp}{{\sf hcomp}}
\newcommand{\pres}[1]{{\sf pres}^{#1}}
\newcommand{\eq}{{\sf equiv}}
\newcommand{\Comp}{{\sf fill}}
\newcommand{\base}{{\sf base}}
\newcommand{\inc}{{\sf inc}}
\newcommand{\inh}{\mathsf{inh}}
\newcommand{\squash}{{\sf squash}}
\newcommand{\transp}{{\sf transp}} 
\newcommand{\squeeze}{{\sf squeeze}}
\newcommand{\LOOP}{{\sf loop}}
\newcommand{\Sp}{{\sf S}}
\newcommand{\Glue}{{\sf Glue}}
\newcommand{\glue}{{\sf glue}}
\newcommand{\ugl}{{\sf unglue}}
\newcommand{\U}{\mathsf{U}}
\DeclareMathOperator{\eqToPath}{\mathsf{eqToPath}}
\newcommand{\can}{\mathsf{pathToEq}} 
\newcommand{\Set}{\mathbf{Set}}
\newcommand{\Top}{\mathbf{Top}}
\DeclareMathOperator{\sing}{\mathrm{S}} 
\newcommand{\CC}{\mathcal{C}}
\renewcommand{\deg}{\mathrm{s}}
\newcommand{\op}{\mathrm{op}}
\DeclareMathOperator{\Hom}{Hom}
\newcommand{\cwf}{CwF}
\DeclareMathOperator{\Ty}{Ty}
\DeclareMathOperator{\FTy}{FTy} 
\DeclareMathOperator{\Ter}{Ter}
\newcommand{\den}[1]{[\![#1]\!]} 
\DeclareMathOperator{\yoneda}{\mathbf{y}}
\DeclareMathOperator{\scomp}{\mathtt{comp}} 
\DeclareMathOperator{\sGlue}{\mathtt{Glue}} 
\DeclareMathOperator{\sglue}{\mathtt{glue}} 
\DeclareMathOperator{\sugl}{\mathtt{unglue}} 
\newcommand{\sNN}{\mathtt{N}}
\DeclareMathOperator{\sequiv}{\mathtt{e}} 
\newcommand{\sPath}{\mathtt{Path}}
\newcommand{\spabs}{\langle \, \rangle}
\newcommand{\NAT}{\mathbb{N}}   
\newcommand{\UU}{\mathtt{U}}   
\DeclareMathOperator{\app}{\mathtt{app}}
\DeclareMathOperator{\El}{\mathsf{El}}
\newcommand{\code}[1]{\ulcorner #1 \urcorner}
\tikzset{
    comp line/.style={thick,draw=black,decorate, decoration={snake},
      ->, >=stealth'},
    path line/.style={thick,draw=black},
    fun line/.style={thick,draw=black, ->, >=stealth'}
}
\tikzset{
    comp line/.style={thick,draw=black,decorate, decoration={snake},
      ->, >=stealth'},
    path line/.style={thick,draw=black},
    fun line/.style={thick,draw=black, ->, >=stealth'}
}
\begin{document}

\title{Cubical Type Theory: a constructive interpretation of the
  univalence axiom\footnote{This material is based upon work supported
    by the National Science Foundation under agreement
    No. DMS-1128155. Any opinions, findings and conclusions or
    recommendations expressed in this material are those of the
    author(s) and do not necessarily reflect the views of the National
    Science Foundation.}}

\author{Cyril Cohen, Thierry Coquand, Simon Huber, and Anders Mörtberg}

\date{}

\maketitle

\begin{abstract}
  This paper presents a type theory in which it is possible to
  directly manipulate $n$-dimensional cubes (points, lines, squares,
  cubes, etc.) based on an interpretation of dependent type theory in
  a cubical set model. This enables new ways to reason about identity
  types, for instance, function extensionality is directly provable in
  the system. Further, Voevodsky's univalence axiom is provable in
  this system. We also explain an extension with some higher inductive
  types like the circle and propositional truncation. Finally we
  provide semantics for this cubical type theory in a constructive
  meta-theory.
\end{abstract}

\tableofcontents

\section{Introduction}\label{sec:introduction}

This work is a continuation of the program started in
\cite{BCH,simonlic} to provide a constructive justification of
Voevodsky's univalence axiom \cite{Voevodsky}. This axiom allows many
improvements for the formalization of mathematics in type theory:
function extensionality, identification of isomorphic structures,
etc. In order to preserve the good computational properties of type
theory it is crucial that postulated constants have a computational
interpretation. Like in \cite{BCH,simonlic,Pitts} our work is based on
a nominal extension of $\lambda$-calculus, using {\em names} to
represent formally elements of the unit interval $[0,1]$. This paper
presents two main contributions.

The first one is a refinement of the semantics presented in
\cite{BCH,simonlic}.  We add new operations on names corresponding to
the fact that the interval $[0,1]$ is canonically a de~Morgan
algebra~\cite{Balbes}.  This allows us to significantly simplify our
semantical justifications. In the previous work, we noticed that it is
crucial for the semantics of higher inductive types~\cite{hott-book}
to have a ``diagonal'' operation. By adding this operation we can
provide a semantical justification of some higher inductive types and
we give two examples (the spheres and propositional
truncation). Another shortcoming of the previous work was that using
path types as equality types did not provide a justification of the
computation rule of the Martin-L\"of identity type~\cite{ML75} as a
judgmental equality. This problem has been solved by
Andrew Swan~\cite{Swan}, in the framework of \cite{BCH,simonlic,Pitts},
who showed that we can define a new type, {\em equivalent to}, but not
judgmentally equal to the path type. This has a simple definition in
the present framework.

The second contribution is the design of a type system\footnote{We
  have implemented a type-checker for this system in \Haskell{}, which
  is available at:\\ \url{https://github.com/mortberg/cubicaltt}}
inspired by this semantics which extends Martin-L\"of type theory
\cite{MLTT72,ML75}. We add two new operations on contexts: addition of
new names representing dimensions and a restriction operation. Using
these we can define a notion of extensibility which generalizes the
notion of being connected by a path, and then a Kan composition
operation that expresses that being extensible is preserved along
paths.  We also define a new operation on types which expresses that
this notion of extensibility is preserved by equivalences. The axiom
of univalence, and composition for the universe, are then both
expressible using this new operation.

\medskip

The paper is organized as follows. The first part,
Sections~\ref{sec:basictt} to \ref{sec:universe}, presents the type
system. The second part, \sect{sec:semantics}, provides its semantics
in cubical sets. Finally, in \sect{sec:extensions}, we present two
possible extensions: the addition of an identity type, and two
examples of higher inductive types.

\section{Basic type theory}\label{sec:basictt}

In this section we introduce the version of dependent type theory on
which the rest of the paper is based. This presentation is standard,
but included for completeness. The type theory that we consider has a
type of natural numbers, but no universes (we consider the addition of
universes in~\sect{sec:universe}). It also has $\beta$ and
$\eta$-conversion for dependent functions and surjective pairing for
dependent pairs.

The syntax of contexts, terms and types is specified by:
\[
\begin{array}{lcll}
  \Gamma,\Delta & ::= & \emptyctxt \| \Gamma, x : A & \hspace{2cm}\text{Contexts}\\ \\
  t,u,A,B       & ::= & x \| \lambda x : A. \; t \| t \; u \| (x : A) \rightarrow B & \hspace{2cm}\Pi\text{-types} \\
                & \|  & (t,u) \| t.1 \| t.2 \| (x : A) \times B & \hspace{2cm}\Sigma\text{-types} \\
                & \|  & \0 \| \suc{u} \| \natrec~t~u \| \NN & \hspace{2cm}\text{Natural numbers} \\
\end{array}
\]

We write $A \rightarrow B$ for the non-dependent function space and
$A \times B$ for the type of non-dependent pairs. Terms and types are
considered up to $\alpha$-equivalence of bound
variables. Substitutions, written
$\sigma = ({x_1}/{u_1}, \dots, {x_n}/{u_n})$, are defined to act on
expressions as usual, i.e., simultaneously replacing $x_i$ by $u_i$,
renaming bound variables whenever necessary.  The inference rules of
this system are presented in~\fig{basictt:rules} where in the
$\eta$-rule for $\Pi$- and $\Sigma$-types we omitted the premises that
$t$ and $u$ should have the respective type.

We define $\Delta \der \sigma:\Gamma$ by induction on $\Gamma$. We
have $\Delta \der ():()$ (empty substitution) and $\Delta \der
(\sigma,{x}/{u}):\Gamma,x:A$ if $\Delta\der \sigma :\Gamma$ and
$\Delta \der u:A \sigma$.

We write $\mathsf{J}$ for an arbitrary judgment and, as usual, we
consider also {\em hypothetical} judgments $\Gamma\der\mathsf{J}$ in a
{\em context} $\Gamma$.

\begin{figure}
\fbox{Well-formed contexts, $\Gamma \der$} (The condition
$x \notin \dom(\Gamma)$ means that $x$ is not declared in $\Gamma$)
\begin{mathpar}
  \inferrule { } {\emptyctxt \der {}} %
  \and %
  \inferrule {\Gamma \der A} {\Gamma, x : A \der {}}
             \quad{(x \notin \dom(\Gamma))} %
\end{mathpar}
\fbox{Well-formed types, $\Gamma \der A$}
\begin{mathpar}
  \inferrule {\Gamma, x : A \der B} {\Gamma \der (x : A) \rightarrow B} %
  \and %
  \inferrule {\Gamma, x : A \der B} {\Gamma \der (x : A) \times B} %
  \and %
  \inferrule {\Gamma \der} {\Gamma \der \NN} %
\end{mathpar}
\fbox{Well-typed terms, $\Gamma \der t : A$}
\begin{mathpar}
  \inferrule {\Gamma \der t : A \\ \Gamma \der A = B} {\Gamma \der t : B} %
  \and %
  \inferrule {\Gamma, x : A \der t : B}
             {\Gamma \der \lambda x : A. \; t : (x : A) \rightarrow B} %
  \and %
  \inferrule {\Gamma \der} {\Gamma \der x : A}~{(x : A \in \Gamma)} %
  \and %
  \inferrule {\Gamma \der t : (x : A) \rightarrow B \\ \Gamma \der u : A}
             {\Gamma \der t \; u : B \subst{x}{u}} %
  \and %
  \inferrule {\Gamma \der t : (x : A) \times B}
             {\Gamma \der t.1 : A} %
  \and %
  \inferrule {\Gamma \der t : (x : A) \times B}
             {\Gamma \der t.2 : B \subst{x}{t.1}} %
  \and %
  \inferrule {\Gamma, x : A \der B \\ \Gamma \der t : A \\
              \Gamma \der u : B \subst{x}{t}}
             {\Gamma \der (t,u) : (x : A) \times B} %
  \and %
  \inferrule { \Gamma \der {}} {\Gamma \der \0 : \NN} %
  \and %
  \inferrule {\Gamma \der n : \NN} {\Gamma \der \suc{n} : \NN} %
  \and %
  \inferrule {\Gamma, x : \NN \der P \\ \Gamma \der a : P\subst{x}{\0} \\
              \Gamma \der b : (n : \NN) \rightarrow P\subst{x}{n}
              \rightarrow P\subst{x}{\suc{n}}}
             {\Gamma \der \natrec~a~b : (x : \NN) \rightarrow P} %
\end{mathpar}
\fbox{Type equality, $\Gamma \der A = B$} (Congruence and equivalence
rules which are omitted)

\vspace{0.07cm}
\fbox{Term equality, $\Gamma \der a = b : A$} (Congruence and
equivalence rules are omitted)
\begin{mathpar}
  \inferrule {\Gamma \der t = u : A \\ \Gamma \der A = B}
             {\Gamma \der t = u : B} %
  \and %
  \inferrule {\Gamma, x : A \der t : B \\ \Gamma \der u : A}
             {\Gamma \der (\lambda x : A. \; t) \; u =
              t \subst{x}{u} : B \subst{x}{u}} %
  \and %
  \inferrule {\Gamma,x:A \der t~x = u~x: B}
             {\Gamma \der t = u : (x : A) \rightarrow B} %
  \and %
  \inferrule {\Gamma, x : A \der B \\
              \Gamma \der t : A \\ \Gamma \der u : B \subst{x}{t}}
             {\Gamma \der (t,u).1 = t : A} %
  \and %
  \inferrule {\Gamma, x : A \der B \\
              \Gamma \der t : A \\ \Gamma \der u : B \subst{x}{t}}
             {\Gamma \der (t,u).2 = u : B \subst{x}{t}} %
  \and %
  \inferrule {\Gamma, x : A \der B \\
              \Gamma \der t.1 = u.1 : A \\ \Gamma \der t.2 = u.2 : B \subst{x}{t.1}}
             {\Gamma \der t = u : (x : A) \times B} %
  \and %
  \inferrule {\Gamma, x : \NN \der P \\ \Gamma \der a : P\subst{x}{\0} \\
              \Gamma \der b : (n : \NN) \rightarrow P\subst{x}{n}
              \rightarrow P\subst{x}{\suc{n}}}
             {\Gamma \der \natrec~a~b~\0 = a : P\subst{x}{\0}} %
  \and %
  \inferrule {\Gamma, x : \NN \der P \\ \Gamma \der a : P\subst{x}{\0} \\
              \Gamma \der b : (n : \NN) \rightarrow P\subst{x}{n}
              \rightarrow P\subst{x}{\suc{n}} \\ \Gamma \der n : \NN}
             {\Gamma \der \natrec~a~b~(\suc{n}) = b \; n \; (\natrec~a~b~n) : P\subst{x}{\suc{n}}} %
\end{mathpar}
\caption{Inference rules of the basic type theory}\label{basictt:rules}
\end{figure}

The following lemma will be valid for all extensions of type theory we
consider below.

\begin{lemma}
  Substitution is admissible:
  \begin{mathpar}
    \inferrule {\Gamma \der \mathsf{J} \\ \Delta \der \sigma : \Gamma}
               {\Delta \der \mathsf{J}\sigma} %
  \end{mathpar}
  In particular, weakening is admissible, i.e., a judgment valid in a
  context stays valid in any extension of this context.
\end{lemma}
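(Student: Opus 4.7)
The plan is to proceed by induction on the derivation of $\Gamma \der \mathsf{J}$, treating each inference rule of Figure~\ref{basictt:rules} and simultaneously handling all four forms of judgement. First I would establish, by a separate induction on $\Gamma$, the auxiliary fact that if $\Delta \der \sigma : \Gamma$ and $(x : A) \in \Gamma$ then $\Delta \der \sigma(x) : A\sigma$. This is immediate from the defining clauses of $\Delta \der \sigma : \Gamma$, which record exactly such a typing derivation for each component of $\sigma$, together with the observation that $A\sigma$ depends only on the action of $\sigma$ on the variables of $\Gamma$ declared before $x$. This auxiliary fact discharges the variable case of the main induction.

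The remaining cases are dispatched uniformly. For a rule with no binding in its premise, one applies the induction hypothesis to each premise and reapplies the same rule in $\Delta$. For a rule binding a variable $x$ in its premise --- $\Pi$- and $\Sigma$-formation, $\lambda$-abstraction, pairing, and the $\natrec$ rule --- I would first $\alpha$-rename $x$ so that it is fresh for $\Delta$ and for $\dom(\sigma)$, and then form the extended substitution $\sigma' = (\sigma, x/x)$. One verifies $\Delta, x : A\sigma \der \sigma' : \Gamma, x : A$ via the variable rule in the extended context, whereupon the induction hypothesis applies to the premise. A routine syntactic identity $B\subst{x}{u}\sigma = B\sigma'\subst{x}{u\sigma}$ (valid because $x$ is fresh for $\sigma$) ensures that the substituted conclusions of application, projection, and the recursor have the expected form. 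The equality judgements and the conversion rule $\Gamma \der t : A, \Gamma \der A = B$ are handled by the same template, invoking the induction hypothesis on both typing and equality premises.

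The weakening claim falls out as a corollary by instantiating $\sigma$ to the componentwise identity substitution, which is valid from any well-formed extension $\Delta$ of $\Gamma$ into $\Gamma$ since every variable of $\Gamma$ remains typable in $\Delta$ by the variable rule. The principal obstacle is not mathematical depth but bookkeeping: systematically choosing fresh bound variables in the binder cases and verifying the substitution-commutation identity above so that every rule application reassembled in $\Delta$ genuinely has the right premises and conclusion. Since at this stage the system has no universe and no identity type, there are no circular dependencies between typing and equality, so the induction goes through straightforwardly.
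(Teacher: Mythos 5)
The paper states this lemma without proof (it is a standard result included for completeness), so there is no authorial argument to compare against; your proposal is the standard induction on derivations, and in outline it is the right approach. However, as written it contains a genuine ordering problem: you derive weakening only \emph{as a corollary} of substitution, yet the binder cases of your substitution proof already need weakening. Concretely, to verify $\Delta, x : A\sigma \der (\sigma, x/x) : \Gamma, x : A$ you need, by the inductive definition of the substitution judgment, not just the variable rule for the new component but also $\Delta, x : A\sigma \der \sigma : \Gamma$, i.e.\ each old component judgment $\Delta \der u_i : A_i\sigma$ must be re-established in the enlarged context $\Delta, x : A\sigma$. That is precisely weakening of term judgments, and it is not an instance of your induction hypothesis (which concerns subderivations of $\Gamma \der \mathsf{J}$, not arbitrary judgments over $\Delta$). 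So the proof as scheduled is circular. The standard fix is to prove weakening first by a separate, direct induction on derivations (every rule is stable under inserting an unused declaration, so this induction is routine and does not need substitution), or equivalently to prove admissibility of general context morphisms in a formulation that builds the weakening into the statement; with that in hand your binder cases go through exactly as you describe.

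Two smaller pieces of bookkeeping you should also make explicit: to form the context $\Delta, x : A\sigma$ in the binder cases you need $\Delta \der A\sigma$, which requires either a presupposition/inversion lemma extracting $\Gamma \der A$ from the premise $\Gamma, x : A \der B$ (or from context well-formedness) before applying the induction hypothesis, or else carrying context-formation derivations along in the induction; and the commutation identity $B\subst{x}{u}\sigma = B\sigma'\subst{x}{u\sigma}$ that you invoke should be stated and proved as a purely syntactic lemma about raw terms (by induction on $B$, using the freshness of $x$ for $\sigma$), since the application, projection, and recursor cases all lean on it. With the weakening lemma front-loaded and these two points recorded, your argument assembles correctly, and your derivation of the weakening statement in the lemma from the componentwise identity substitution is fine.
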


\section{Path types}\label{sec:pathtypes}

As in~\cite{BCH,Pitts} we assume that we are given a discrete infinite
set of names (representing directions) $i,j,k,\dots$ We define $\II$
to be the free de~Morgan algebra~\cite{Balbes} on this set of
names. This means that $\II$ is a bounded distributive lattice with
top element $1$ and bottom element $0$ with an involution $1 - r$
satisfying:
\begin{mathpar}
1 - 0 = 1 \and %
1 - 1 = 0 \and %
1 - (r \lor s) = (1 - r) \land (1 - s) \and %
1 - (r \land s) = (1 - r) \lor (1 - s) \and %
\end{mathpar}
The elements of $\II$ can hence be described by the following grammar:
\[
\begin{array}{lcl}
  r,s & ::= & 0 \| 1 \| i \| 1 - r \| r \wedge s \| r\vee s
\end{array}
\]
The set $\II$ also has decidable equality, and as a distributive
lattice, it can be described as the free distributive lattice
generated by symbols $i$ and $1-i$~\cite{Balbes}. As in~\cite{BCH},
the elements in $\II$ can be thought as formal representations of
elements in $[0,1]$, with $r \wedge s$ representing $min(r,s)$ and $r
\vee s$ representing $max(r,s)$. With this in mind it is clear that
$(1 - r) \land r \neq 0$ and $(1 -r) \lor r \neq 1$ in general.

\begin{remark*}
  We could instead also use a so-called Kleene
  algebra~\cite{Kalman58}, i.e., a de~Morgan algebra satisfying in
  addition $r \land (1-r) \leqslant s \lor (1-s)$. The free Kleene
  algebra on the set of names can be described as above but by
  additionally imposing the equations $i \land (1-i) \leqslant j \lor
  (1-j)$ on the generators; this still has a decidable equality.  Note
  that $[0,1]$ with the operations described above is a Kleene
  algebra.  With this added condition, $r = s$ if, and only if, their
  interpretations in $[0,1]$ are equal. A consequence of using a
  Kleene algebra instead would be that more terms would be
  judgmentally equal in the type theory.
\end{remark*}

\subsection{Syntax and inference rules}

Contexts can now be extended with name declarations:
\[
\begin{array}{lcl}
  \Gamma,\Delta & ::= & \dots \| \Gamma, i : \II \\
\end{array}
\]
together with the context rule:
\begin{mathpar}
  \inferrule {\Gamma \der} {\Gamma, i : \II \der}~{(i \notin \dom(\Gamma))} %
\end{mathpar}
A judgment of the form $\Gamma \der r : \II$ means that $\Gamma \der$
and $r$ in $\II$ depends only on the names declared in
$\Gamma$. The judgment $\Gamma \der r = s : \II$ means that $r$ and
$s$ are equal as elements of $\II$, $\Gamma \der r : \II$, and $\Gamma
\der s : \II$.  Note, that judgmental equality for $\II$ will be
re-defined once we introduce restricted contexts in
\sect{sec:compositions}.

The extension to the syntax of basic dependent type theory is:
\[
\begin{array}{lcll}
  t,u,A,B       & ::= & \dots & \\
                & \|  & \Path~A~t~u \| \pabs{i} t \| t \; r & \hspace{2cm} \text{Path types} \\
\end{array}
\]
Path abstraction, $\pabs{i} t$, binds the name $i$ in $t$, and path
application, $t \; r$, applies a term $t$ to an element $r : \II$. %
This is similar to the notion of name-abstraction in nominal
sets~\cite{PittsBook}.

The substitution operation now has to be extended to substitutions of
the form $\subst{i}{r}$. There are special substitutions of the form
$\subst{i}{0}$ and $\subst{i}{1}$ corresponding to taking faces of an
$n$-dimensional cube, we write these simply as $(i0)$ and $(i1)$.

The inference rules for path types are presented
in~\fig{pathtypes:rules} where again in the $\eta$-rule we omitted
that $t$ and $u$ should be appropriately typed.

\begin{figure}[h]
\begin{mdframed}
\begin{mathpar}
  \inferrule {\Gamma \der A \\ \Gamma \der t : A \\ \Gamma \der u : A}
             {\Gamma \der \Path~A~t~u} %
  \and %
  \inferrule {\Gamma \der A \\ \Gamma, i : \II \der t : A}
             {\Gamma \der \pabs{i} t : \Path~A~t(i0)~t(i1)} %
  \and %
  \inferrule {\Gamma \der t : \Path~A~u_0~u_1 \\ \Gamma \der r : \II}
             {\Gamma \der t \; r : A} %
  \and %
  \inferrule {\Gamma\der A \\ \Gamma, i : \II \der t : A \\
              \Gamma \der r : \II} %
             {\Gamma \der (\pabs{i} t) \; r = t \subst{i}{r} : A} %
  \and %
  \inferrule {\Gamma, i : \II \der t~i = u~i : A}
             {\Gamma \der t = u : \Path~A~u_0~u_1} %
  \and %
  \inferrule {\Gamma \der t : \Path~A~u_0~u_1} {\Gamma \der t \; 0 = u_0 : A} %
  \and %
  \inferrule {\Gamma \der t : \Path~A~u_0~u_1} {\Gamma \der t \; 1 = u_1 : A} %
\end{mathpar}
\end{mdframed}
\caption{Inference rules for path types}\label{pathtypes:rules}
\end{figure}

We define $\refl{a} : \Path~A~a~a$ as $\refl{a} = \pabs{i} a$, which
corresponds to a proof of reflexivity.

The intuition is that a type in a context with $n$ names corresponds
to an $n$-dimensional cube:
%
%
\begin{center}
\setlength{\tabcolsep}{5mm} 
\begin{tabular}{|c|c|}
  \hline
  $\emptyctxt \der A$ &
  $\bullet \; A$
  \\ \hline
  $i : \II \der A$ &
  \begin{tikzpicture}[baseline=-2,scale=2]
    \node (A0) at (0,0) {$A(i0)$};%
    \node (A1) at (1,0) {$A(i1)$};%
    \path[->,font=\scriptsize,>=angle 90]%
       (A0) edge node[above]{$A$} (A1);%
  \end{tikzpicture}
  \\ \hline
  $i : \II, j : \II \der A$ &
  \begin{tikzpicture}[baseline=30,xscale=3,yscale=2]
    \node (A01) at (0,1) {$A(i0)(j1)$};
    \node (A11) at (1,1) {$A(i1)(j1)$};
    \node (A00) at (0,0) {$A(i0)(j0)$};
    \node (A10) at (1,0) {$A(i1)(j0)$};
    \path[->,font=\scriptsize,>=angle 90]
      (A01) edge node[above]{$A(j1)$} (A11)
      (A00) edge node[left]{$A(i0)$} (A01)
      (A10) edge node[right]{$A(i1)$} (A11)
      (A00) edge node[below]{$A(j0)$} (A10);
  \end{tikzpicture}
  \\ \hline
  $\vdots$ &
  $\vdots$
  \\ \hline
\end{tabular}
\end{center}

Note that $A(i0)(j0) = A(j0)(i0)$. The substitution~$\subst{i}{j}$
corresponds to renaming a dimension, while $\subst{i}{1 - i}$
corresponds to the inversion of a path. If we have $i : \II \der p$
with $p(i0) = a$ and $p(i1) = b$ then it can be seen as a line
\begin{center}
  \begin{tikzpicture}[scale=2.5]
    \node (a) at (0,0) {$a$};%
    \node (b) at (1,0) {$b$};%
    \path[->,font=\scriptsize,>=angle 90]%
       (a) edge node[above]{$p$} (b);%
  \end{tikzpicture}
\end{center}
in direction $i$, then:
\begin{center}
  \begin{tikzpicture}[scale=2.5]
    \node (b) at (0,0) {$b$};%
    \node (a) at (1,0) {$a$};%
    \path[->,font=\scriptsize,>=angle 90]%
       (b) edge node[above]{$p \subst{i}{1-i}$} (a);%
  \end{tikzpicture}
\end{center}

The substitutions $\subst{i}{i \land j}$ and $\subst{i}{i \lor j}$
correspond to special kinds of degeneracies called
\emph{connections}~\cite{Brown}. The connections $p \subst{i}{i \land
  j}$ and $p \subst{i}{i \lor j}$ can be drawn as the squares:
%
%
\begin{equation*}
  \begin{tikzpicture}[baseline=30,xscale=2,yscale=2]
    \node (A01) at (0,1) {$a$};
    \node (A11) at (1,1) {$b$};
    \node (A00) at (0,0) {$a$};
    \node (A10) at (1,0) {$a$};
    \node (c) at (0.5,0.5) {$p\subst{i}{i \land j}$};
    \path[->,font=\scriptsize,>=angle 90]
      (A01) edge node[above]{$p$} (A11)
      (A00) edge node[left]{$p(i0)$} (A01)
      (A10) edge node[right]{$p \subst{i}{j}$} (A11)
      (A00) edge node[below]{$p(i0)$} (A10);
  \end{tikzpicture}
  \qquad %
  \begin{tikzpicture}[baseline=30,xscale=2,yscale=2]
    \node (A01) at (0,1) {$b$};
    \node (A11) at (1,1) {$b$};
    \node (A00) at (0,0) {$a$};
    \node (A10) at (1,0) {$b$};
    \node (c) at (0.5,0.5) {$p\subst{i}{i \lor j}$};
    \path[->,font=\scriptsize,>=angle 90]
      (A01) edge node[above]{$p(i1)$} (A11)
      (A00) edge node[left]{$p \subst{i}{j}$} (A01)
      (A10) edge node[right]{$p(i1)$} (A11)
      (A00) edge node[below]{$p$} (A10);
  \end{tikzpicture}
  \qquad %
  \vcenter{\hbox{
    \begin{tikzpicture}[->, scale=0.75]
      \draw (0,0) -- node [left] {$j$} (0,1);
      \draw (0,0) -- node [below] {$i$} (1,0);
    \end{tikzpicture}}}
\end{equation*}
where, for instance, the right-hand side of the left square is
computed as
\[
p \subst{i}{i \land j} (i1) \= p \subst{i}{1 \land j} \= p
\subst{i}{j}
\]
and the bottom and left-hand sides are degenerate.

\subsection{Examples}\label{subsec:examples}

Representing equalities using path types allows novel definitions of
many standard operations on identity types that are usually proved by
identity elimination. For instance, the fact that the images of two
equal elements are equal can be defined as:
\begin{mathpar}
  \inferrule {\Gamma \der a : A \\ \Gamma \der b : A \\
              \Gamma \der f : A \rightarrow B \\
              \Gamma \der p : \Path~A~a~b}
             {\Gamma \der \pabs{i} f \; (p \; i) :
              \Path~B~(f \; a)~(f \; b)} %
\end{mathpar}
This operation satisfies some judgmental equalities that do not hold
judgmentally when the identity type is defined as an inductive family
(see Section 7.2 of~\cite{BCH} for details).

We can also define new operations, for instance, function
extensionality for path types can be proved as:
\begin{mathpar}
  \inferrule {\Gamma \der f : (x : A) \rightarrow B \\
              \Gamma \der g : (x : A) \rightarrow B \\
              \Gamma \der p : (x : A) \rightarrow \Path~B~(f \; x)~(g \; x)}
             {\Gamma \der \pabs{i} \lambda x : A. \; p \; x \; i :
              \Path~((x : A) \rightarrow B)~f~g} %
\end{mathpar}
To see that this is correct we check that the term has the correct
faces, for instance:
\[
(\pabs{i} \lambda x : A. \; p \; x \; i) \; 0 \=
\lambda x : A. \; p \; x \; 0 \=
\lambda x : A. \; f \; x \=
f
\]

We can also justify the fact that singletons are contractible, that
is, that any element in $(x : A) \times (\Path~A~a~x)$ is equal to
$(a,\refl{a})$:
\begin{mathpar}
  \inferrule {\Gamma \der p : \Path~A~a~b}
             {\Gamma \der \pabs{i}(p \; i, \pabs{j} p \; (i \land j)) :
              \Path~((x : A) \times (\Path~A~a~x))~(a,\refl{a})~(b,p)} %
\end{mathpar}

As in the previous work~\cite{BCH,simonlic} we need to add {\em
  composition operations}, defined by induction on the type, in order
to justify the elimination principle for paths.

\section{Systems, composition, and transport}\label{sec:compositions}

In this section we define the operation of context \emph{restriction}
which will allow us to describe new geometrical shapes corresponding
to ``sub-polyhedra'' of a cube. Using this we can define the
composition operation. From this operation we will also be able to
define the transport operation and the elimination principle for
$\Path$ types.

\subsection{The face lattice}

The {\em face lattice}, $\FF$, is the distributive lattice generated
by symbols $(i = 0)$ and $(i = 1)$ with the relation $(i = 0) \wedge
(i = 1) = 0_\FF$. The elements of the face lattice, called \emph{face
  formulas}, can be described by the grammar
\[
\begin{array}{lcl}
  \varphi,\psi & ::= & 0_\FF \| 1_\FF \| (i = 0) \| (i = 1)
                    \| \varphi \land \psi \| \varphi \lor \psi \\
\end{array}
\]
There is a canonical lattice map $\II \rightarrow \FF$ sending $i$ to
$(i=1)$ and $1-i$ to $(i=0)$. We write $(r=1)$ for the image of $r :
\II$ in $\FF$ and we write $(r=0)$ for $(1-r = 1)$. We have $(r=1)
\wedge (r=0) = 0_\FF$ and we define the lattice map $\FF \rightarrow
\FF,~\psi\longmapsto \psi \subst{i}{r}$ sending $(i=1)$ to $(r=1)$ and
$(i=0)$ to $(r=0)$.

\medskip

Any element of $\FF$ is the join of the irreducible elements below
it. An irreducible element of this lattice is a \emph{face}, \ie{}, a
conjunction of elements of the form $(i = 0)$ and $(j = 1)$. This
provides a disjunctive normal form for face formulas, and it follows
from this that the equality on $\FF$ is decidable.

\medskip

Geometrically, the face formulas describe ``sub-polyhedra'' of a
cube. For instance, the element $(i=0) \vee (j=1)$ can be seen as the
union of two faces of the square in directions $j$ and $i$. If $I$ is
a finite set of names, we define the {\em boundary} of $I$ as the
element $\partial_I$ of $\FF$ which is the disjunction of all $(i=0)
\vee (i=1)$ for $i$ in $I$. It is the greatest element depending at
most on elements in $I$ which is $< 1_\FF$.

We write $\Gamma \der \psi : \FF$ to mean that $\psi$ is a face
formula using only the names declared in $\Gamma$. We introduce then
the new {\em restriction} operation on contexts:
\[
\begin{array}{lclr}
  \Gamma,\Delta & ::= & \dots \| \Gamma, \varphi & \\
\end{array}
\]
together with the rule:
\begin{mathpar}
  \inferrule {\Gamma \der \varphi : \FF} {\Gamma, \varphi \der}
\end{mathpar}

This allows us to describe new geometrical shapes: as we have seen
above, a type in a context $\Gamma = i : \II, j : \II$ can be thought
of as a square, and a type in the restricted context $\Gamma, \varphi$
will then represent a compatible union of faces of this square. This
can be illustrated by:
%
%
\begin{center}
\setlength{\tabcolsep}{5mm} 
\begin{tabular}{|c|c|}
  \hline
  $i : \II, (i=0)\vee (i=1) \der A$ &
  \begin{tikzpicture}[baseline=-2,scale=2]
    \node (A0) at (0,0) {$A(i0) \; \bullet$};%
    \node (A1) at (1,0) {$A(i1) \; \bullet$};%
  \end{tikzpicture}
  \\ \hline
  $i : \II, j : \II, (i=0) \vee (j=1) \der A$ &
  \begin{tikzpicture}[baseline=30,xscale=3,yscale=1.8]
    \node (A01) at (0,1) {$A(i0)(j1)$};
    \node (A11) at (1,1) {$A(i1)(j1)$};
    \node (A00) at (0,0) {$A(i0)(j0)$};
    \path[->,font=\scriptsize,>=angle 90]
      (A01) edge node[above]{$A(j1)$} (A11)
      (A00) edge node[left]{$A(i0)$} (A01);
  \end{tikzpicture}
  \\ \hline
  $i : \II, j : \II, (i=0 )\vee (i=1) \vee (j=0) \der A$ &
  \begin{tikzpicture}[baseline=30,xscale=3,yscale=1.8]
    \node (A01) at (0,1) {$A(i0)(j1)$};
    \node (A11) at (1,1) {$A(i1)(j1)$};
    \node (A00) at (0,0) {$A(i0)(j0)$};
    \node (A10) at (1,0) {$A(i1)(j0)$};
    \path[->,font=\scriptsize,>=angle 90]
      (A00) edge node[left]{$A(i0)$} (A01)
      (A10) edge node[right]{$A(i1)$} (A11)
      (A00) edge node[below]{$A(j0)$} (A10);
  \end{tikzpicture}
  \\ \hline
\end{tabular}
\end{center}
There is a canonical map from the lattice $\FF$ to the congruence
lattice of $\II$, which is distributive~\cite{Balbes}, sending $(i=1)$ to the congruence
identifying $i$ with $1$ (and $1-i$ with $0$) and sending $(i=0)$ to
the congruence identifying $i$ with $0$ (and $1-i$ with $1$). In this
way, any element $\psi$ of $\FF$ defines a congruence
$r = s \mod{\psi}$ on $\II$.

\medskip

This congruence can be described as a substitution if $\psi$ is
irreducible; for instance, if $\psi$ is $(i=0) \wedge (j=1)$ then
$r = s \mod{\psi}$ is equivalent to $r(i0)(j1) = s(i0)(j1)$. The
congruence associated to $\psi = \varphi_0 \vee \varphi_1$ is the meet
of the congruences associated to $\varphi_0$ and $\varphi_1$
respectively, so that we have, \eg{}, $i = 1-j \mod{\psi}$ if
$\varphi_0 = (i=0) \wedge (j=1)$ and $\varphi_1 = (i=1) \wedge (j=0)$.

\medskip

To any context $\Gamma$ we can associate recursively a congruence on
$\II$, the congruence on $\Gamma,\psi$ being the join of the
congruence defined by $\Gamma$ and the congruence defined by $\psi$.
The congruence defined by $()$ is equality in $\II$, and an extension
$x:A$ or $i:\II$ does not change the congruence.  The judgment $\Gamma
\der r = s : \II$ then means that $r = s \mod{\Gamma}$, $\Gamma \der r
: \II$, and $\Gamma \der s : \II$.

In the case where $\Gamma$ does not use the restriction operation, this judgment
means $r = s$ in $\II$. If $i$ is declared in $\Gamma$, then
$\Gamma,(i=0) \der r = s : \II$ is equivalent to
$\Gamma \der r(i0) = s(i0) : \II$. Similarly any context $\Gamma$
defines a congruence on $\FF$ with
$\Gamma,\psi \der \varphi_0 = \varphi_1 : \FF$ being equivalent to
$\Gamma \der \psi \wedge \varphi_0 = \psi \wedge \varphi_1 : \FF$.

\medskip

As explained above, the elements of $\II$ can be seen as formal
representations of elements in the interval $[0,1]$. The elements of
$\FF$ can then be seen as formulas on elements of $[0,1]$.  We have a
simple form of {\em quantifier elimination} on $\FF$: given a
name~$i$, we define $\forall i \colon \FF \to \FF$ as the lattice
morphism sending $(i=0)$ and $(i=1)$ to $0_\FF$, and being the
identity on all the other generators.  If $\psi$ is independent of
$i$, we have $\psi \leqslant \varphi$ if, and only if, $\psi \leqslant
\forall i. \varphi$.  For example, if $\varphi$ is $(i=0) \vee
((i=1)\wedge (j=0))\vee (j=1)$, then $\forall i.\varphi$ is $(j=1)$.
This operation will play a crucial role in
\sect{sec:composition-glueing} for the definition of composition of
glueing.

Since $\FF$ is not a Boolean algebra, we don't have in general
$\varphi = (\varphi \wedge (i=0)) \vee (\varphi \wedge (i=1))$, but we
always have the following decomposition:

\begin{lemma}\label{decomp}
  For any element $\varphi$ of $\FF$ and any name $i$ we have
  \[
  \varphi = (\forall i. \varphi) \vee (\varphi \wedge (i=0)) \vee (\varphi \wedge  (i=1))
  \]
  We also have $\varphi\wedge (i=0)\leqslant \varphi(i0)$
and  $\varphi\wedge (i=1)\leqslant \varphi(i1)$.
\end{lemma}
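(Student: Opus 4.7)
The plan is to work entirely with the disjunctive normal form of $\varphi$. As noted just before the lemma, every element of $\FF$ is the join of the faces (irreducible elements) below it, so write $\varphi = \bigvee_k f_k$, where each $f_k$ is a conjunction of generators of the form $(j=0)$ and $(k=1)$. Recall that $f_k = 0_\FF$ whenever $f_k$ contains both $(i=0)$ and $(i=1)$, so we may discard those summands.

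For the main equality, I would prove both inequalities. For ``$\geqslant$'', each disjunct on the right is $\leqslant \varphi$: the two summands $\varphi\wedge (i=0)$ and $\varphi\wedge (i=1)$ trivially, and $\forall i.\varphi$ because applying the lattice morphism $\forall i$ (which sends $(i=0)$ and $(i=1)$ to $0_\FF$ and is the identity on the other generators) to the DNF of $\varphi$ annihilates precisely those $f_k$ that mention $i$, so $\forall i.\varphi = \bigvee\{f_k \mid i\notin f_k\} \leqslant \varphi$. For ``$\leqslant$'' I argue face by face. Given any $f_k$ in the DNF of $\varphi$: if $f_k$ does not mention $i$ then $f_k$ appears in the DNF of $\forall i.\varphi$, so $f_k \leqslant \forall i.\varphi$; if $f_k$ contains $(i=0)$ then $f_k \leqslant (i=0)$ and $f_k\leqslant \varphi$, so $f_k\leqslant \varphi\wedge (i=0)$; the case $(i=1)$ is symmetric; and the mixed case has already been discarded. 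Taking the join over $k$ yields $\varphi \leqslant$ RHS.

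For the two auxiliary inequalities, the key observation is the identity
\[
\psi \wedge (i=0) \= \psi(i0) \wedge (i=0) \qquad \text{for every } \psi\in\FF,
\]
which I would verify by checking that both sides define the same lattice morphism $\FF\to\FF$ on the generators: on $(i=0)$ both give $(i=0)$, on $(i=1)$ both give $0_\FF$, and on any other generator $(j=\epsilon)$ both give $(j=\epsilon)\wedge(i=0)$. Applying this to $\varphi$ gives $\varphi\wedge (i=0) = \varphi(i0)\wedge (i=0) \leqslant \varphi(i0)$, and the $(i=1)$ case is obtained symmetrically.

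There is no real obstacle here; the only thing to be careful about is the direction of the substitution convention, namely that $\psi(i0)$ is the lattice morphism sending $(i=0)$ to $1_\FF$ and $(i=1)$ to $0_\FF$ (as spelled out earlier in the section), so that in the DNF of $\varphi(i0)$ a face containing $(i=0)$ keeps its remaining literals while a face containing $(i=1)$ disappears — this is precisely what makes the case analysis and the ``generators-of-$\FF$'' verification go through cleanly.
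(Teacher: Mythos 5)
Your proof is correct. The paper states Lemma~\ref{decomp} without proof, and your argument is precisely the verification it implicitly relies on: the case analysis on the disjunctive normal form into irreducible faces (exactly the decomposition recalled just before the lemma) handles the displayed equality, and your identity $\psi \wedge (i=0) = \psi(i0) \wedge (i=0)$, checked on generators of $\FF$ (noting that $(i=1)$ is sent to $0_\FF$ and $(i=0)$ to $1_\FF$ by the substitution $(i0)$), cleanly yields the two auxiliary inequalities. No gaps.
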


\subsection{Syntax and inference rules for systems}

Systems allow to introduce ``sub-polyhedra'' as compatible unions of
cubes.  The extension to the syntax of dependent type theory with path
types is:
\[
\begin{array}{lcll}
  t,u,A,B       & ::= & \dots & \\
                & \|  &  [ \; \varphi_1 \hmapsto t_1, \dots, \varphi_n \hmapsto t_n \; ]
                      & \hspace{2cm} \text{Systems}
\end{array}
\]
We allow $n = 0$ and get the empty system $[ \; ]$. As explained
above, a context now corresponds in general to the union of sub-faces
of a cube. In~\fig{systems:rules} we provide operations for combining
compatible systems of types and elements, the side condition for these
rules is that $\Gamma \der \varphi_1 \vee \dots \vee \varphi_n = 1_\FF
: \FF$. This condition requires $\Gamma$ to be sufficiently
restricted: for example $\Delta, (i = 0) \vee (i = 1) \der (i = 0)
\vee (i = 1) = 1_\FF$.  The first rule introduces systems of types,
each defined on one $\varphi_l$ and requiring the types to agree
whenever they overlap; the second rule is the analogous rule for
terms.  The last two rules make sure that systems have the correct
faces.  The third inference rule says that that any judgment which is
valid locally at each $\varphi_l$ is valid; note that in particular
$n=0$ is allowed (then the side condition becomes $\Gamma \der 0_\FF =
1_\FF : \FF$).
\begin{figure}[h]
\begin{mdframed}
\begin{mathpar}
  \inferrule {\Gamma, \varphi_1 \der A_1 \quad \cdots \quad \Gamma, \varphi_n \der A_n \\
              \Gamma, \varphi_i \land \varphi_j \der A_i = A_j \quad
              (1 \leqslant i, j \leqslant n)}
             {\Gamma \der
             [ \; \varphi_1 \hmapsto A_1, \dots, \varphi_n \hmapsto A_n \; ]} %
  \and %
  \inferrule {\Gamma \der A \\ \Gamma, \varphi_1 \der t_1 : A \quad \cdots \quad
              \Gamma, \varphi_n \der t_n : A \\
              \Gamma, \varphi_i \land \varphi_j \der t_i = t_j : A ~~
              (1 \leqslant i, j \leqslant n)}
             {\Gamma \der
              [ \; \varphi_1 \hmapsto t_1, \dots, \varphi_n \hmapsto t_n \; ] : A} %
  \and %
  \inferrule {\Gamma, \varphi_1 \der \mathsf{J} \quad \cdots \quad
              \Gamma, \varphi_n \der \mathsf{J}}
             {\Gamma \der \mathsf{J}} %
  \and %
  \inferrule {\Gamma \der [ \; \varphi_1 \hmapsto A_1, \dots,
                 \varphi_n \hmapsto A_n \; ]\\
              \Gamma \der \varphi_i = 1_\FF : \FF}
             {\Gamma \der
             [ \; \varphi_1 \hmapsto A_1, \dots, \varphi_n \hmapsto A_n \; ] = A_i} %
  \and %
  \inferrule {\Gamma\der [ \; \varphi_1 \hmapsto t_1, \dots, \varphi_n \hmapsto t_n \; ] : A \\
              \Gamma \der \varphi_i = 1_\FF : \FF}
             {\Gamma \der
             [ \; \varphi_1 \hmapsto t_1, \dots, \varphi_n \hmapsto t_n \; ] = t_i : A} %
\end{mathpar}
\end{mdframed}
\caption{Inference rules for systems with side condition $\Gamma \der
  \varphi_1 \vee \dots \vee \varphi_n = 1_\FF :
  \FF$}\label{systems:rules}
\end{figure}

Note that when $n = 0$ the second of the above rules should be read
as: if $\Gamma \der 0_\FF = 1_\FF : \FF$ and $\Gamma \der A$, then
$\Gamma \der [ \; ] : A$.

We extend the definition of the substitution judgment by $\Delta \der
\sigma : \Gamma,\varphi$ if $\Delta \der \sigma : \Gamma$, $\Gamma
\der \varphi : \FF$, and $\Delta \der \varphi \sigma = 1_\FF : \FF$.

\medskip

If $\Gamma, \varphi \der u : A$, then $\Gamma \der a : A[\varphi
\mapsto u]$ is an abbreviation for $\Gamma \der a:A$ and $\Gamma,
\varphi \der a = u : A$.  In this case, we see this element $a$ as a
witness that the partial element $u$, defined on the ``extent''
$\varphi$ (using the terminology from \cite{Fourman-Scott}), is {\em
  extensible}.  More generally, we write $\Gamma \der a : A[\varphi_1
\mapsto u_1, \dots, \varphi_k \mapsto u_k]$ for $\Gamma \der a : A$
and $\Gamma,\varphi_l \der a = u_l : A$ for $l =
1,\dots,k$.

For instance, if $\Gamma, i : \II \der A$ and $\Gamma, i : \II,
\varphi \der u : A$ where $\varphi = (i=0) \vee (i=1)$ then the
element $u$ is determined by two elements $\Gamma \der a_0 : A(i0)$
and $\Gamma \der a_1 : A(i1)$ and an element $\Gamma,i:\II \der a :
A[(i=0) \mapsto a_0,(i=1) \mapsto a_1]$ gives a path connecting $a_0$
and $a_1$.

\begin{lemma}\label{admissible} The following rules are admissible:\footnote{The
    inference rules with double line are each a pair of rules, because
    they can be used in both directions.}
\begin{mathpar}
  \inferrule {\Gamma \der \varphi \leqslant \psi : \FF \\
              \Gamma, \psi \der \mathsf{J}}
             {\Gamma, \varphi \der \mathsf{J}} %
  \and %
  \mprset{fraction={===}}
  \inferrule {\Gamma, 1_\FF \der \mathsf{J}} {\Gamma \der \mathsf{J}} %
  \and %
  \mprset{fraction={===}}
  \inferrule {\Gamma, \varphi, \psi \der \mathsf{J}}
             {\Gamma, \varphi \land \psi \der \mathsf{J}} %
\end{mathpar}
Furthermore, if $\varphi$ is independent of $i$, the following rules
are admissible
\begin{mathpar}
  \mprset{fraction={===}}
  \inferrule {\Gamma,i:\II,\varphi \der  \mathsf{J}}
             {\Gamma, \varphi,i:\II \der \mathsf{J}} %
\end{mathpar}
and it follows that we have in general:
\begin{mathpar}
  \inferrule {\Gamma,i:\II,\varphi \der  \mathsf{J}}
             {\Gamma, \forall i.\varphi,i:\II \der \mathsf{J}} %
\end{mathpar}
\end{lemma}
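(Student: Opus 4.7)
The plan is to prove the four admissibility rules by induction on the derivation of their premise judgment, guided by the observation that a context $\Gamma$ enters the inference rules only through its variable declarations and the congruences it induces on $\II$ and $\FF$. Once this principle is isolated, each rule reduces to comparing these congruences on the two contexts involved.

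For rule 1, the hypothesis $\Gamma \der \varphi \leqslant \psi$ ensures that the congruences on $\II$ and $\FF$ defined by $\Gamma, \varphi$ refine those defined by $\Gamma, \psi$: any $\II$- or $\FF$-equality that holds modulo the congruence of $\Gamma, \psi$ also holds modulo the congruence of $\Gamma, \varphi$. One then walks through the derivation of $\Gamma, \psi \der \mathsf{J}$ and replays each inference rule in the finer context $\Gamma, \varphi$. The only delicate step is checking that side conditions such as $\varphi_1 \vee \cdots \vee \varphi_n = 1_\FF$ in the system rules propagate to the finer congruence, which they do because passing to a refined congruence can only create more $\FF$-equalities. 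Rules 2 and 3 are then corollaries of the same principle: the congruence associated to $1_\FF$ is the trivial one, so $\Gamma$ and $\Gamma, 1_\FF$ induce the same congruence; and by the ``congruence of a join is the meet of the congruences'' principle recalled in the paper, the congruence of $\Gamma, \varphi, \psi$, being the join of those of $\Gamma$, $\varphi$, and $\psi$, coincides with that of $\Gamma, \varphi \wedge \psi$. Rule 4 follows because independence of $\varphi$ from $i$ makes adjoining $\varphi$ and declaring the name $i$ commute as context-forming operations: neither affects the other's congruence or declarations.

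Finally, the last displayed rule is obtained by combining rules 1 and 4. In the context $\Gamma, i:\II$ we have $\forall i.\varphi \leqslant \varphi$ in $\FF$ by the universal property of the quantifier $\forall i$, so rule 1 takes $\Gamma, i:\II, \varphi \der \mathsf{J}$ to $\Gamma, i:\II, \forall i.\varphi \der \mathsf{J}$; since $\forall i.\varphi$ is by construction independent of $i$, rule 4 then reorders the context into $\Gamma, \forall i.\varphi, i:\II \der \mathsf{J}$. The main obstacle will be the bookkeeping in rule 1: one must patiently verify that each inference rule whose side conditions refer to $\II$- or $\FF$-equalities (notably the $\II$-equality rules and the system formation and equality rules) replays correctly in the refined context.
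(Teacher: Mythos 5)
The paper states this lemma without giving a proof, so there is no official argument to compare against; judged on its own, your proposal is essentially correct. Your key principle -- that the inference rules consult a context only through its declared variables/names (with scoping) and the congruences it induces on $\II$ and $\FF$ -- is the right invariant, and your reductions are sound: for rule~1, $\varphi\leqslant\psi$ makes every $\II$- or $\FF$-equality valid in $\Gamma,\psi$ valid in $\Gamma,\varphi$; rules~2 and~3 follow because $1_\FF$ contributes the trivial congruence and because the congruence attached to $\varphi\wedge\psi$ is the join of those attached to $\varphi$ and $\psi$ (note your quoted slogan ``congruence of a join is the meet'' is the dual of what you actually use, but the fact you use is true, the map $\FF\to\mathrm{Con}(\II)$ being order-reversing); and the final rule is exactly the intended combination of rules~1 and~4, using $\forall i.\varphi\leqslant\varphi$, which indeed follows from the stated universal property by instantiating $\psi:=\forall i.\varphi$. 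Two remarks. First, for the induction behind rule~1 you must strengthen the statement so that the weakened restriction may sit in the middle of the context (derivations of $\Gamma,\psi\der\mathsf{J}$ pass through extensions $\Gamma,\psi,\Delta$, including further restrictions introduced by the system rules), i.e.\ prove ``$\Gamma,\psi,\Delta\der\mathsf{J}$ and $\varphi\leqslant\psi$ imply $\Gamma,\varphi,\Delta\der\mathsf{J}$''; this is the bookkeeping you flag, and it is routine but necessary. Second, a shorter route, arguably the one the authors intend, is to invoke Lemma~1 (admissibility of substitution, asserted for all extensions of the theory) with identity substitutions: the paper extends the substitution judgment by $\Delta\der\sigma:\Gamma,\varphi$ iff $\Delta\der\sigma:\Gamma$ and $\Delta\der\varphi\sigma=1_\FF:\FF$, and one checks directly that the identity qualifies in each case, e.g.\ $\Gamma,\varphi\der\id:\Gamma,\psi$ when $\varphi\leqslant\psi$, $\Gamma\der\id:\Gamma,1_\FF$, $\Gamma,\varphi\wedge\psi\der\id:\Gamma,\varphi,\psi$, and the evident exchange substitutions for rule~4. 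Your direct induction in effect re-proves a special case of that substitution lemma, which is fine but duplicates work the paper has already packaged.
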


\subsection{Composition operation}

The syntax of compositions is given by:
\[
\begin{array}{lcll}
  t,u,A,B       & ::= & \dots & \\
                & \|  & \comp^i~A~[\varphi\mapsto u]~a_0
                      & \hspace{2cm} \text{Compositions} \\
\end{array}
\]
where $u$ is a system on the extent $\varphi$.

The composition operation expresses that being extensible is preserved
along paths: if a partial path is extensible at $0$, then it is
extensible at $1$.
\begin{mathpar}
  \inferrule {\Gamma \der \varphi : \FF \\ \Gamma, i : \II \der A \\
              \Gamma, \varphi, i : \II \der u : A \\
              \Gamma \der a_0 : A(i0)[ \varphi \mapsto u(i0) ]}
             {\Gamma \der \comp^i~A~[ \varphi \mapsto u ]~a_0 :
              A(i1)[ \varphi \mapsto u(i1) ]}
\end{mathpar}
Note that $\comp^i$ binds $i$ in $A$ and $u$ and that we have in
particular the following equality judgments for systems:
\[
\Gamma \der \comp^i~A~[1_\FF \mapsto u]~a_0 = u(i1) : A(i1)
\]

If we have a substitution $\Delta \der \sigma : \Gamma$, then
\[
(\comp^i~A~[ \varphi \mapsto u ]~a_0)\sigma = %
\comp^j~A(\sigma,i/j)~[\varphi \sigma\mapsto u(\sigma,i/j)]~a_0 \sigma
\]
where $j$ is fresh for $\Delta$, which corresponds semantically to the
{\em uniformity} \cite{BCH,simonlic} of the composition operation.

We use the abbreviation
$[\varphi_1 \mapsto u_1, \dots, \varphi_n \mapsto u_n]$ for
$\left[\bigvee_l\varphi_l \mapsto [\varphi_1 \hmapsto u_1, \dots, \varphi_n
\hmapsto u_n]\right]$
and in particular we write $[]$ for $[ 0_\FF \mapsto [ \; ] ]$.

\begin{example}
  With composition we can justify transitivity of path types:
\begin{mathpar}
  \inferrule {\Gamma \der p : \Path~A~a~b \\ \Gamma \der q : \Path~A~b~c}
             {\Gamma \der \pabs{i} \comp^j~A~[ (i = 0) \mapsto a, (i =
               1) \mapsto q \; j ]~(p \; i) : \Path~A~a~c}
\end{mathpar}
This composition can be visualized as the dashed arrow in the square:
\begin{mathpar}
  \begin{tikzpicture}[baseline=30,xscale=2,yscale=2]
    \node (A01) at (0,1) {$a$};
    \node (A11) at (1,1) {$c$};
    \node (A00) at (0,0) {$a$};
    \node (A10) at (1,0) {$b$};
    \path[->,font=\scriptsize,>=angle 90]
      (A00) edge node[left]{$a$} (A01)
      (A10) edge node[right]{$q \; j$} (A11)
      (A00) edge node[below]{$p \; i$} (A10);
    \path[->,font=\scriptsize,>=angle 90,dashed]
      (A01) edge node[above]{$$} (A11);
  \end{tikzpicture}
  \and %
  \vcenter{\hbox{\begin{tikzpicture}[->, scale=0.75]
    \draw (0,0) -- node [left] {$j$} (0,1);
    \draw (0,0) -- node [below] {$i$} (1,0);
  \end{tikzpicture}}}
\end{mathpar}
\end{example}

\subsection{Kan filling operation}
\label{sec:kan-filling}

As we have connections we also get Kan filling operations from
compositions:
\[
\Gamma, i : \II \der \Comp^i~A~[\varphi\mapsto u]~a_0 =
             \comp^j~A\subst{i}{i\wedge j}~[\varphi \mapsto
             u\subst{i}{i \wedge j}, (i=0) \mapsto a_0]~a_0 : A
\]
where $j$ is fresh for $\Gamma$. The element
$\Gamma, i : \II \der v = \Comp^i~A~[ \varphi \mapsto u ]~a_0 : A$
satisfies:
\begin{align*}
  \Gamma &\der v(i0) = a_0 : A(i0) %
  & %
  \Gamma &\der v(i1) = \comp^i~A~[\varphi \mapsto u]~a_0 : A(i1) %
  & %
  \Gamma, \varphi, i : \II &\der v = u : A %
\end{align*}
This means that we can not only compute the lid of an open box but
also its filling. If $\varphi$ is the boundary formula on the names
declared in $\Gamma$, we recover the Kan operation for cubical
sets~\cite{Kan55}.

\subsection{Equality judgments for composition}\label{sec:composition}

The equality judgments for $\comp^i~C~[ \varphi \mapsto u ]~a_0$ are
defined by cases on the type $C$ which depends on $i$, i.e., $\Gamma,
i : \II \der C$. The right hand side of the definitions are all equal
to $u(i1)$ on the extent $\varphi$ by the typing rule for
compositions. There are four cases to consider:

\subsubsection*{Product types, $C = (x : A) \rightarrow B$}

Given $\Gamma, \varphi,i : \II \der \mu : C$ and $\Gamma \der
\lambda_0 : C(i0)[ \varphi \mapsto \mu(i0) ]$ the composition will be
of type $C(i1)$.  For $\Gamma \der u_1 : A(i1)$, we first let:
\begin{align*}
  w &= \Comp^i~A\subst{i}{1-i}~[]~u_1 %
  && (\text{in context }\Gamma, i : \II\text{ and of type
  }A\subst{i}{1-i}) \\
  v &= w\subst{i}{1-i} %
  && (\text{in context }\Gamma, i : \II\text{ and of type }A)
\end{align*}
Using this we define the equality judgment:
\[
  \Gamma \der (\comp^i~C~[ \varphi \mapsto \mu ]~\lambda_0)~u_1 = %
  \comp^i~B \subst{x}{v}~[ \varphi \mapsto \mu \; v]~(\lambda_0 \;
  v(i0)) : B\subst x v (i1)
\]

\subsubsection*{Sum types, $C = (x : A) \times B$}

Given $\Gamma, \varphi, i : \II \der w : C$ and $\Gamma \der w_0 :
C(i0)[ \varphi \mapsto w(i0) ]$ we let:
\begin{align*}
  a &= \Comp^i~A~[ \varphi \mapsto w.1]~w_0.1 %
  && (\text{in context }\Gamma, i : \II\text{ and of type }A)\\
  c_1 &= \comp^i~A~[ \varphi \mapsto w.1]~w_0.1 %
  && (\text{in context }\Gamma\text{ and of type }A(i1))\\
  c_2 &= \comp^i~B\subst{x}{a}~[\varphi \mapsto w.2]~w_0.2%
  && (\text{in context }\Gamma\text{ and of type }B\subst{x}{a}(i1))
\end{align*}
From which we define:
\[
\Gamma \der \comp^i~C~[ \varphi \mapsto w]~w_0 = (c_1,c_2) : C(i1)
\]

\subsubsection*{Natural numbers, $C = \NN$}

In this we define $\comp^i~C~[\varphi\mapsto n]~n_0$ by recursion:
\begin{align*}
  &\Gamma \der \comp^i~C~[ \varphi \mapsto \0 ]~\0 = \0 : C
  \\
  &\Gamma \der \comp^i~C~[ \varphi \mapsto \suc{n} ]~(\suc{n_0}) =
  \suc{(\comp^i~C~[ \varphi \mapsto n ]~n_0)} : C
\end{align*}

\subsubsection*{Path types, $C = \Path~A~u~v$}

Given $\Gamma, \varphi , i : \II\der p : C$ and $\Gamma \der p_0 :
C(i0)[ \varphi \mapsto p(i0) ]$ we define:
\[
\Gamma \der \comp^i~C~[ \varphi \mapsto p]~p_0 = \pabs{j} \comp^i~A~[
\varphi \mapsto p \; j, (j=0) \mapsto u, (j=1) \mapsto v ]~(p_0 \; j)
: C(i1)
\]



\subsection{Transport}

Composition for $\varphi = 0_\FF$ corresponds to transport:
\[
\Gamma \der \transport^i~A~a = \comp^i~A~[]~a : A(i1)
\]

Together with the fact that singletons are contractible,
from~\sect{subsec:examples}, we get the elimination principle for
$\Path$ types in the same manner as explained for identity
types in Section 7.2 of~\cite{BCH}.

\section{Derived notions and operations}\label{sec:contr}

This section defines various notions and operations that will be used
for defining compositions for the $\glue$ operation in the next
section. This operation will then be used to define the composition
operation for the universe and to prove the univalence axiom.

\subsection{Contractible types}
\label{sec:contractible-types}

We define
$\isContr~A = (x : A) \times ((y : A) \rightarrow \Path~A~x~y)$.
A proof of $\isContr~A$ witnesses the fact that $A$ is {\em contractible}.

Given $\Gamma \der p : \isContr~A$ and $\Gamma, \varphi \der u : A$ we
define the operation\footnote{This expresses that the restriction map
  $\Gamma,\varphi\rightarrow \Gamma$ has the left lifting property
  w.r.t.\ any ``trivial fibration'', \ie{}, contractible extensions
  $\Gamma,x:A\rightarrow\Gamma$.  The restriction maps
  $\Gamma,\varphi\rightarrow\Gamma$ thus represent ``cofibrations''
  while the maps $\Gamma,x:A\rightarrow\Gamma$ represent
  ``fibrations''.}
\[
\Gamma \der \ext~p~[\varphi \mapsto u] = \comp^i~A~[\varphi \mapsto
p.2~u~i]~p.1 : A[\varphi \mapsto u]
\]

Conversely, we can state the following characterization of
contractible types:

\begin{lemma}\label{contrinv}
  Let $\Gamma \der A$ and assume that we have one operation
\begin{mathpar}
\inferrule {\Gamma, \varphi \vdash u : A}
           {\Gamma \vdash \ext~[\varphi \mapsto u] : A[\varphi\mapsto u]} %
\end{mathpar}
then we can find an element in $\isContr~A$.
\end{lemma}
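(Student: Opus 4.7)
The plan is to construct both components of $\isContr~A = (x:A) \times ((y:A) \rightarrow \Path~A~x~y)$ by direct applications of the hypothesised $\ext$ operation to appropriately chosen partial elements. The guiding intuition is that a partial element on the extent $(i=0)\vee(i=1)$ is precisely the boundary data of a path between two points, so $\ext$ turns such boundary data into an actual path; meanwhile a partial element on the empty extent $0_\FF$ carries no data at all, so $\ext$ produces an element of $A$ ``out of nothing'', to serve as the center of contraction.

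For the center, I set $c := \ext~[0_\FF \mapsto [\;]]$, where $[\;]$ is the empty system: by the systems rule with $n = 0$ we have $\Gamma, 0_\FF \vdash [\;] : A$ (the side condition $\Gamma, 0_\FF \vdash 0_\FF = 1_\FF : \FF$ holds by construction), so the hypothesised operation yields $c : A[0_\FF \mapsto [\;]]$, which is just $c : A$.

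For the path component, I fix $\Gamma \vdash y : A$ and work in $\Gamma, i:\II$. Setting $\varphi = (i=0)\vee(i=1)$, in the restricted context $\Gamma, i:\II, \varphi$ the system
\[
u \;=\; [\,(i=0) \hmapsto c,\; (i=1) \hmapsto y\,]
\]
is well-formed: its side condition $\varphi = 1_\FF$ holds in this restricted context, and the compatibility clause reduces to equality on $(i=0) \wedge (i=1) = 0_\FF$, which is trivial. Applying the hypothesis then gives $a := \ext~[\varphi \mapsto u] : A[\varphi \mapsto u]$ in $\Gamma, i:\II$, and by the systems equality rule we read off $a(i0) = c$ and $a(i1) = y$. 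Hence $\pabs{i} a$ is a path from $c$ to $y$, and
\[
\bigl(\,c,\; \lambda y : A.\; \pabs{i}\, \ext~[\varphi \mapsto u]\,\bigr) \;:\; \isContr~A.
\]

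The main thing to verify is the bookkeeping of restrictions: that the empty system is legitimate in $\Gamma, 0_\FF$, and that the boundary system is legitimate in $\Gamma, i:\II, \varphi$. Both follow immediately from the inference rules for systems and from the description of the congruences associated to restricted contexts given in Section~\ref{sec:compositions}. I do not expect a deeper obstacle here, as the construction is essentially the observation that the hypothesised operation restricted to the extents $0_\FF$ and $(i=0)\vee(i=1)$ already encodes, respectively, a center and a contraction onto it.
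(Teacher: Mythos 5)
Your proposal is correct and takes essentially the same route as the paper: the center is $\ext$ applied to the empty system, and the contraction is obtained by applying $\ext$ to the boundary system $[(i=0)\mapsto c,\,(i=1)\mapsto y]$ along a fresh name $i:\II$ and abstracting. (Like the paper, you tacitly use the hypothesised operation in the extended context $\Gamma, i:\II$, which is the intended uniform reading of the assumption.)
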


\begin{proof}
  We define $x = \ext~[] : A$ and prove that any element $y : A$ is
  path equal to $x$. For this, we introduce a fresh name $i : \II$ and
  define $\varphi = (i=0)\vee (i=1)$ and
  $u = [(i=0) \mapsto x, (i=1) \mapsto y]$. Using this we obtain
  $\Gamma, i : \II \der v = \ext~[\varphi \mapsto u] : A[\varphi \mapsto u]$.
  In this way, we get a path $\pabs{i} {\ext~[\varphi \mapsto u]}$
  connecting $x$ and $y$.
\end{proof}


\subsection{The~$\pres{}$ operation}\label{sec:two-deriv-oper}

The $\pres{}$ operation states that the image of a composition is path
equal to the composition of the respective images, so that any
function {\em preserves} composition, up to path equality.

\begin{lemma}\label{pres}
We have an operation:
\begin{mathpar}
  \inferrule {\Gamma, i : \II \der f : T \rightarrow A \\
              \Gamma \der \varphi : \FF\\
              \Gamma, \varphi, i : \II \der t : T \\
              \Gamma \der t_0 : T(i0)[\varphi \mapsto t(i0)]}
             {\Gamma \der \pres{i}~f~[\varphi \mapsto t]~t_0 :
             (\Path~A(i1)~c_1~c_2)[\varphi \mapsto \pabs{j} (f \; t)(i1)]}

\end{mathpar}
where $c_1 = \comp^i~A~[\varphi\mapsto f \; t]~(f(i0) \; t_0)$ and $c_2 = f(i1) \; (\comp^i~T~[\varphi \mapsto t]~t_0)$.
%
\end{lemma}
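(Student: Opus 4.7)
The plan is to build the path from $c_1$ to $c_2$ by a single Kan composition in $A$ over the direction $i$, where a fresh path-name $j$ mediates between ``$f$ applied to the filler of $t$'' (which yields $c_2$ at $j=1$) and ``the original composite of $f\;t$'' (which yields $c_1$ at $j=0$).

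First I would invoke the Kan filling operation from \sect{sec:kan-filling} to introduce
\[
\bar t \= \Comp^i~T~[\varphi \mapsto t]~t_0,
\]
which lives in $\Gamma, i:\II$ of type $T$, satisfies $\bar t(i0) = t_0$ and $\bar t(i1) = \comp^i~T~[\varphi\mapsto t]~t_0$, and restricts to $t$ on $\varphi$. Applying $f$ pointwise gives a term $f\;\bar t : A$ in $\Gamma, i:\II$ connecting $f(i0)\;t_0$ to $c_2$.

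Then I would take a fresh name $j$ and define
\[
\pres{i}~f~[\varphi \mapsto t]~t_0 \= \pabs{j}\; \comp^i~A~[\,\varphi \mapsto f\;t,\; (j=1) \mapsto f\;\bar t\,]~(f(i0)\;t_0).
\]
The two sides of the composition system are compatible on their overlap $\varphi\wedge(j=1)$ because $\bar t = t$ on $\varphi$, and the base point matches on each extent at $i=0$ using $t_0 = t(i0)$ on $\varphi$ and $\bar t(i0) = t_0$. Checking endpoints: at $j=0$ the face $(j=1)$ collapses to $0_\FF$, so the system reduces to $[\varphi\mapsto f\;t]$ and the composition is exactly $c_1$; at $j=1$ the face $(j=1)$ becomes $1_\FF$, so by the equality judgment $\comp^i~A~[1_\FF\mapsto f\;\bar t]~(-) = (f\;\bar t)(i1) = f(i1)\;\bar t(i1) = c_2$.

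For the extent condition $[\varphi \mapsto \pabs{j}(f\;t)(i1)]$, restricting to $\varphi$ turns $\varphi$ into $1_\FF$ and replaces $\bar t$ by $t$, so the entire system collapses to $f\;t$ and the composition evaluates to $(f\;t)(i1)$ uniformly in $j$. The only step that requires any real work is verifying this compatibility; the endpoint and extent computations are then immediate from the equality rules for $\comp^i$ listed in \sect{sec:composition} together with the defining equations of $\bar t$.
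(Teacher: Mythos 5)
Your construction is essentially the paper's own proof: the paper also takes the filler $v = \Comp^i~T~[\varphi\mapsto t]~t_0$ and defines $\pres{i}~f~[\varphi\mapsto t]~t_0 = \pabs{j}\,\comp^i~A~[\varphi\vee(j=1)\mapsto f\;v]~(f(i0)\;t_0)$, which coincides with your term since $\bar t = t$ on $\varphi$ makes your two branches agree and merge into one system on the extent $\varphi\vee(j=1)$. Your endpoint and extent checks match the intended justification, so the proposal is correct and takes the same route.
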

\begin{proof}
  Let $\Gamma\der a_0 = f(i0) \; t_0 : A(i0)$ and
  $\Gamma,i:\II \der v = \Comp^i~T~[\varphi \mapsto t]~t_0 : T$.
  We take
$\pres{i}~f~[\varphi \mapsto t]~t_0 = \pabs{j} \comp^i~A~[\varphi\vee (j=1) \mapsto f \;v]~a_0.$
\end{proof}

Note that $\pres{i}$ binds $i$ in $f$ and $t$.


\subsection{The~$\eq$ operation}
\label{sec:eq-operation}

We define
$\isEquiv~T~A~f = (y : A) \rightarrow \isContr~((x : T) \times
\Path~A~y~(f \; x))$
and $\Equiv~T~A = (f : T \rightarrow A) \times \isEquiv~T~A~f$. If
$\myeq{} : \Equiv~T~A$ and $t : T$, we may write $\myeq{} \; t$ for
$\myeq{}.1 \; t$.

\begin{lemma}\label{equiv}
  If $\Gamma \der \myeq{} : \Equiv~T~A$, we have an operation
  \begin{mathpar}
  \inferrule {\Gamma,\varphi \der t : T \\
              \Gamma \der a : A~~~~~\Gamma,\varphi \der p : \Path~A~a~(\myeq{}~t)}
             {\Gamma \der \eq~\myeq{}~[\varphi \mapsto (t,p)]~a : ((x : T)
              \times \Path~A~a~(\myeq{} \; x))[\varphi \mapsto (t,p)]} %
  \end{mathpar}
  Conversely, if $\Gamma \der \myeq{} : T \rightarrow A$ and we have
  such an operation, then we can build a proof that $\myeq{}$ is an
  equivalence.
\end{lemma}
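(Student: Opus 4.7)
I plan to handle both directions of the equivalence by leaning on the extension property for contractible types from~\sect{sec:contractible-types}.

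For the forward direction, I write $f = \myeq{}.1$ and $e = \myeq{}.2$, so that $e : \isEquiv~T~A~f$. Applying $e$ to $a$ gives
\[
\Gamma \der e~a : \isContr~\bigl((x : T) \times \Path~A~a~(f \; x)\bigr),
\]
and on the extent $\varphi$ the pair $(t,p)$ is a partial element of this $\Sigma$-type (since $p : \Path~A~a~(\myeq{}~t)$ matches the required type). I then define
\[
\eq~\myeq{}~[\varphi \mapsto (t,p)]~a \= \ext~(e~a)~[\varphi \mapsto (t,p)].
\]
The output type $((x : T) \times \Path~A~a~(\myeq{}~x))[\varphi \mapsto (t,p)]$ is immediate from the typing of $\ext$ given in~\sect{sec:contractible-types}.

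For the converse, I assume $\myeq{} : T \rightarrow A$ together with an operation of the stated form and construct $e : \isEquiv~T~A~\myeq{}$. Unfolding, for each $a : A$ I must give a proof of contractibility of the fiber
\[
B_a \= (x : T) \times \Path~A~a~(\myeq{}~x).
\]
Fixing $a$, I invoke Lemma~\ref{contrinv} applied to $B_a$: it suffices to exhibit an operation sending any partial element $u : B_a$ on extent $\varphi$ to an element of $B_a[\varphi \mapsto u]$. By surjective pairing, $u$ is judgmentally $(u.1, u.2)$, with $u.1 : T$ and $u.2 : \Path~A~a~(\myeq{}~u.1)$ valid on $\varphi$, so the assumed $\eq$-operation produces
\[
\eq~\myeq{}~[\varphi \mapsto (u.1, u.2)]~a : B_a[\varphi \mapsto (u.1, u.2)],
\]
which is the desired extension.

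Neither direction presents a serious technical obstacle: the content of the lemma is essentially that the $\eq$-operation is nothing more than the contractibility-extension operation for each fiber $B_a$, repackaged with $a$ as an explicit argument. The one thing I want to be careful about in the converse is that surjective pairing is a judgmental equality, so that $u$ and $(u.1, u.2)$ really are interchangeable when checking that the output lies in $B_a[\varphi \mapsto u]$; this is precisely one of the rules in~\fig{basictt:rules}, so the application of Lemma~\ref{contrinv} goes through cleanly.
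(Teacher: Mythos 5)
Your proof is correct and follows essentially the same route as the paper: the forward direction is the paper's one-line definition $\ext~(\myeq{}.2\;a)~[\varphi \mapsto (t,p)]$, and the converse is exactly the paper's appeal to Lemma~\ref{contrinv} applied fiberwise, which you have merely spelled out (including the harmless use of surjective pairing).
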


\begin{proof}
  We define
  $\eq~\myeq{}~[\varphi \mapsto (t,p)]~a = \ext~(\myeq{}.2 \;a)~[\varphi
  \mapsto (t,p)]$
  using the $\ext$ operation defined above. The second statement
  follows from Lemma \ref{contrinv}.
\end{proof}

\section{Glueing}\label{sec:glue}

In this section, we introduce the glueing operation. This operation
expresses that to be ``extensible'' is invariant by equivalence. From
this operation, we can define a composition operation for universes,
and prove the univalence axiom.

\subsection{Syntax and inference rules for glueing}\label{sec:glueing}

We introduce the \emph{glueing} construction at type and term level
by:
\[
\begin{array}{lcll}
  t,u,A,B       & ::= & \dots & \\
                & \|  & \Glue~[\varphi\mapsto (T,\myeq{})]~A & \hspace{2cm} \text{Glue type} \\
                & \|  & \glue~[\varphi\mapsto t]~u & \hspace{2cm} \text{Glue term} \\
                & \|  & \ugl~[\varphi\mapsto \myeq{}]~u & \hspace{2cm} \text{Unglue term} \\
\end{array}
\]
We may write simply $\ugl~b$ for $\ugl~[\varphi\mapsto \myeq{}]~b$.
The inference rules for these are presented in~\fig{glueing:rules}.

\begin{figure}[h]
\begin{mdframed}
\begin{mathpar}
  \inferrule {\Gamma \der A \\ \Gamma,\varphi \der T \\
              \Gamma,\varphi \der \myeq{} : \Equiv~T~A}
             {\Gamma \der \Glue~[\varphi \mapsto (T,\myeq{})]~A} %
  \and %
  \inferrule {\Gamma \der b : \Glue~[\varphi \mapsto (T,\myeq{})]~A}
             {\Gamma \der \ugl \; b : A[\varphi \mapsto \myeq{} \;
               b]} %
  \and %
  \inferrule {\Gamma,\varphi \der \myeq{} : \Equiv~T~A \\
              \Gamma,\varphi \der t : T \\
              \Gamma \der a : A[\varphi \mapsto \myeq{} \; t]}
             {\Gamma \der \glue~[\varphi \mapsto t]~a :
              \Glue~[\varphi \mapsto (T,\myeq{})]~A} %

  \and %
  \inferrule {\Gamma \der T \\ \Gamma \der \myeq{} : \Equiv~T~A}
             {\Gamma \der \Glue~[1_\FF \mapsto (T,\myeq{})]~A = T} %
  \and %
  \inferrule {\Gamma \der t : T\\
              \Gamma \der \myeq{} : \Equiv~T~A}
             {\Gamma \der \glue~[1_\FF \mapsto t]~(f \; t) = t : T} %
  \and %
  \inferrule {\Gamma \der b : \Glue~[\varphi \mapsto (T,\myeq{})]~A}
             {\Gamma \der b = \glue~[\varphi \mapsto b]~(\ugl \; b) :
              \Glue~[\varphi \mapsto (T,\myeq{})]~A} %
  \and %
  \inferrule {\Gamma,\varphi \der \myeq{} : \Equiv~T~A \\
              \Gamma, \varphi \der t : T \\
              \Gamma \der a : A[\varphi \mapsto \myeq{} \; t ]}
             {\Gamma \der \ugl \; (\glue~[\varphi\mapsto t]~a) = a : A}
\end{mathpar}
\end{mdframed}
\caption{Inference rules for glueing}\label{glueing:rules}
\end{figure}

It follows from these rules that if
$\Gamma \der b:\Glue~[\varphi \mapsto (T,\myeq{})]~A$, then
$\Gamma, \varphi \der b : T$.

In the case $\varphi = (i=0) \vee (i=1)$ the glueing operation can be
illustrated as the dashed line in:
\begin{mathpar}
  \begin{tikzpicture}[xscale=2.5,yscale=2]
    \node (A01) at (0,1) {$T_0$};
    \node (A11) at (1,1) {$T_1$};
    \node (A00) at (0,0) {$A(i0)$};
    \node (A10) at (1,0) {$A(i1)$};
    \path[->,font=\scriptsize,>=angle 90]
      (A01) edge node[left]{$f(i0)$} node[right]{\reflectbox{\rotatebox[origin=c]{90}{$\sim$}}} (A00)
      (A11) edge node[left]{\reflectbox{\rotatebox[origin=c]{90}{$\sim$}}} node[right]{$f(i1)$} (A10)
      (A00) edge node[below]{$A$} (A10);
    \path[->,font=\scriptsize,>=angle 90,dashed]
      (A01) edge node[above]{$$} (A11);
  \end{tikzpicture}
\end{mathpar}
This illustrates why the operation is called glue: it \emph{glues}
together along a partial equivalence the partial type $T$ and the
total type $A$ to a total type that extends $T$.

\begin{remark*}
In general $\Glue~[\varphi\mapsto (T,\myeq{})]~A$ can be illustrated as:
\begin{mathpar}
  \begin{tikzpicture}
    [x={(1cm,0cm)},y={(0cm,1cm)},z={(1cm,-0.5cm)}, scale=1.5]
    \node (Gphi) at (0,0,0) {$\Gamma,\varphi$};
    \node (G) at (3,0,0) {$\Gamma$};
    \draw[>->] (Gphi) -- (G);
    \node (T) at (0,2,-1) {$T$} ;
    \node (A') at (0,2,1) {$A$} ;
    \draw[->>] (A') -- (Gphi) ;
    \draw[->>] (T) -- (Gphi) ;
    \draw[->] (T) -- node [below,sloped] {\scriptsize $\sim$} node [above,sloped] {\scriptsize $f$}(A');
    \node (A) at (3,2,1) {$A$} ;
    \node (Glue) at (3,2,-1) {$\Glue~[\varphi\mapsto (T,\myeq{})]~A$} ;
    \draw[->>] (A) -- (G) ;
    \draw[->>,dashed] (Glue) -- (G) ;
    \draw[->,dashed] (Glue) -- node [above,sloped] {\scriptsize $\ugl$} (A) ;
    \draw[->,dashed] (T) -- (Glue) ;
    \draw[->] (A') -- (A) ;
    \draw (0.2,1.6,0.8) -- (0.4,1.6,0.8) -- (0.4,1.8,0.9);
    \draw [dashed] (0.2,1.7,-0.85) -- (0.4,1.7,-0.85) -- (0.4,1.9,-0.95);
  \end{tikzpicture}
\end{mathpar}
This diagram suggests that a construction similar to $\Glue$ also
appears in the simplicial set model. Indeed, the proof of
Theorem~3.4.1 in~\cite{KL} contains a similar diagram where
$\overline{E}_1$ corresponds to $\Glue~[\varphi\mapsto
(T,\myeq{})]~A$.
\end{remark*}

\begin{example}
  \label{exa:equivtopath}
  Using glueing we can construct a path from an equivalence
  $\Gamma\der \myeq{} : \Equiv~A~B$ by defining
  \[
  \Gamma,i:\II \der E =
  \Glue~[(i=0) \mapsto (A,\myeq{}),(i=1) \mapsto (B,\ident_B)]~B
  \]
  so that $E(i0) = A$ and $E(i1) = B$, where $\ident_B : \Equiv~B~B$ is
  defined as:
  \[
  \ident_B = (\lambda x : B. \; x,
  \lambda x : B. \; ((x,\refl{x}),
  \lambda u : (y : B) \times \Path~B~x~y. \;
  \pabs{i} (u.2~i,\pabs{j} u.2 \; (i\wedge j))))
  \]
  %
  In \sect{sec:universe} we introduce a universe of types $\U$ 
  and  we will be able to define a function of type
  $(A~B : \U) \rightarrow \Equiv~A~B \rightarrow \Path~\U~A~B$ by:
  \[
  \lambda A~B:\U. \; \lambda \myeq{} : \Equiv~A~B. \;
  \pabs{i} \Glue~[(i=0) \mapsto (A,\myeq{}),(i=1) \mapsto (B,\ident_B)]~B
  \]
\end{example}

\subsection{Composition for glueing}\label{sec:composition-glueing}

We assume
$\Gamma, i : \II \der B = \Glue~[\varphi \mapsto (T,\myeq{})]~A$, and
define the composition in~$B$. In order to do so, assume
\begin{align*}
  \Gamma,\psi,i : \II \der b : B %
  && %
  \Gamma\der b_0 :B(i0)[\psi\mapsto b(i0)]
\end{align*}
and define:
%
%
\begin{align*}
  a &= \ugl~ b %
  && (\text{in context }\Gamma, \psi, i : \II\text{ and of type }A[\varphi\mapsto \myeq{} \;  b])\\
  a_0 &= \ugl~ b_0 %
  && (\text{in context }\Gamma\text{ and of type }A(i0)[\varphi(i0)\mapsto \myeq{}(i0) \; b_0,\psi\mapsto a(i0)])
\end{align*}
The following provides the algorithm for composition
$\comp^i~B~[\psi\mapsto b]~b_0 = b_1$ of type $B(i1)[\psi \mapsto
b(i1)]$.
%
%
\[ \begin{array}{lcll}
\delta & = & \forall i.\varphi & \hspace{1cm} \Gamma \\
a_1' & = & \comp^i~A~[\psi\mapsto a]~a_0 & \hspace{1cm} \Gamma \\
t_1' & = & \comp^i~T~[\psi\mapsto b]~b_0 & \hspace{1cm} \Gamma,\delta\\
\omega & = &\pres{i}~\myeq{}~[\psi\mapsto b]~b_0 & \hspace{1cm} \Gamma,\delta \\
(t_1,\alpha) & = & \eq~\myeq{}(i1)~[\delta \mapsto (t'_1,\omega),
                            \psi \mapsto (b(i1),\lam{j}{a_1'})]~a_1' & \hspace{1cm} \Gamma,\varphi(i1)\\
a_1 & = & \comp^j~A(i1)~[\varphi(i1)\mapsto \alpha~j,\psi\mapsto a(i1)]~a_1' & \hspace{1cm} \Gamma \\
b_1 & = & \glue~[\varphi(i1)\mapsto t_1]~a_1 & \hspace{1cm} \Gamma \\
\end{array} \]

We can check that whenever $\Gamma, i: \II \der \varphi = 1_\FF : \FF$
the definition of $b_1$ coincides with $\comp^i~T~[\psi\mapsto
b]~b_0$, which is consistent with the fact that $B = T$ in this case.

\medskip

In the next section we will use the $\glue$ operation to define the
composition for the universe and to prove the univalence axiom.

\section{Universe and the univalence axiom}\label{sec:universe}
As in \cite{MLTT72}, we now introduce a universe $\U$ \`a la Russell
by reflecting all typing rules and
\begin{mathpar}
  \inferrule { \Gamma \der {} } {\Gamma \der \U} %
  \and %
  \inferrule {\Gamma \der A : \U} {\Gamma \der A} %
\end{mathpar}
In particular, we have $\Gamma \der \Glue~[\varphi \mapsto
(T,\myeq{})]~A : \U$ whenever $\Gamma \der A : \U$, $\Gamma,\varphi \der
T : \U$, and $\Gamma, \varphi \der \myeq{} : \Equiv~T~A$.

\subsection{Composition for the universe}\label{subsec:compU}

In order to describe the composition operation for the universe we
first have to explain how to construct an equivalence from a line in
the universe. Given $\Gamma \der A,~\Gamma \der B$, and
$\Gamma,i:\II \der E$, such that $E(i0) = A$ and $E(i1) = B$, we will
construct $\eq^i~E : \Equiv~A~B$. In order to do this we first define
\begin{align*}
  f &= \lambda x:A.~\transport^i~E~x %
  && (\text{in context }\Gamma\text{ and of type } A \to B)
  \\
  g &= \lambda y:B.~(\transport^i~E(i/1-i)~y)(i/1-i) %
  && (\text{in context }\Gamma\text{ and of type } B \to A)
  \\
  u &= \lambda x:A.~\Comp^i~E~[]~x %
  && (\text{in context }\Gamma, i:\II\text{ and of type } A \to E)
  \\
  v &= \lambda y:B.~(\Comp^i~E(i/1-i)~[]~y)(i/1-i) %
  && (\text{in context }\Gamma, i:\II\text{ and of type } B \to E)
\end{align*}
such that:
\begin{align*}
  u(i0) &= \lambda x:A.x
  &
  u(i1) &= f
  &
  v(i0) &= g
  &
  v(i1) &= \lambda y:B.y
\end{align*}



We will now prove that $f$ is an equivalence. Given $y : B$ we see
that $(x : A) \times \Path~B~y~(f \; x)$ is inhabited as it contains
the element $(g \; y, \pabs j \theta_0 (i1))$ where
\[
\theta_0 =%
\Comp^i~E~[(j=0) \mapsto v \; y,(j=1) \mapsto u \; (g \; y)]~(g \; y).
\]
Next, given an element $(x,\beta)$ of $(x : A) \times \Path~B~y~(f \;
x)$ we will construct a path from $(g \; y, \pabs j \theta_0 (i1))$ to
$(x,\beta)$.  Let
\[
\theta_1 =%
\left(\Comp^i~{E \subst i {1-i}}~[(j=0) \mapsto (v \; y) \subst i {1-i}, %
(j=1) \mapsto (u \; x) \subst i {1-i}]~(\beta \; j) \right)\subst i {1-i}
\]
and $\omega = \theta_1 (i0)$ so $\Gamma, i : \II, j : \II \der
\theta_1 : E$, $\omega (j0) = g \; y$, and $\omega (j1) = x$.  And
further with
\[
\delta = %
\comp^i~E~%
[ (k=0) \mapsto \theta_0%
, (k=1) \mapsto \theta_1%
, (j=0) \mapsto v \; y%
, (j=1) \mapsto u \; \omega \subst j k%
]~%
\omega \subst j {j \land k}
\]
we obtain
\[
\pabs k (\omega \subst j k, \pabs j \delta) : %
\Path~((x:A) \times \Path~B~y~(f \; x))~(g \; y, \pabs j
\theta_0 (i1))~(x,\beta)
\]
as desired. This concludes the proof that $f$ is an equivalence and
thus also the construction of $\eq^i~E : \Equiv~A~B$.

\medskip

Using this we can now define the composition for the universe:
\[
\Gamma \der \comp^i~\U~[\varphi\mapsto E]~A_0 =
\Glue~[\varphi\mapsto(E(i1),\eq^i~E(i/1-i))]~A_0 : \U
\]

\begin{remark*}
  Given $\Gamma, i : \II \der E$ we can also get an equivalence in
  $\Equiv~A~B$ (where $A = E (i0)$ and $B = E(i1)$) with a less direct
  description by
  \[
  \Gamma \der \transport^i~(\Equiv~A~E)~\ident_A : \Equiv~A~B
  \]
  where $\ident_A$ is the identity equivalence as given in
  Example~\ref{exa:equivtopath}.
\end{remark*}

\subsection{The univalence axiom}
\label{sec:univalence-axiom}

Given $B = \Glue~[\varphi \mapsto (T,\myeq{})]~A$ the map $\ugl : B
\rightarrow A$ extends $\myeq{}$, in the sense that $\Gamma, \varphi
\der \ugl~b = \myeq{}~b : A$ if $\Gamma \der b : B$.

\begin{theorem}\label{thm:unglueequiv}
  The map $\ugl : B \rightarrow A$ is an equivalence.
\end{theorem}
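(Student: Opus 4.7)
My plan is to deduce this from the converse direction of Lemma~\ref{equiv} applied to $\ugl : B \to A$. That converse reduces the statement to exhibiting an operation which, given any restriction $\psi$, any $\Gamma,\psi \der b : B$, any $\Gamma \der a : A$, and any $\Gamma,\psi \der p : \Path~A~a~(\ugl~b)$, produces an element of $\big((x:B)\times\Path~A~a~(\ugl~x)\big)[\psi \mapsto (b,p)]$. The intuition is that on the extent $\varphi$ the map $\ugl$ coincides with $\myeq{}$, which is already an equivalence, so on $\varphi$ the required fiber is available via the forward direction of Lemma~\ref{equiv}; the rest of the job is to extend that partial fiber over all of $\Gamma$ by a Kan filling in $A$ followed by a single $\glue$.

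Concretely, I would first work in the context $\Gamma,\varphi$, where $B = T$ and $\ugl~b = \myeq{}~b$, and apply Lemma~\ref{equiv} to $\myeq{}$ with restriction $\psi$ to obtain
\[
(t,\alpha) \;=\; \eq~\myeq{}~[\psi \mapsto (b,p)]~a,
\]
with $t : T$, $\alpha : \Path~A~a~(\myeq{}~t)$, and $(t,\alpha) = (b,p)$ on $\psi$. Then, introducing a fresh name $i : \II$ over $\Gamma$, I build the Kan filler
\[
v \;=\; \Comp^i~A~[\,\varphi \mapsto \alpha~i,\ \psi \mapsto p~i\,]~a,
\]
which is well-posed since $\alpha = p$ on $\varphi \wedge \psi$, and whose values satisfy $v(i0) = a$ together with $v = \alpha$ on $\varphi$ and $v = p$ on $\psi$. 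Setting $a_1 = v(i1)$, one has $a_1 = \myeq{}~t$ on $\varphi$, so $b_1 = \glue~[\,\varphi \mapsto t\,]~a_1$ is a well-typed element of $B$ with $\ugl~b_1 = a_1$. The desired extension is then $(b_1,\pabs{i} v)$: the path runs from $a$ to $\ugl~b_1$, and on $\psi$ it reduces to $p$ while $b_1$ reduces to $\glue~[\,\varphi \mapsto b\,]~(\ugl~b) = b$ by the $\eta$-rule for $\glue$ from Figure~\ref{glueing:rules}.

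The main obstacle is not conceptual depth but careful bookkeeping: every ingredient handed to $\eq$, $\Comp$, and $\glue$ carries compatibility obligations on the intersections of $\varphi$ and $\psi$, and I must verify that the defining equations of $\eq$ from Lemma~\ref{equiv} mesh with the $\glue/\ugl$ rules tightly enough that the equation $b_1 = b$ holds as a judgmental equality on $\psi$, rather than merely up to a path. Once those compatibilities have been checked, the converse of Lemma~\ref{equiv} packages the operation into an element of $\isEquiv~B~A~\ugl$ and finishes the proof.
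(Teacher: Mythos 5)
Your proposal is correct and follows essentially the same route as the paper's own proof: the paper likewise invokes the converse of Lemma~\ref{equiv} for $\ugl$, applies $\eq~\myeq{}$ over $\varphi$ to get $(t,\beta)$ extending $(b,\alpha)$ on $\psi$, forms $\tilde a = \comp^i~A~[\varphi\mapsto\beta~i,\psi\mapsto\alpha~i]~u$ with the corresponding filling $\Comp$ as the connecting path, and sets $\tilde b = \glue~[\varphi\mapsto t]~\tilde a$. Your single filler $v$ just packages the paper's $\comp$/$\Comp$ pair, and your appeal to the $\eta$-rule $b = \glue~[\varphi\mapsto b]~(\ugl~b)$ to get $b_1 = b$ judgmentally on $\psi$ is exactly the justification the paper leaves implicit.
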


\begin{proof}
By Lemma \ref{equiv} it suffices to construct
\begin{align*}
  \tilde{b} : B[\psi \mapsto b] %
  &&
  \tilde{\alpha} : \Path~A~u~(\ugl~\tilde{b})[\psi \mapsto \alpha] %
\end{align*}
%
given $\Gamma, \psi \der b : B$ and $\Gamma\der u:A$ and
$\Gamma, \psi \der \alpha : \Path~A~u~(\ugl~b)$.

Since $\Gamma, \varphi \der \myeq{} : T \rightarrow A$ is an
equivalence and
\begin{align*}
  \Gamma,\varphi,\psi \der b : T %
  &&
  \Gamma,\varphi,\psi \der \alpha:\Path~A~u~(f~b) %
\end{align*}
we get, using Lemma \ref{equiv}
\begin{align*}
  \Gamma,\varphi \der t : T[\psi \mapsto b] %
  &&
  \Gamma,\varphi \der \beta : \Path~A~u~(\myeq{}~t)~[\psi \mapsto \alpha] %
\end{align*}
We then define
$\tilde{a} = \comp^i~A~[\varphi \mapsto \beta~i,\psi \mapsto \alpha~i]~u$,
and using this we conclude by letting
$\tilde{b} = \glue~[\varphi \mapsto t]~\tilde{a}$ and
$\tilde{\alpha} = \Comp^i~A~[\varphi \mapsto \beta~i, \psi \mapsto \alpha~i]~u$.
\end{proof}

%
\begin{corollary}
  \label{cor:equiv-contr}
  For any type $A : \U$ the type $C = (X : \U) \times \Equiv~X~A~$ is
  contractible.\footnote{This formulation of the univalence axiom
    can be found in the message of Martín Escardó in:\\
    \url{https://groups.google.com/forum/\#!msg/homotopytypetheory/HfCB_b-PNEU/Ibb48LvUMeUJ}\\
    This is also used in the (classical) proofs of the univalence
    axiom, see Theorem~3.4.1 of~\cite{KL} and Proposition~2.18 of
    \cite{Cisinski}, where an operation similar to the glueing
    operation appears implicitly.}
\end{corollary}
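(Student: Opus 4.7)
My plan is to apply Lemma~\ref{contrinv} to $C$. It suffices to produce, for any $\Gamma, \varphi \der u : C$, an element $\Gamma \der \tilde{u} : C[\varphi \mapsto u]$. Writing $u = (T, \myeq{})$, I would set
\[
B = \Glue~[\varphi \mapsto (T, \myeq{})]~A,
\]
which reduces to $T$ on $\varphi$ by the rule $\Glue~[1_\FF \mapsto (T, \myeq{})]~A = T$ of Figure~\ref{glueing:rules}. By Theorem~\ref{thm:unglueequiv}, $\ugl : B \to A$ is an equivalence, providing a witness $p : \isEquiv~B~A~\ugl$. Take $\tilde{u} = (B, (\ugl, p))$.

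For Lemma~\ref{contrinv} to apply I need $\tilde{u} = u$ on $\varphi$ judgmentally. The type components match by the Glue reduction above; the function components match because $\ugl~b = \myeq{}~b$ on $\varphi$ by the inference rule for $\ugl$, hence $\ugl = \myeq{}.1$ by $\eta$ for $\Pi$. The substantive task is to verify $p = \myeq{}.2$ on $\varphi$.

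The main obstacle is precisely this last identification. I would verify it by tracing the proof of Theorem~\ref{thm:unglueequiv}: on the extent $\varphi$ the system $[\varphi \mapsto \beta~i, \psi \mapsto \alpha~i]$ used to build $\tilde a$ has total extent $1_\FF$, so $\tilde a$ reduces to $\beta(i1) = \myeq{}~t$; the rule $\glue~[1_\FF \mapsto t]~(\myeq{}~t) = t$ then forces $\tilde b = t$, and the corresponding filler collapses to $\tilde \alpha = \beta$. Thus on $\varphi$ the $\eq$-style operation derived for $\ugl$ in Theorem~\ref{thm:unglueequiv} coincides with the one supplied by $\myeq{}.2$ through Lemma~\ref{equiv}, and since $p$ is built from this operation via the converse direction of Lemma~\ref{equiv} (which in turn invokes Lemma~\ref{contrinv} fibrewise), the judgmental match on $\varphi$ follows. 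Lemma~\ref{contrinv} then yields the required element of $\isContr~C$.
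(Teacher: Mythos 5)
Your overall strategy is the paper's: apply Lemma~\ref{contrinv} to $C$, extend a partial $(T,\myeq{})$ by $(\Glue~[\varphi\mapsto(T,\myeq{})]~A,\,(\ugl,p))$ with $p$ coming from Theorem~\ref{thm:unglueequiv}, and your tracing of that theorem's proof is correct as far as it goes: on $\varphi$ the composition $\tilde a$ collapses to $\beta\,1=\myeq{}\;t$, $\tilde b$ to $t$, and the filler to $\beta$, so the \emph{extension operation} for $\ugl$ does restrict on $\varphi$ to the operation $\eq~\myeq{}~[\cdots]$ supplied by $\myeq{}.2$ via Lemma~\ref{equiv}. The gap is the final inference, which is the whole point: agreement of the two extension operations on $\varphi$ does not give $p=\myeq{}.2$ on $\varphi$ judgmentally. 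The proof $p$ is obtained from the operation by the converse direction of Lemma~\ref{equiv}, i.e.\ by Lemma~\ref{contrinv} applied fibrewise: the centre of contraction of the fibre over $u:A$ is $\ext~[\,]$, a composition with extent $0_\FF$, which on $\varphi$ is a transport of $(\myeq{}.2~u).1$ along a degenerate direction, and the contraction paths are likewise fresh compositions with extent $(i=0)\vee(i=1)$. Since the theory has no regularity --- $\transport^i~A~a_0=a_0$ fails judgmentally when $A$ is independent of $i$ --- none of this round-trip data is judgmentally equal to the data of $\myeq{}.2$; it is only path-equal. Hence your $\tilde u$ does not satisfy $\tilde u = u$ on $\varphi$ as a judgmental equality, and the hypothesis of Lemma~\ref{contrinv} is not met.

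The paper sidesteps exactly this: it only claims the partial element is \emph{path} equal to the restriction of the total element $(B,(\ugl,p))$ --- the first two components agree judgmentally, and the $\isEquiv$ components are path equal because $\isEquiv~X~A~\myeq{}.1$ is a proposition --- and then (implicitly) composes in $C$ along that partial path, using that $C$ is fibrant, to produce a total element restricting to $(T,\myeq{})$ on the nose, which is what Lemma~\ref{contrinv} requires. Your argument can be repaired either by adding these two steps (propositionality of $\isEquiv$ plus the correcting composition), or, as the footnote in Appendix~\ref{sec:univ-from-glue} indicates, by modifying the proof of Theorem~\ref{thm:unglueequiv} so that the fibrewise contractibility data is built with $\varphi$ and the data of $\myeq{}.2$ included among the extents of its compositions, so that it literally extends $\myeq{}.2$; but that is a different construction, not a consequence of feeding the extension operation through the plain converse of Lemma~\ref{equiv} as you do.
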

\begin{proof}
  It is enough by Lemma~\ref{contrinv} to show that any partial
  element $\varphi \der (T,\myeq{}):C$ is path equal to the restriction
  of a total element. The map $\ugl$ extends $\myeq{}$ and is an
  equivalence by the previous theorem. Since any two elements of the
  type $\isEquiv~X~A~\myeq{}.1$ are path equal, this shows that any
  partial element of type $C$ is path equal to the restriction of a
  total element.  We can then conclude by
  Theorem~\ref{thm:unglueequiv}.
\end{proof}


\begin{corollary}[Univalence axiom]\label{univalence} For any term
  \[
  t : (A~B : \U) \rightarrow \Path~\U~A~B \rightarrow \Equiv~A~B
  \]
  the map $t~A~B:\Path~\U~A~B\rightarrow \Equiv~A~B$ is an
  equivalence.
\end{corollary}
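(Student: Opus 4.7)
The plan is to fix $A:\U$ and consider the map
\[
\tau_A \;:\; (B:\U) \times \Path~\U~A~B \;\longrightarrow\; (B:\U)\times \Equiv~A~B, \qquad (B,p) \longmapsto (B,\,t~A~B~p).
\]
Since $\tau_A$ preserves the first projection, its restriction over each $B:\U$ is exactly $t~A~B$. The strategy is to show that $\tau_A$ is a map between contractible types, hence an equivalence, and then deduce that each $t~A~B$ is an equivalence by the standard principle that an equivalence fibred over the identity on $\U$ is a fibrewise equivalence.

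Contractibility of the source follows from the singleton construction of Section~\ref{subsec:examples}, instantiated at the type $\U$ with centre $A$: an explicit path from $(A,\refl{A})$ to any $(B,p)$ is $\pabs{i}(p~i, \pabs{j} p~(i\wedge j))$. For the target, Corollary~\ref{cor:equiv-contr} gives contractibility of $(X:\U)\times \Equiv~X~A$; since every equivalence in $\Equiv~X~A$ has an inverse in $\Equiv~A~X$ and this inversion is involutive up to path, the map $(X,e)\mapsto (X,e^{-1})$ is an equivalence $(X:\U)\times \Equiv~X~A \simeq (B:\U)\times \Equiv~A~B$, along which contractibility transfers. Any map between contractible types is itself an equivalence, so $\tau_A$ is an equivalence.

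To finish, I use that $\tau_A$ lies over the identity on $\U$: the homotopy fibre of $\tau_A$ at a point $(B,e)$ is equivalent, via transport along the first-coordinate path, to the homotopy fibre of $t~A~B$ at $e$. Since $\tau_A$ is an equivalence, all its fibres (in the sense of $\isEquiv$) are contractible, hence so are those of $t~A~B$, and therefore $t~A~B$ is itself an equivalence by the characterisation of $\isEquiv$ provided by Lemma~\ref{equiv}. The main obstacle is making the ``total equivalence implies fibrewise equivalence'' step cubically explicit: one must give the equivalence of fibres through explicit composition, glueing, and connection data rather than by appeal to $\J$, and verify the corresponding face conditions. Once this is set up, the proof reduces to a bookkeeping combination of Corollary~\ref{cor:equiv-contr}, singleton contractibility, and the $\eq$ operation of Lemma~\ref{equiv}.
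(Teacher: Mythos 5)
Your proposal is correct and is essentially the paper's own proof: both arguments show that the two total spaces $(B : \U) \times \Path~\U~A~B$ and $(B : \U) \times \Equiv~A~B$ are contractible (the first by singleton contractibility, the second via Corollary~\ref{cor:equiv-contr}), conclude that the total map is an equivalence, and then pass from the total equivalence to the fibrewise one. The paper handles that last step simply by citing Theorem~4.7.7 of~\cite{hott-book}, which is provable internally here because $\Path$-types already have an elimination principle (transport together with singleton contractibility), so the extra cubical bookkeeping you anticipate for the ``total implies fibrewise'' step is not actually needed.
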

\begin{proof}
Both $(X : \U) \times \Path~\U~A~X$ and $(X : \U) \times \Equiv~A~X$
are contractible. Hence the result follows from Theorem 4.7.7 in
\cite{hott-book}.
\end{proof}

Two alternative proofs of univalence can be found in
Appendix~\ref{sec:univ-from-glue}.

\section{Semantics}
\label{sec:semantics}

In this section we will explain the semantics of the type theory under
consideration in cubical sets. We will first review how cubical sets,
as a presheaf category, yield a model of basic type theory, and then
explain the additional so-called composition structure we have to
require to interpret the full cubical type theory.

\subsection{The category of cubes and cubical sets}
\label{sec:cubecat}

Consider the monad $\dM$ on the category of sets associating to each
set the free de~Morgan algebra on that set.  The \emph{category of
  cubes} $\CC$ is the small category whose objects are finite subsets
$I,J,K,\dots$ of a fixed, discrete, and countably infinite set, called
\emph{names}, and a morphism $\Hom(J,I)$ is a map $I \to \dM(J)$.
Identities and compositions are inherited from the Kleisli category of
$\dM$, \ie{}, the identity on $I$ is given by the unit $I \to \dM
(I)$, and composition $f g \in \Hom (K,I)$ of $g \in \Hom(K,J)$ and $f
\in \Hom(J,I)$ is given by $ \mu_K \circ \dM (g) \circ f$ where $\mu_K
\colon \dM (\dM (K)) \to \dM (K)$ denotes multiplication of $\dM$.  We
will use $f,g,h$ for morphisms in $\CC$ and simply write $f \colon J
\to I$ for $f \in \Hom(J,I)$.  We will often write unions with commas
and omit curly braces around finite sets of names, \eg{}, writing
$I,i,j$ for $I \cup \set {i,j}$ and $I-i$ for $I-\set{i}$ etc.

If $i$ is in $I$ and $b$ is $0_\II$ or $1_\II$, we have maps $(ib)$ in
$\Hom(I-i,I)$ whose underlying map sends $j \neq i$ to itself and $i$
to $b$.  A \emph{face map} is a composition of such maps. A
\emph{strict map} $\Hom(J,I)$ is a map $I \to \dM(J)$ which never
takes the value $0_\II$ or $1_\II$. Any $f$ can be uniquely written as
a composition $f = gh$ where $g$ is a face map and $h$ is strict.

\begin{definition}
  \label{def:cset}
  A \emph{cubical set} is a presheaf on $\CC$.
\end{definition}

Thus, a cubical set $\Gamma$ is given by sets $\Gamma (I)$ for each $I
\in \CC$ and maps (called restrictions) $\Gamma (f) \colon \Gamma (I)
\to \Gamma (J)$ for each $f \colon J \to I$.  If we write $\Gamma (f)
(\rho) = \rho f$ for $\rho \in \Gamma (I)$ (leaving the $\Gamma$
implicit), these maps should satisfy $\rho \, \id_I = \rho$ and $(\rho
f) g = \rho (f g)$ for $f \colon J \to I$ and $g \colon K \to J$.

Let us discuss some important examples of cubical sets.  Using the
canonical de~Morgan algebra structure of the unit interval, $[0,1]$,
we can define a functor
\begin{equation}
  \label{eq:georeal}
  \CC \to \Top, \quad I \mapsto [0,1]^I.
\end{equation}
If $u$ is in $[0,1]^I$ we can think of $u$ as an environment giving
values in $[0,1]$ to each $i \in I$, so that $iu$ is in $[0,1]$ if $i
\in I$.  Since $[0,1]$ is a de~Morgan algebra, this extends uniquely
to $r u$ for $r \in \dM (I)$. So any $f \colon J \to I$ in $\CC$
induces $f \colon [0,1]^J \to [0,1]^I$ by $i(fu) = (if)u$.

To any topological space $X$ we can associate its \emph{singular
  cubical set} $\sing(X)$ by taking $\sing(X) (I)$ to be the set of
continuous functions $[0,1]^I \to X$.

For a finite set of names $I$ we get the formal cube $\yoneda I$ where
$\yoneda \colon \CC \to [\CC^\op,\Set]$ denotes the Yoneda embedding.
Note that since $\Top$ is cocomplete the functor in~\eqref{eq:georeal}
extends to a cocontinuous functor assigning to each cubical set its
\emph{geometric realization} as a topological space, in such a way that
$\yoneda I$ has $[0,1]^I$ as its geometric realization.

The formal interval $\II$ induces a cubical set given by $\II (I) =
\dM (I)$.  The face lattice $\FF$ induces a cubical set by taking as
$\FF (I)$ to be those $\varphi \in \FF$ which only use symbols in $I$.
The restrictions along $f \colon J \to I$ are in both cases simply
\emph{substituting} the symbols $i \in I$ by $f(i) \in
\dM(J)$.

As any presheaf category, cubical sets have a subobject classifier
$\Omega$ where $\Omega (I)$ is the set of sieves on $I$ (i.e.,
subfunctors of $\yoneda I$).  Consider the natural transformation
$(\cdot=1) \colon \II \to \Omega$ where for $r \in \II (I)$, $(r = 1)$
is the sieve on $I$ of all $f \colon J \to I$ such that $r f = 1_\II$.
The image of $(\cdot = 1)$ is $\FF \to \Omega$, assigning to each
$\varphi$ the sieve of all $f$ with $\varphi f = 1_\FF$.

\subsection{Presheaf semantics}
\label{sec:presheaf-semantics}

The category of cubical sets (with morphisms being natural
transformations) induce---as does any presheaf category---a category
with families (\cwf)~\cite{Dybjer96} where the category of contexts
and substitutions is the category of cubical sets.  We will review the
basic constructions but omit verification of the required equations
(see, \eg{}, \cite{Hofmann97,simonlic,BCH} for more details).

\subsubsection*{Basic presheaf semantics}

As already mentioned the category of (semantic) contexts and
substitutions is given by cubical sets and their maps.  In this
section we will use $\Gamma,\Delta$ to denote cubical sets and
(semantic) substitutions by $\sigma \colon \Delta \to \Gamma$,
overloading previous use of the corresponding meta-variables to
emphasize their intended role.

Given a cubical set $\Gamma$, the types $A$ in context $\Gamma$,
written $A \in \Ty(\Gamma)$, are given by sets $A \rho$ for each $I
\in \CC$ and $\rho \in \Gamma (I)$ together with restriction maps $A
\rho \to A (\rho f)$, $u \mapsto u f$ for $f \colon J \to I$
satisfying $u \, \id_I = u$ and $(u f) g = u (f g) \in A (\rho f g)$
if $g \colon K \to J$. Equivalently, $A \in \Ty (\Gamma)$ are the
presheaves on the category of elements of $\Gamma$. For a type $A \in
\Ty(\Gamma)$ its terms $a \in \Ter (\Gamma; A)$ are given by families
of elements $a \rho \in A \rho$ for each $I \in \CC$ and $\rho \in
\Gamma (I)$ such that $(a\rho) f = a (\rho f)$ for $f \colon J \to I$.
Note that our notation leaves a lot implicit; \eg{}, we should have
written $A (I,\rho)$ for $A \rho$; $A (I,\rho,f)$ for the restriction
map $A \rho \to A (\rho f)$; and $a (I,\rho)$ for $a \rho$.

For $A \in \Ty(\Gamma)$ and $\sigma \colon \Delta \to \Gamma$ we
define $A \sigma \in \Ty(\Delta)$ by $(A \sigma) \rho = A (\sigma
\rho)$ and the induced restrictions.  If we also have $a \in \Ter
(\Gamma;A)$, we define $a \sigma \in \Ter (\Delta; A \sigma)$ by $(a
\sigma) \rho = a (\sigma \rho)$.  For a type $A \in \Ty (\Gamma)$ we
define the cubical set $\Gamma.A$ by $(\Gamma.A) (I)$ being the set of
all $(\rho,u)$ with $\rho \in \Gamma (I)$ and $u \in A \rho$;
restrictions are given by $(\rho,u) f = (\rho f, u f)$.  The first
projection yields a map $\pp \colon \Gamma.A \to \Gamma$ and the
second projection a term $\qq \in \Ter (\Gamma.A; A \pp)$.  Given
$\sigma \colon \Delta \to \Gamma$, $A \in \Ty(\Gamma)$, and $a \in
\Ter(\Delta;A \sigma)$ we define $(\sigma,a) \colon \Delta \to
\Gamma.A$ by $(\sigma,a) \rho = (\sigma \rho, a \rho)$.  For $u \in
\Ter (\Gamma;A)$ we define $[u] = (\id_\Gamma, u) \colon \Gamma \to
\Gamma.A$.

The basic type formers are interpreted as follows.  For $A \in \Ty
(\Gamma)$ and $B \in \Ty (\Gamma.A)$ define $\Sigma_\Gamma (A,B) \in
\Ty (\Gamma)$ by letting $\Sigma_\Gamma (A,B) \rho$ contain all pairs
$(u,v)$ where $u \in A \rho$ and $v \in B (\rho,v)$; restrictions are
defined as $(u,v) f = (uf, vf)$.  Given $w \in \Ter
(\Gamma;\Sigma(A,B))$ we get $w.1 \in \Ter (\Gamma;A)$ and $w.2 \in
\Ter(\Gamma;B[w.1])$ by $(w.1) \rho = \pp (w \rho)$ and $(w.2) \rho =
\qq (w \rho)$ where $\pp (u,v) = u$ and $\qq (u,v) = v$ are the
set-theoretic projections.

Given $A \in \Ty (\Gamma)$ and $B \in \Ty (\Gamma.A)$ the dependent
function space $\Pi_\Gamma (A,B) \in \Ty(\Gamma)$ is defined by
letting $\Pi_\Gamma (A,B) \rho$ for $\rho \in \Gamma (I)$ contain all
families $w = (w_f \mid J \in \CC, f \colon J \to I)$ where
\[
w_f \in \prod_{u \in A (\rho f)} B (\rho f, u) \quad \text{such that}
\quad (w_f \, u) g = w_{fg} (u g) \quad \text{for } u \in A (\rho
f),~g\colon K\to J.
\]
The restriction by $f \colon J \to I$ of such a $w$ is defined by $(w
f)_g = w_{fg}$.  Given $v \in \Ter (\Gamma.A;B)$ we have
$\lambda_{\Gamma; A} v \in \Ter (\Gamma;\Pi (A,B))$ given by
$((\lambda v) \rho)_f \, u = v (\rho f, u)$.  Application $\app(w,u) \in
\Ter (\Gamma;B[u])$ of $w \in \Ter(\Gamma;\Pi (A,B))$ to $u \in
\Ter(\Gamma;A)$ is defined by
\begin{equation}
  \label{eq:app}
  \app(w,u) \rho = (w \rho)_{\id_I} (u \rho) \in (B [u]) \rho.
\end{equation}

Basic data types like the natural numbers can be interpreted as
discrete presheaves, \ie{}, $\sNN \in \Ty (\Gamma)$ is given by $\sNN
\rho = \NAT$; the constants are interpreted by the lifts of the
corresponding set-theoretic operations on $\NAT$.  This concludes the
outline of the basic \cwf\ structure on cubical sets.

\begin{remark*}
  Following Aczel~\cite{Aczel99} we will make use of that our semantic
  entities are actual sets in the ambient set theory.  This will allow
  us to interpret syntax in Section~\ref{sec:interpretation} with
  fewer type annotations than are usually needed for general
  categorical semantics of type theory
  (see~\cite{Streicher91}). E.g., the definition of application
  $\app (w,u) \rho$ as defined in~\eqref{eq:app} is independent of
  $\Gamma$, $A$ and $B$, since set-theoretic application is a (class)
  operation on all sets.  Likewise, we don't need annotations for
  first and second projections.  But note that we will need the type
  $A$ for $\lambda$-abstraction for $(\lambda_{\Gamma;A} v) \rho$ to
  be a set by the replacement axiom.
\end{remark*}

\subsubsection*{Semantic path types}

Note that we can consider any cubical set $X$ as $X' \in \Ty (\Gamma)$
by setting $X' \rho = X (I)$ for $\rho \in \Gamma (I)$.  We will
usually simply write $X$ for $X'$.  In particular, for a cubical set
$\Gamma$ we can form the cubical set $\Gamma.\II$.

For $A \in \Ty (\Gamma)$ and $u,v \in \Ter (\Gamma;A)$ the semantic
path type $\sPath^\Gamma_A (u,v) \in \Ty (\Gamma)$ is given by: for
$\rho \in \Gamma (I)$, $\sPath_A (u,v) \rho$ consists of equivalence
classes $\pabs i w$ where $i \notin I$, $w \in A (\rho \deg_i)$ such
that $w (i0) = u \rho$ and $w (i1) = v \rho$; two such elements $\pabs
i w$ and $\pabs {j} {w'}$ are equal if{f} $w \subst i j = w'$.  Here
$\deg_i \colon I,i \to I$ is induced by the inclusion $I \subseteq
I,i$ and $\subst i j$ setting $i$ to $j$.  We define $(\pabs i w) f =
\pabs j {w (f,i/j)}$ for $f \colon J \to I$ and $j \notin J$.  For $r
\in \II (I)$ we set $(\pabs i w) \, r = w \subst i r$.  Both
operations, name abstraction and application, lift to terms, \ie{}, if
$w \in \Ter (\Gamma.\II;A)$, then $\spabs w \in \Ter (\Gamma; \sPath_A
(w [0], w[1]))$ given by $(\spabs w) \rho = \pabs i w (\rho \deg_i)$
for a fresh $i$; also if $u \in \Ter (\Gamma;\sPath_A(a,b))$ and $r
\in \Ter (\Gamma; \II)$, then $u ~ r \in \Ter (\Gamma;A)$ defined as
$(u ~r) \rho = (u \rho) ~(r \rho)$.

\subsubsection*{Composition structure}

For $\varphi \in \Ter (\Gamma;\FF)$ we define the cubical set
$\Gamma,\varphi$ by taking $\rho \in (\Gamma,\varphi) (I)$ if{f} $\rho
\in \Gamma (I)$ and $\varphi \rho = 1_\FF \in \FF$; the restrictions
are those induced by $\Gamma$.  In particular, we have $\Gamma,1 =
\Gamma$ and $\Gamma,0$ is the empty cubical set.  (Here, $0 \in \Ter
(\Gamma;\FF)$ is $0 \rho = 0_\FF$ and similarly for $1_\FF$.)  Any
$\sigma \colon \Delta \to \Gamma$ gives rise to a morphism
$\Delta,\varphi \sigma \to \Gamma, \varphi$ which we also will denote
by $\sigma$.

If $A \in \Ty(\Gamma)$ and $\varphi \in \Ter(\Gamma;\FF)$, we define a
\emph{partial element of $A \in \Ty (\Gamma)$ of extent $\varphi$} to
be an element of $\Ter (\Gamma,\varphi; A \iota_\varphi)$ where
$\iota_\varphi \colon \Gamma,\varphi \hookrightarrow \Gamma$ is the
inclusion. So, such a partial element $u$ is given by a family of
elements $u \rho \in A \rho$ for each $\rho \in \Gamma (I)$ such that
$\varphi \rho = 1$, satisfying $(u \rho) f = u (\rho f)$ whenever
$f \colon J \to I$.  Each $u \in \Ter (\Gamma;A)$ gives rise to the
partial element $u \iota \in \Ter(\Gamma,\varphi; A\iota)$; a partial
element is \emph{extensible} if it is induced by such an element of
$\Ter(\Gamma;A)$.

For the next definition note that if $A \in \Ty (\Gamma)$, then $\rho
\in \Gamma(I)$ corresponds to $\rho \colon \yoneda I \to \Gamma$ and
thus $A \rho \in \Ty(\yoneda I)$; also, any $\varphi \in \FF(I)$
corresponds to $\varphi \in \Ter(\yoneda I;\FF)$.

\begin{definition}
  \label{def:comp-struct}
  A \emph{composition structure} for $A \in \Ty (\Gamma)$ is given by
  the following operations.  For each $I$, $i \notin I$, $\rho \in
  \Gamma (I,i)$, $\varphi \in \FF(I)$, $u$ a partial element of $A
  \rho$ of extent $\varphi$, and $a_0 \in A \rho (i0)$ with $a_0 f =
  u_{(i0) f}$ for all $f \colon J \to I$ with $\varphi f = 1_\FF$
  (i.e., $a_0 \iota_\varphi = u (i0) $ if $a_0$ is considered as
  element of $\Ter (\yoneda I; A \rho (i0))$), we require
  \[
  \scomp (I,i,\rho,\varphi,u,a_0) \in A \rho (i1)
  \]
  such that for any $f \colon J \to I$ and $j \notin J$,
  \begin{equation*}
    (\scomp (I,i,\rho,\varphi,u,a_0)) f = %
    \scomp (J,j,\rho (f,i=j),\varphi f,u (f,i=j),a_0 f),
  \end{equation*}
  and $\scomp (I,i,\rho,1_\FF,u,a_0) = u_{(i1)}$.
\end{definition}

A type $A \in \Ty(\Gamma)$ together with a composition structure
$\scomp$ on $A$ is called a \emph{fibrant type}, written
$(A,\scomp) \in \FTy(\Gamma)$.  We will usually simply write
$A \in \FTy(\Gamma)$ and $\scomp_A$ for its composition structure.
But observe that $A \in \Ty(\Gamma)$ can have different composition
structures. Call a cubical set $\Gamma$ \emph{fibrant} if it is a
fibrant type when $\Gamma$ considered as type $\Gamma \in \Ty (\top)$
is fibrant where $\top$ is a terminal cubical set.  A prime example of
a fibrant cubical set is the singular cubical set of a topological
space (see Appendix~\ref{sec-singular}).

\begin{theorem}
  \label{thm:1}
  The \cwf\ on cubical sets supporting dependent products, dependent sums,
  and natural numbers described above can be extended to fibrant types.
\end{theorem}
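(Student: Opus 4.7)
The plan is to exhibit a composition structure on each of $\Sigma_\Gamma(A,B)$, $\Pi_\Gamma(A,B)$, $\sNN$, and the semantic path type $\sPath^\Gamma_A(u,v)$, assuming that the constituent types $A$ and $B$ come equipped with composition structures; and then to check that these operations are stable under substitution by morphisms $\sigma \colon \Delta \to \Gamma$, so the operation $A \mapsto A\sigma$ lifts to $\FTy$. For each case, the semantic definition tracks the syntactic clauses of \sect{sec:composition} pointwise: given $I$, $i \notin I$, $\rho \in \Gamma(I,i)$, $\varphi \in \FF(I)$, a partial element $u$ of extent $\varphi$, and $a_0 \in A\rho(i0)$ matching $u(i0)$, we produce an element of the fibre over $\rho(i1)$, check that it reduces to $u_{(i1)}$ when $\varphi = 1_\FF$, and verify the uniformity equation
\[
(\scomp(I,i,\rho,\varphi,u,a_0))\,f \;=\; \scomp(J,j,\rho(f,i/j),\varphi f, u(f,i/j), a_0 f).
\]

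For $\sNN$, which is discrete, a partial element and a compatible base must uniformly be of the form $\0$ or a successor; composition is defined by structural recursion exactly as in the syntactic clauses, and uniformity is immediate because restriction commutes with these constructors. For the semantic path type $\sPath_A(u,v)$, we set $\scomp^i(\sPath_A(u,v))[\varphi \mapsto p]\,p_0$ to be the name-abstraction in a fresh $j$ of $\scomp_A^i[\varphi \mapsto p\,j,\,(j{=}0) \mapsto u,\,(j{=}1) \mapsto v]\,(p_0\,j)$, which is well-defined because of the boundary conditions on $p$ and $p_0$. For $\Sigma$-types, we first build the filler $v = \Comp^i A[\varphi \mapsto (w.1)]\,(w_0.1)$ of the first component using the filling operation of \sect{sec:kan-filling}, then compose the second component in $B[v]$, and pair the two; uniformity for $\Sigma$ reduces to uniformity for $\scomp_A$ and $\scomp_B$ because $\Comp$ is itself built from $\scomp$ with a connection.

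The main case is $\Pi_\Gamma(A,B)$, since the composition at $(i1)$ must define an element of a function space, \ie{}, a family indexed by $f \colon J \to (I,i)(i1)$ and $u \in A(\rho(i1)f)$. Following the syntactic recipe, one applies the domain's filler backwards along $i$ to transport $u$ down to an element $v(i0) \in A(\rho(i0)f)$ while recording the behaviour along $i$; then one composes in $B[v]$ using the partial element $\mu\,v$ and base $\lambda_0\,v(i0)$. To define this as a genuine family in $\Ter(\yoneda I; \Pi(A,B)\rho(i1))$, one defines the value at an arbitrary $f$ and $u$ by re-instantiating the whole construction after the substitution $f$, which is exactly what allows the uniformity equation to hold by construction: the only ingredients are backward filling in $A$ and composition in $B$, and both are themselves uniform by assumption.

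The main obstacle is precisely the verification of the uniformity equation for $\Pi$ and $\Sigma$, because the auxiliary fillers $\Comp$ are defined via $\scomp$ applied after a connection substitution $i/(i\wedge j)$; one must check that commuting such substitutions with an arbitrary $f \colon J \to I$ and a fresh renaming $i/j$ produces the same result on both sides of the uniformity equation, using that $\CC$ is the Kleisli category of the de~Morgan monad so that connection substitutions are morphisms of $\CC$. Stability under substitution $\sigma \colon \Delta \to \Gamma$ is then an immediate consequence of uniformity, because $(A\sigma)\rho = A(\sigma\rho)$ and the defining clauses reference $\rho$ only through $A$ and $B$. Finally, one verifies that for $\varphi = 1_\FF$ each constructed composition reduces to $u_{(i1)}$, which follows by inspection of the clauses since each uses $u$ itself as the value on $\varphi$. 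This yields the desired extension of the \cwf{} structure to $\FTy$.
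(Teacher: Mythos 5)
Your proposal is correct and follows essentially the same route as the paper: the paper's proof is itself only a sketch, defining $\scomp_{A\sigma}$ by reindexing along $\sigma$ and stating that the type formers are treated analogously to their syntactic composition clauses from \sect{sec:compositions}, which is exactly what you elaborate (including the uniformity checks and the $\varphi=1_\FF$ case). Your treatment of $\sPath$ is extra but consistent with the paper's remark following the theorem.
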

\begin{proof}
  For example, if $A \in \FTy (\Gamma)$ and $\sigma \colon \Delta \to
  \Gamma$, we set
  \[
  \scomp_{A \sigma} (I,i,\rho,\varphi,u,a_0) = \scomp_A (I,i,\sigma
  \rho,\varphi,u,a_0)
  \]
  as the composition structure on $A \sigma$ in $\FTy(\Delta)$.  Type
  formers are treated analogously to their syntactic counterpart given
  in \sect{sec:compositions}. Note that one also has to check that all
  equations between types are also preserved by their associated
  composition structures.
\end{proof}

Note that we can also, like in the syntax, define a composition
structure on $\sPath_A (u,v)$ given that $A$ has one.

\subsubsection*{Semantic glueing}

Next we will give a semantic counterpart to the $\Glue$ construction.
To define the semantic glueing as an element of $\Ty(\Gamma)$ it is
not necessary that the given types have composition structures or that
the functions are equivalences; this is only needed later to give the
composition structure.  Assume $\varphi \in \Ter(\Gamma;\FF)$, $T \in
\Ty (\Gamma,\varphi)$, $A \in \Ty (\Gamma)$, and $w \in
\Ter(\Gamma,\varphi; T \to A \iota)$ (where $A \to B$ is $\Pi
(A,B\pp)$).
\begin{definition}
  \label{def:semantic-glue}
  The \emph{semantic glueing} $\sGlue_\Gamma (\varphi,T,A,w) \in \Ty
  (\Gamma)$ is defined as follows. For $\rho \in \Gamma (I)$, we let
  $u \in \sGlue (\varphi,T,A,w) \rho$ if{f} either
  \begin{itemize}
  \item $u \in T \rho$ and $\varphi \rho = 1_\FF$; or
  \item $u = \sglue (\varphi \rho, t, a)$ and $\varphi \rho \neq
    1_\FF$, where $t \in \Ter (\yoneda I, \varphi \rho; T \rho)$ and
    $a \in \Ter(\yoneda I; A \rho)$ such that $\app (w \rho,t) = a
    \iota \in \Ter (\yoneda I, \varphi \rho; A \rho \iota)$.
  \end{itemize}
  For $f \colon J \to I$ we define the restriction $u f$ of $u \in
  \sGlue (\varphi,T,A,w)$ to be given by the restriction of $T \rho$
  in the first case; in the second case, i.e., if $\varphi \rho \neq
  1_\FF$, we let $u f = \sglue (\varphi \rho, t, a)f = t_f \in T \rho
  f$ in case $\varphi \rho f = 1_\FF$, and otherwise $u f = \sglue
  (\varphi \rho f, t f, a f)$.
\end{definition}

Here $\sglue$ was defined as a constructor; we extend $\sglue$ to any
$t \in \Ter (\yoneda I; T \rho)$, $a \in \Ter (\yoneda I; A \rho)$
such that $\app (w \rho,t) = a$ (so if $\varphi \rho = 1_\FF$) by
$\sglue (1_\FF, t, a) = t_{\id_I}$.  This way any element of $\sGlue
(\varphi,T,A,w) \rho$ is of the form $\sglue (\varphi \rho, t, a)$ for
suitable $t$ and $a$, and restriction is given by $(\sglue (\varphi
\rho, t, a)) f = \sglue (\varphi \rho f, t f, a f)$.  Note that we get
\begin{equation}
  \label{eq:sglue}
  \sGlue_\Gamma (1_\FF, T, A, w) = T \text{ and } %
  (\sGlue_\Gamma (\varphi, T, A, w)) \sigma = %
  \sGlue_\Delta (\varphi \sigma, T \sigma, A \sigma, w \sigma)
\end{equation}
for $\sigma \colon \Delta \to \Gamma$.  We define $\sugl(\varphi,w)
\in \Ter (\Gamma.\sGlue (\varphi,T,A,w); A \pp)$ by
\begin{align*}
  \sugl (\varphi,w) (\rho, t) &= \app(w \rho, t)_{\id_I} \in A \rho
  &&\text{whenever }\varphi \rho = 1_\FF, \text{ and}
  \\
  \sugl (\varphi,w) (\rho, \sglue (\varphi,t,a)) &= a &&
  \text{otherwise,}
\end{align*}
where $\rho \in \Gamma (I)$.

\begin{definition}
  \label{def:equiv-structure}
  For $A, B \in \Ty (\Gamma)$ and $w \in \Ter (\Gamma; A \to B)$ an
  \emph{equivalence structure} for $w$ is given by the following
  operations such that for each
  \begin{itemize}
  \item $\rho \in \Gamma (I)$,
  \item $\varphi \in \FF (I)$,
  \item $b \in B \rho$, and
  \item partial elements $a$ of $A\rho$ and $\omega$ of $\sPath_B(\app
    (w \rho,a),b \iota) \rho$ with extent $\varphi$,
  \end{itemize}
  we are given
  \[
  \sequiv_0 (\rho,\varphi,b,a,\omega) \in A \rho, \text{ and a path }
  \sequiv_1 (\rho,\varphi,b,a,\omega) \text{ between }\app (w \rho,
  \sequiv_0 (\rho,\varphi,b,a,\omega)) \text{ and }b
  \]
  such that $\sequiv_0 (\rho,\varphi,b,a,\omega) \iota = a$,
  $\sequiv_1 (\rho,\varphi,b,a,\omega) \iota = \omega$ (where $\iota
  \colon \yoneda I, \varphi \to \yoneda I$) and for any $f \colon J
  \to I$ and $\nu = 0,1$:
  \[
  (\sequiv_\nu (\rho,\varphi,b,a,\omega)) f = \sequiv_\nu (\rho f,\varphi
  f,b f,a f,\omega f).
  \]
\end{definition}

Following the argument in the syntax we can use the equivalence
structure to explain a composition for $\sGlue$.
\begin{theorem}
  \label{thm:gluefibrant}
  If $A \in \FTy (\Gamma)$, $T \in \FTy(\Gamma,\varphi)$, and we have
  an equivalence structure for $w$, then we have a composition
  structure for $\sGlue(\varphi,T,A,w)$ such that the
  equations~\eqref{eq:sglue} also hold for the respective composition
  structures.
\end{theorem}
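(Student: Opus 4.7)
The plan is to transcribe the syntactic composition algorithm from Section~\ref{sec:composition-glueing} step by step into the semantic setting, and then verify the uniformity and boundary conditions required by Definition~\ref{def:comp-struct}. Fix data $I$, $i \notin I$, $\rho \in \Gamma(I,i)$, $\psi \in \FF(I)$, a partial element $b$ of $\sGlue(\varphi,T,A,w)\rho$ of extent $\psi$, and a matching base $b_0 \in \sGlue(\varphi,T,A,w)\rho(i0)$. Set $\delta = \forall i.(\varphi\rho) \in \FF(I)$, using the quantifier elimination on $\FF$.

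The construction then proceeds by computing, in order: the partial element $a = \sugl(\varphi,w)(b)$ of $A\rho$ with extent $\psi$, and the base $a_0 = \sugl(\varphi,w)(b_0)$ in $A\rho(i0)$; then $a_1' = \scomp_A(I,i,\rho,\psi,a,a_0) \in A\rho(i1)$; then, on the restricted extent $\delta$, where both $T$ and $w$ are defined on the full line, $t_1' = \scomp_T(I,i,\rho,\psi,b,b_0)$ together with a path $\omega$ produced in the manner of Lemma~\ref{pres} (i.e., by a single further composition in $A$) connecting $a_1'$ with $\app(w\rho(i1),t_1')$. Apply the equivalence structure at $w\rho(i1)$ on extent $\varphi\rho(i1)$ with the compatible pair of systems $[\delta \mapsto (t_1',\omega),\,\psi \mapsto (b(i1),\lam{j}{a_1'})]$ to obtain $(t_1,\alpha)$. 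Close with one further composition in $A$ defining $a_1$ that agrees with $\alpha$ on $\varphi\rho(i1)$ and with $a(i1)$ on $\psi$, and declare $\scomp_{\sGlue(\varphi,T,A,w)}(I,i,\rho,\psi,b,b_0) = \sglue(\varphi\rho(i1),t_1,a_1)$, which lies in $\sGlue(\varphi,T,A,w)\rho(i1)$ by Definition~\ref{def:semantic-glue}.

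The remaining work is bookkeeping. Uniformity amounts to checking that for any $f\colon J\to I$ and fresh $j$, each intermediate object commutes with the restriction $(f,i=j)$; this follows from the naturality clauses of $\scomp_A$, $\scomp_T$, the semantic $\pres{}$ analogue, and $\sequiv$, together with the fact that $\forall i$ commutes with renamings not mentioning $i$. For the $1_\FF$-equation, when $\psi = 1_\FF$ all intermediate compositions collapse via the $1_\FF$-equations for $\scomp_A$, $\scomp_T$, and $\sequiv$, and the construction returns $b(i1)$. For the substitution equation in~\eqref{eq:sglue}, the entire construction is defined pointwise in $\rho$ and hence commutes with any $\sigma\colon \Delta\to\Gamma$, matching the reduct on the right-hand side of~\eqref{eq:sglue}.

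The main obstacle I anticipate is verifying that the two systems fed to $\sequiv$ are compatible on the overlap $\delta\wedge\psi$: there $t_1'$ restricts to $b(i1)$ by the $1_\FF$-equation of $\scomp_T$, and $\omega$ degenerates to the constant path at $a_1' = \app(w\rho(i1),b(i1))$ because the $\pres{}$ analogue trivializes when $\psi = 1_\FF$. This is precisely the role $\pres{}$ plays in the syntactic proof, and it is what forces $\omega$ to be constructed via a composition in $A$ rather than being read off directly from $t_1'$.
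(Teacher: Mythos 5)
Your proposal is correct and is essentially the paper's own argument: the paper proves Theorem~\ref{thm:gluefibrant} precisely by transferring the syntactic algorithm of \sect{sec:composition-glueing} (detailed in Appendix~\ref{appendix:composition-glueing}) to the semantics---unglue, compose in $A$, compose in $T$ and apply the $\pres{}$-style composition on the extent $\forall i.\varphi$, invoke the equivalence structure with the two compatible systems, perform a final correcting composition in $A$, and $\sglue$---and then checks uniformity and the $1_\FF$-equation pointwise, exactly as you outline, including your observation that compatibility on $\delta\wedge\psi$ comes from the boundary conditions of the compositions and of $\pres{}$. The only point you leave implicit is the first half of~\eqref{eq:sglue} at the level of composition structures, namely that for $\varphi = 1_\FF$ the constructed composition coincides with $\scomp_T$; this follows from the boundary condition $\sequiv_0(\dots)\iota$ on the $\delta$-part together with the convention $\sglue(1_\FF,t,a) = t_{\id}$, which is the final check carried out at the end of Appendix~\ref{appendix:composition-glueing}.
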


\subsubsection*{Semantic universes}

Assuming a Grothendieck universe of small sets in our ambient set
theory, we can define $A \in \Ty_0 (\Gamma)$ if{f} all $A \rho$ are
small for $\rho \in \Gamma(I)$; and $A \in \FTy_0 (\Gamma)$ if{f} $A
\in \Ty_0 (\Gamma)$ when forgetting the composition structure of $A$.

\begin{definition}
  \label{def:semantic-universe}
  The semantic universe $\UU$ is the cubical set defined by $\UU (I) =
  \FTy_0 (\yoneda I)$; restriction along $f \colon J \to I$ is simply
  substitution along $\yoneda f$.
\end{definition}

We can consider $\UU$ as an element of $\Ty (\Gamma)$.  As such we
can, as in the syntactic counterpart, define a composition structure
on $\UU$ using semantic glueing, so that $\UU \in \FTy (\Gamma)$.
Note that semantic glueing preserves smallness.

For $T \in \Ter (\Gamma; \UU)$ we can define decoding $\El T \in
\FTy_0 (\Gamma)$ by $(\El T) \rho = (T \rho) \, \id_I$ and likewise for
the composition structure.  For $A \in \FTy_0 (\Gamma)$ we get its
code $\code A \in \Ter (\Gamma;\UU)$ by setting $\code A \rho \in
\FTy_0 (\yoneda I)$ to be given by the sets $(\code A \rho) f = A
(\rho f)$ and likewise for restrictions and composition structure.
These operations satisfy $\El \code A = A$ and $\code {\El T} = T$.

\subsection{Interpretation of the syntax}
\label{sec:interpretation}

Following \cite{Streicher91} we define a partial interpretation
function from raw syntax to the \cwf\ with fibrant types given in the
previous section.

To interpret the universe rules \`a la Russell we assume two
Grothendieck universes in the underlying set theory, say \emph{tiny}
and \emph{small} sets.  So that any tiny set is small, and the set of tiny
sets is small.  For a cubical set $X$ we define $\FTy_0 (X)$ and
$\FTy_1 (X)$ as in the previous section, now referring to tiny and
small sets, respectively.  We get semantic universes $\UU_i (I) =
\FTy_i (\yoneda I)$ for $i=0,1$; we identify those with their lifts to
types.  As noted above, these lifts carry a composition structure, and
thus are fibrant.  We also have $\UU_0 \subseteq \UU_1$ and thus $\Ter
(X;\UU_0) \subseteq \Ter (X;\UU_1)$.  Note that coding and decoding
are, as set-theoretic operations, the same for both universes.  We get
that $\code {\UU_0} \in \Ter (X;\UU_1)$ which will serve as the
interpretation of $\U$.

In what follows, we define a partial interpretation function of raw
syntax: $\den{\Gamma}$, $\den{\Gamma;t}$, and $\den{\Delta;\sigma}$ by
recursion on the raw syntax.  Since we want to interpret a universe
\`a la Russell we cannot assume terms and types to have different
syntactic categories.  The definition is given
below and should be read such that the interpretation is defined
whenever all interpretations on the right-hand sides are defined
\emph{and} make sense; so, e.g., for
$\den{\Gamma}.\El{\den{\Gamma;A}}$ below, we require that
$\den{\Gamma}$ is defined and a cubical set, ${\den{\Gamma;A}}$ is
defined, and $\El{\den{\Gamma;A}} \in \FTy (\den\Gamma)$.  The
interpretation for raw contexts is given by:
\begin{align*}
  \den{\emptyctxt} &= \top %
  & %
  \den{\Gamma, x : A} &= \den{\Gamma}.\!\El{\den{\Gamma; A}} &&
  \text{if } x \notin \dom(\Gamma)
  \\
  \den{\Gamma,\varphi} &= \den{\Gamma},\den{\Gamma;\varphi}
  & %
  \den{\Gamma, i : \II} &= \den{\Gamma}.\II && \text{if } i
  \notin \dom(\Gamma)
\end{align*}
where $\top$ is a terminal cubical set and in the last equation $\II$
is considered as an element of $\Ty (\den\Gamma)$.  When defining
$\den{\Gamma;t}$ we require that $\den{\Gamma}$ is defined and a
cubical set; then $\den{\Gamma;t}$ is a (partial) family of sets
$\den{\Gamma;t} (I,\rho)$ for $I\in \CC$ and $\rho \in \den{\Gamma}
(I)$ (leaving $I$ implicit in the definition).  We define:
\begingroup
\allowdisplaybreaks
\begin{align*}
  \den{\Gamma;\U} &= \code {\UU_0} \in \Ter (\den{\Gamma};\UU_1)
  \\
  \den{\Gamma;\NN} &= \code {\sNN} \in \Ter (\den{\Gamma};\UU_0)
  \\
  \den{\Gamma; (x:A) \to B} &= %
  \code {\Pi_{\den{\Gamma}} (\El {\den{\Gamma;A}},\El {\den{\Gamma,
        x : A; B}})}
  \\
  \den{\Gamma; (x:A) \times B} &= %
  \code {\Sigma_{\den{\Gamma}} (\El {\den{\Gamma;A}},\El
    {\den{\Gamma, x : A; B}})}
  \\
  \den{\Gamma;\Path~A~a~b} &= %
  \code {\sPath^{\den{\Gamma}}_{\El
      {\den{\Gamma;A}}}(\den{\Gamma;a},\den{\Gamma;b})}
  \\
  \den{\Gamma;\Glue~[\varphi \mapsto (T,\myeq{})]~A} &= \code
  {\sGlue_{\den{\Gamma}} (\den{\Gamma;\varphi},
    \El{\den{\Gamma,\varphi;T}},\El{\den{\Gamma;A}},
    \den{\Gamma,\varphi;\myeq{}})}
  \\
  \den{\Gamma; \lambda x : A. t} &= %
  \lambda_{\den{\Gamma};\El {\den{\Gamma;A}}} (\den{\Gamma,x : A; t})
  \\
  \den{\Gamma; t ~ u} &= \app(\den {\Gamma;t}, \den {\Gamma; u})
  \\
  \den{\Gamma; \pabs i t} &= \spabs_{\den{\Gamma}} \den{\Gamma,
    i\colon \II; t}
  \\
  \den{\Gamma; t ~ r} &= \den{\Gamma;t} \den{\Gamma;r}
\end{align*}%
\endgroup
where for path application, juxtaposition on the right-hand side is
semantic path application.  In the case of a bound variable, we assume
that $x$ (respectively $i$) is a \emph{chosen} variable fresh for
$\Gamma$; if this is not possible the expression is undefined.
Moreover, all type formers should be read as those on fibrant types,
i.e., also defining the composition structure.  In the case of
$\sGlue$, it is understood that the function part, i.e., the fourth
argument of $\sGlue$ in Definition~\ref{def:semantic-glue} is $\pp
\circ \den{\Gamma,\varphi;\myeq{}}$ and the required (by
Theorem~\ref{thm:gluefibrant}) equivalence structure is to be
extracted from $\qq \circ \den{\Gamma,\varphi;\myeq{}}$ as in
\sect{sec:eq-operation}.  In virtue of the remark in
\sect{sec:presheaf-semantics} we don't need type annotations to
interpret applications.  Note that coding and decoding tacitly refer
to $\den{\Gamma}$ as well.  For the rest of the raw terms we also
assume we are given $\rho \in \den{\Gamma}(I)$.  Variables are
interpreted by:
\begin{align*}
  \den{\Gamma, x :A; x} \rho &= \qq (\rho) %
  & %
  \den{\Gamma, x :A; y} \rho &= \den {\Gamma; y} (\pp (\rho))%
  & %
  \den{\Gamma, \varphi; y} \rho &= \den{\Gamma;y} \rho %
\end{align*}
These should also be read to include the case when $x$ or $y$ are name
variables; if $x$ is a name variable, we require $A$ to be $\II$.  The
interpretations of $\den{\Gamma;r} \rho$ where $r$ is not a name and
$\den{\Gamma;\varphi} \rho$ follow inductively as elements of $\II$
and $\FF$, respectively.


Constants for dependent sums are interpreted by:
\begin{align*}
  \den{\Gamma; (t,u)} \rho &= (\den {\Gamma;t} \rho, \den {\Gamma;u}
  \rho)
  &
  \den{\Gamma; t.1} \rho &= \pp (\den{\Gamma; t} \rho)
  &
  \den{\Gamma; t.2} \rho &= \qq (\den{\Gamma; t} \rho)
\end{align*}
Likewise, constants for $\NN$ will be interpreted by their semantic
analogues (omitted).
The interpretations for the constants related to glueing are
\begin{align*}
  \den{\Gamma; \glue \, [\varphi \mapsto t] \, u} \rho &= %
  \sglue (\den{\Gamma;\varphi} \rho, \den{\Gamma, \varphi; t}
  \hat\rho, \den{\Gamma;u} \rho)
  \\
  \den{\Gamma; \ugl \, [\varphi \mapsto \myeq] \, u} \rho &= %
  \sugl (\den{\Gamma;\varphi}, \pp \circ \den{\Gamma;\myeq}) (\rho,
  \den{\Gamma; u} \rho)
\end{align*}
where $\den{\Gamma, \varphi; t} \hat\rho$ is the family assigning
$\den{\Gamma, \varphi; t} (\rho f)$ to $J \in \CC$ and $f \colon J \to
I$ (and $\rho f$ refers to the restriction given by $\den{\Gamma}$
which is assumed to be a cubical set). Partial elements are
interpreted by
\[
\den{\Gamma; [ \; \varphi_1 \hmapsto u_1, \dots, \varphi_n \hmapsto
  u_n \; ]} \rho = \den{\Gamma,\varphi_i;u_i} \rho \qquad \text{if }
\den{\Gamma;\varphi_i} \rho = 1_\FF,
\]
where for this to be defined we additionally assume that all
$\den{\Gamma,\varphi_i;u_i}$ are defined and
$\den{\Gamma,\varphi_i;u_i} \rho' = \den{\Gamma,\varphi_j;u_j} \rho'$
for each $\rho' \in \den{\Gamma} (I)$ with $\den{\Gamma;\varphi_i
  \land \varphi_j} \rho' = 1_\FF$.

Finally, the interpretation of composition is given by
\[
\den{\Gamma;\comp^i~A~[ \varphi \mapsto u ]~a_0} \rho = %
\scomp_{\El {\den{\Gamma, i : \II;A}}} (I,j,\rho',\den{\Gamma;\varphi}
\rho, \den{\Gamma,\varphi,i \colon
  \II; u} \rho' , \den{\Gamma;a_0} \rho)
\]
if $i \notin \dom(\Gamma)$, and where $j$ is fresh and $\rho' = (\rho
\deg_j, i=j)$ with $\deg_j \colon I,j \to I$ induced from the
inclusion $I \subseteq I,j$.

The interpretation of raw substitutions $\den{\Delta;\sigma}$
is a (partial) family of sets $\den{\Delta;\sigma} (I,\rho)$
for $I\in \CC$ and $\rho \in \den{\Delta} (I)$.  We set
\begin{align*}
  \den{\Delta;()} \rho &= *, & %
  \den{\Delta; (\sigma, x / t)} \rho &= %
  (\den{\Delta;\sigma} \rho, \den{\Delta;t} \rho) \quad \text{if } x
  \notin \dom(\sigma),
\end{align*}
where $*$ is the unique element of $\top (I)$.  This concludes the
definition of the interpretation of syntax.
\medskip

In the following $\alpha$ stands for either a raw term or raw
substitution.  In the latter case, $\alpha \sigma$ denotes composition
of substitutions.
\begin{lemma}
  \label{lem:face-weakening}
  Let $\Gamma'$ be like $\Gamma$ but with some $\varphi$'s inserted,
  and assume both $\den{\Gamma}$ and $\den{\Gamma'}$ are defined;
  then:
  \begin{enumerate}
  \item $\den{\Gamma'}$ is a sub-cubical set of $\den{\Gamma}$;
  \item if $\den{\Gamma;\alpha}$ is defined, then so is
    $\den{\Gamma';\alpha}$ and they agree on $\den{\Gamma'}$.
  \end{enumerate}
\end{lemma}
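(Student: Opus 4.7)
The plan is to proceed by induction on the number of inserted restrictions, reducing first to the case where $\Gamma' = \Gamma_0,\varphi,\Gamma_1$ with $\Gamma = \Gamma_0,\Gamma_1$. By iterating this reduction, it suffices to handle a single insertion, after which a secondary induction on the tail $\Gamma_1$ moves the insertion to the top of the context.

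For part (1), I would unfold the definition. At the base step $\Gamma' = \Gamma_0,\varphi$ one has $\den{\Gamma_0,\varphi} = \den{\Gamma_0},\den{\Gamma_0;\varphi}$, which is a sub-cubical set of $\den{\Gamma_0}$ by the definition of the restriction operation $(-,\psi)$ on semantic contexts. The inductive step, extending the tail by $x:A$, by $i:\II$, or by a further $\psi$, follows because each context extension is functorial in its prefix and preserves inclusions: if $\Delta' \hookrightarrow \Delta$ is an inclusion of cubical sets and $A \in \Ty(\Delta)$ restricts to $A|_{\Delta'} \in \Ty(\Delta')$, then $\Delta'.A|_{\Delta'} \hookrightarrow \Delta.A$; likewise for $\Delta'.\II \hookrightarrow \Delta.\II$ and for a further restriction $\psi$.

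For part (2), I would induct on the raw expression $\alpha$. The crucial observation, which is the semantic content of the remark in \sect{sec:presheaf-semantics} following Aczel, is that every clause in the definition of $\den{\Gamma;\alpha}\rho$ computes the value from $\rho$ using only set-theoretic operations together with recursive interpretations of sub-expressions, and does not reference $\Gamma$ as a global object. Consequently, whenever $\rho \in \den{\Gamma'}(I) \subseteq \den{\Gamma}(I)$ and the defining clause for $\den{\Gamma;\alpha}\rho$ is applicable, the analogous clause is applicable for $\den{\Gamma';\alpha}\rho$ and yields the same value. The variable-lookup clauses make this particularly transparent since the rule $\den{\Gamma_0,\psi;y}\rho = \den{\Gamma_0;y}\rho$ shows that inserting a restriction only adds a pass-through case which, by the inductive hypothesis, matches the interpretation in the larger context.

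The main obstacle is the bookkeeping for type and term formers that recursively interpret sub-expressions in extended contexts, such as $\Pi$, $\Sigma$, $\sGlue$, $\pabs{i}{t}$, and $\comp^i$. For these I must verify that the inserted $\varphi$'s extend coherently through the context extensions $(-,x:A)$ and $(-,i:\II)$ used in the recursive calls; this reduces to the compatibility $(\Gamma_0,\varphi).A \hookrightarrow (\Gamma_0.A),\varphi\pp$ and the analogous statement for $\II$, which in turn make the inductive hypothesis applicable to each sub-interpretation. Once this coherence is checked, compatibility with the remaining ambient operations (projections, restrictions along face and strict maps, and the composition and equivalence structures attached to fibrant types) follows mechanically, since these are all invariant under restriction of the indexing cubical set, and the equality of the two interpretations on $\den{\Gamma'}$ then falls out for every syntactic constructor.
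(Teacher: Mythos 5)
The paper states this lemma without proof, and your overall strategy---induction on the raw syntax, exploiting that every clause computes $\den{\Gamma;\alpha}\rho$ from $\rho$ by purely set-theoretic, local operations plus recursive calls, with the clause $\den{\Gamma_0,\psi;y}\rho=\den{\Gamma_0;y}\rho$ handling the inserted restrictions---is certainly the intended one. However, as organized, your argument has a genuine circularity in part~(1). At the inductive step for a tail extension $x:A$ you appeal to ``$A\in\Ty(\Delta)$ restricts to $A|_{\Delta'}$'', i.e.\ to the fact that the semantic type used to form $\den{\Gamma'_0,x:A}=\den{\Gamma'_0}.\El{\den{\Gamma'_0;A}}$ is the restriction to $\den{\Gamma'_0}$ of $\El{\den{\Gamma_0;A}}$. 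But that identification is not part of the definition of the interpretation: $\den{\Gamma'_0;A}$ is computed afresh over the smaller context, and the assertion that it is defined and agrees with $\den{\Gamma_0;A}$ on $\den{\Gamma'_0}$ is precisely an instance of part~(2) for the raw type $A$ (similarly for an inserted face formula $\psi$ in the tail). Conversely, part~(2) cannot even be stated for a pair $(\Gamma,\Gamma')$ without part~(1) for that pair, and its binder cases ($\lambda$, path abstraction, $\Pi$, $\Sigma$, $\sGlue$, $\comp^i$) need $\den{\Gamma',x:A}\subseteq\den{\Gamma,x:A}$, which again rests on~(2) for the subexpression $A$. So the two parts cannot be proved in the order (1) then (2); they must be established by one simultaneous induction (say on the total size of the raw context and expression involved), where the inclusion of extended contexts is obtained inline from the induction hypothesis on the type subexpression, together with the compatibility you cite, $(\den{\Gamma_0}.\El{\den{\Gamma_0;A}}),\varphi\pp=(\den{\Gamma_0},\varphi).(\El{\den{\Gamma_0;A}}\iota)$. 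With that reorganization your analysis of part~(2)---locality of the clauses, agreement of composition structures because universe elements carry their own composition, and the pass-through variable clause---goes through essentially unchanged.

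A secondary point: the reduction to a single inserted $\varphi$ is not free. To iterate it you need the interpretation of each intermediate context (with only some of the insertions) to be defined, which is not among your hypotheses; its definedness is itself obtained from part~(2) applied to the types occurring after the insertion points. It is simpler, and costs nothing in the simultaneous induction, to treat an arbitrary set of insertions at once, as the lemma is stated.
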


\begin{lemma}[Weakening]
  \label{lem:weakening}
  Let $\den{\Gamma}$ be defined.
  \begin{enumerate}
  \item If $\den{\Gamma,x:A,\Delta}$ is defined, then so is
    $\den{\Gamma,x:A,\Delta;x}$ which is moreover the projection to
    the $x$-part.\footnote{E.g., if $\Gamma$ is $y:B,z:C$, the
      projection to the $x$-part maps $(b,(c,(a,\delta)))$ to $a$, and
      the projection to the $\Gamma$-part maps $(b,(c,\delta))$ to
      $(b,c)$.}
  \item If $\den{\Gamma,\Delta}$ is defined, then so is
    $\den{\Gamma,\Delta;\id_\Gamma}$ which is moreover the projection
    to the $\Gamma$-part.
  \item If $\den{\Gamma,\Delta}$, $\den{\Gamma;\alpha}$ are defined
    and the variables in $\Delta$ are fresh for $\alpha$, then
    $\den{\Gamma,\Delta;\alpha}$ is defined and for $\rho \in
    \den{\Gamma,\Delta} (I)$:
    \[
    \den{\Gamma;\alpha} (\den{\Gamma,\Delta; \id_\Gamma} \rho) =
    \den{\Gamma,\Delta;\alpha} \rho
    \]
  \end{enumerate}
\end{lemma}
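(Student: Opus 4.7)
The three parts will be proved in sequence: first part~(1) by a direct induction on~$\Delta$, then part~(2) as a corollary of part~(1), and finally part~(3) by structural induction on the raw expression~$\alpha$, invoking both earlier parts.

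For part~(1), I induct on the length of~$\Delta$. The base case $\Delta = \emptyctxt$ is exactly the defining clause $\den{\Gamma, x:A; x} \rho = \qq(\rho)$, which is the projection to the $x$-part. For step cases of the form $\Delta, y:B$ or $\Delta, j:\II$ (with $y,j\neq x$ by the distinct-name discipline on contexts) the variable clause gives $\den{\Gamma, x:A, \Delta, y:B; x}\rho = \den{\Gamma, x:A, \Delta; x}(\pp\rho)$, and the induction hypothesis says the inner lookup is the $x$-projection, so composing with $\pp$ is still the $x$-projection of~$\rho$. For $\Delta, \varphi$ the interpretation of the variable is unchanged and $\den{\Gamma, x:A, \Delta, \varphi}$ is a subset of $\den{\Gamma, x:A, \Delta}$ by Lemma~\ref{lem:face-weakening}, so the claim follows.

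For part~(2), read $\id_\Gamma$ as the raw substitution $(x_1/x_1, \dots, x_n/x_n)$ enumerating the term and name variables of~$\Gamma$ in order (face restrictions contribute no component). Unfolding the substitution-extension clause of the interpretation, $\den{\Gamma, \Delta; \id_\Gamma}\rho$ is a nested pair whose $k$-th atomic entry is $\den{\Gamma, \Delta; x_k}\rho$; splitting $\Gamma$ at each $x_k$ and applying part~(1) shows that each entry is the $x_k$-projection of~$\rho$, so the assembled tuple is the $\Gamma$-part of~$\rho$. Taking $\Delta = \emptyctxt$ in particular yields $\den{\Gamma; \id_\Gamma}\rho = \rho$, which is the identity fact that drives part~(3).

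For part~(3), I induct on the raw expression~$\alpha$. The variable case $\alpha = x_k$ reduces, after rewriting the left-hand side via part~(2), to the claim that looking up $x_k$ in $\Gamma$ at the $\Gamma$-part of~$\rho$ agrees with looking up $x_k$ in $\Gamma, \Delta$ at~$\rho$, which is the same iterated projection computation as in part~(1). Nullary constants ($\NN$, $\0$, $\U$, etc.) and non-binding compound forms (application, pairing and projections, $\suc$, systems, $\comp$, path application, $\glue$, $\ugl$, and substitution extensions $(\sigma, x/t)$) each follow by direct application of the induction hypothesis on subterms. The binding forms ($\Pi$-, $\Sigma$-types, $\lambda$-abstraction, path abstraction $\pabs{i}t$, $\Path$, $\Glue$) require applying the IH in the locally extended context $\Gamma, \Delta, x:A$ (respectively $\Gamma, \Delta, i:\II$ or $\Gamma, \Delta, \varphi$); the freshness of~$\Delta$ for~$\alpha$, together with $\alpha$-renaming, lets the bound name be chosen disjoint from~$\Delta$, and Lemma~\ref{lem:face-weakening} covers the face-restricted subcases. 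The main technical nuisance, and essentially the only non-routine point, is the binder bookkeeping: one has to check simultaneously that the bound variable can be renamed fresh for both $\Gamma$ and $\Gamma, \Delta$, and that the composition structures attached to fibrant type formers are preserved by restriction along $\den{\Gamma, \Delta; \id_\Gamma}$. Both are consequences of the naturality of the semantic constructions in the context, but spelling them out uniformly across every type and term former of the cubical type theory is where most of the actual labor lies.
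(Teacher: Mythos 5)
The paper leaves this lemma without an explicit proof, and your overall route---induction on raw syntax, with the variable lookup and the identity substitution treated first---is certainly the intended one; parts (1) and (2) are fine as you argue them. The genuine problem is in part (3), exactly at the binder cases that you wave through as bookkeeping. Take $\alpha=\lambda x:A.\;t$ (the same happens for $(x:A)\to B$, $(x:A)\times B$, $\Path$, $\Glue$, and $\pabs{i}t$). Unfolding the interpretation, you must show, for $\rho\in\den{\Gamma,\Delta}(I)$, $f\colon J\to I$, and $u\in\El{\den{\Gamma,\Delta;A}}(\rho f)$, that
\[
\den{\Gamma,x:A;\,t}\,\bigl((\rho|_{\Gamma})f,\,u\bigr)=\den{\Gamma,\Delta,x:A;\,t}\,\bigl(\rho f,\,u\bigr),
\]
where $\rho|_{\Gamma}$ denotes the projection to the $\Gamma$-part, and similarly for $A$. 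The displayed equation is a weakening statement in which the extra entries $\Delta$ sit \emph{in the middle} of the context $\Gamma,\Delta,x:A$, compared against $\Gamma,x:A$. This is not an instance of the induction hypothesis you are running, which only covers contexts of the form $\Gamma,\Delta$ with $\Delta$ appended on the right: you cannot take the base to be $\Gamma,x:A$ (the context at hand is $\Gamma,\Delta,x:A$, not $\Gamma,x:A,\Delta$), and you cannot take the appended part to be $\Delta,x:A$ either, since $x$ may occur free in $t$, so $\den{\Gamma;t}$ is not even available. So the step ``apply the IH in the locally extended context $\Gamma,\Delta,x:A$'' does not go through as stated; the real obstacle is not the renaming of bound variables nor the preservation of composition structures, both of which are indeed routine.

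The standard repair is to strengthen the statement before starting the induction: prove, by induction on $\alpha$, that whenever $\Gamma'$ is obtained from $\Gamma$ by inserting variable, name, or restriction declarations at \emph{arbitrary} positions, all fresh for $\alpha$, and $\den{\Gamma}$, $\den{\Gamma'}$, $\den{\Gamma;\alpha}$ are defined, then $\den{\Gamma';\alpha}$ is defined and equals $\den{\Gamma;\alpha}$ precomposed with the evident projection $\den{\Gamma'}\to\den{\Gamma}$ (whose description generalizes your parts (1) and (2)). This is precisely the shape in which the paper states Lemma~\ref{lem:face-weakening} for inserted $\varphi$'s, and for the same reason: under a binder the context grows on the right, so the inserted material is no longer a suffix. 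With this generalized hypothesis the binder case applies the IH to $t$ with the suffix extended by $x:A$ (respectively $i:\II$ or $\varphi$), and the rest of your argument---variables via part (1), non-binding formers by congruence, codes and composition structures preserved under reindexing along the projection---is correct; part (3) as stated then follows by specializing the insertions to a suffix.
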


\begin{lemma}[Substitution]
  \label{lem:subst-ok}
  For $\den{\Gamma}$,$\den{\Delta}$, $\den{\Delta;\sigma}$, and
  $\den{\Gamma;\alpha}$ defined with $\dom(\Gamma) = \dom (\sigma)$
  (as lists), also $\den{\Delta; \alpha \sigma}$ is defined and for
  $\rho \in \den{\Delta}(I)$:
  \[
  \den{\Gamma;\alpha} (\den{\Delta;\sigma} \rho) = \den{\Delta; \alpha
    \sigma} \rho
  \]
\end{lemma}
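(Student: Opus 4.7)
The plan is to prove the equality by induction on the structure of the raw expression $\alpha$, with the case of raw substitutions handled by componentwise application of the term case: if $\alpha$ is $(\sigma', x/t)$ then $\alpha\sigma = (\sigma' \sigma, x/t\sigma)$, and the claim reduces to the IH for $\sigma'$ and for $t$. Throughout, the Weakening Lemma (Lemma~\ref{lem:weakening}) and the Face Weakening Lemma (Lemma~\ref{lem:face-weakening}) do the heavy lifting, allowing us to move interpretations between extended and restricted contexts.

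The variable case is where the hypothesis $\dom(\Gamma) = \dom(\sigma)$ is essential: if $\alpha = x$ with $x$ declared in $\Gamma$, then $\sigma$ contains a corresponding entry $x/t$, so $x \sigma = t$. By the definition of $\den{\Delta;\sigma}$, the $x$-component of $\den{\Delta;\sigma}\rho$ is precisely $\den{\Delta;t}\rho$, and by part~(1) of the Weakening Lemma $\den{\Gamma;x}$ is the projection onto that $x$-component. Cases for closed constants such as $\0$, $\NN$, $\U$, and for constructors that merely recombine subexpressions without binding ($(t,u)$, $t.1$, $t.2$, $t\;u$, path application $t\;r$, $\glue$, $\ugl$, application of $\natrec$) follow immediately from the IH by unfolding the semantic definitions pointwise at $\rho$.

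The main obstacle is the binder cases: $\lambda x:A.\,t$, $(x:A)\to B$, $(x:A)\times B$, $\Path$, $\pabs{i} t$, $\comp^i$, $\Glue$, and systems with a restriction $\varphi$. In each of these the raw substitution first chooses the bound variable fresh for $\sigma$ (permissible since interpretation is fixed up to the choice of a fresh name) and then extends $\sigma$ by $x/x$ (or $i/i$); semantically, we must show that the extended substitution $(\sigma,x/x)\colon \den{\Delta,x:A\sigma}\to\den{\Gamma,x:A}$ is interpreted as $\den{\Delta;\sigma}\circ\pp$ paired with $\qq$. This is exactly the content of parts~(1) and~(3) of the Weakening Lemma: projection to the $\Gamma$-part composed with $\sigma$ on the first factor, and the identity on the fresh last factor. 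Once this is in place, the induction step for $\lambda x:A.\,t$ unfolds $\lambda_{\den{\Delta};\El{\den{\Delta;A\sigma}}}(\den{\Delta,x:A\sigma;t\sigma})$ at an arbitrary $(\rho,u)$, yielding $\den{\Delta,x:A\sigma;t\sigma}(\rho,u)$, which by the IH equals $\den{\Gamma,x:A;t}(\den{\Delta;\sigma}\rho,u)$; the $\Pi$, $\Sigma$, $\Path$, and name-binder cases are entirely analogous.

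The remaining cases involve the restriction operator and composition structures. For a system $[\varphi_1\hmapsto u_1,\ldots]$ we apply the IH to each $\den{\Gamma,\varphi_l;u_l}$, noting that $\den{\Delta;\sigma}\rho$ lands in $\den{\Gamma,\varphi_l}$ exactly when $\den{\Delta;\varphi_l\sigma}\rho = 1_\FF$, which is inductively the same condition; hence well-definedness and the coherence between overlapping $\varphi_i\wedge\varphi_j$ transport across $\sigma$. For $\comp^i\,A\,[\varphi\mapsto u]\,a_0$ we use the extended-substitution argument from the previous paragraph for the bound $i$ together with Lemma~\ref{lem:face-weakening} to push $\sigma$ through the restricted context $\Gamma,\varphi,i:\II$, then invoke the uniformity condition in Definition~\ref{def:comp-struct} (which guarantees $\scomp_{A\sigma} = \scomp_A \circ \sigma$) to conclude. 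The case of $\Glue~[\varphi\mapsto(T,\myeq{})]~A$ is handled similarly, using~\eqref{eq:sglue} to know that semantic glueing commutes with substitution. Finally, coding/decoding between $\Ty$ and $\UU_i$ is invariant under substitution, so the universe case introduces no additional difficulty.
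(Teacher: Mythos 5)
Your proof is correct and is essentially the intended argument: the paper states this lemma without proof, and the standard route is exactly your induction on raw terms/substitutions, resolving variables via $\dom(\Gamma)=\dom(\sigma)$, handling binders by extending $\sigma$ with a fresh $x/x$ (or $i/j$) using Lemmas~\ref{lem:weakening} and~\ref{lem:face-weakening}, and treating $\comp$ and $\Glue$ via the naturality of composition structures and~\eqref{eq:sglue}. The only ingredient you leave implicit is Lemma~\ref{lem:term-ok} (naturality of $\den{\Delta;\sigma}$ under restrictions), which is needed when comparing the families $(w_f)$ pointwise in the $\lambda$/$\Pi$/$\Sigma$ cases, since there one evaluates at $\rho f$ rather than $\rho$.
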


\begin{lemma}
  \label{lem:term-ok}
  If $\den{\Gamma}$ is defined and a cubical set, and
  $\den{\Gamma;\alpha}$ is defined, then $(\den{\Gamma;\alpha} \rho) f
  = \den{\Gamma;\alpha} (\rho f)$.
\end{lemma}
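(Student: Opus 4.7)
The plan is a straightforward structural induction on the raw expression $\alpha$ (mutually for raw terms and raw substitutions), in which each case simply unfolds the relevant clause of the interpretation and invokes naturality of the corresponding semantic operation together with the inductive hypothesis.

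For the base cases, variables are handled by the defining equations $\den{\Gamma,x:A;x}\rho = \qq(\rho)$ and $\den{\Gamma,x:A;y}\rho = \den{\Gamma;y}(\pp(\rho))$: naturality of the set-theoretic projections $\pp,\qq$ on the cubical set $\den{\Gamma}.A$ yields $(\qq\rho)f = \qq(\rho f)$ and $(\pp\rho)f = \pp(\rho f)$, so combining with the inductive hypothesis on $\den{\Gamma;y}$ gives the result. Interpretations of $r:\II$ and $\varphi:\FF$ are obtained by lifting the (natural) substitution action on $\II$ and $\FF$. For the type formers $\Pi,\Sigma,\Path,\sNN,\U,\sGlue$, the interpretation is given by coding via $\code{\cdot}$: by its very definition $\code{A}\rho\, f = A(\rho f)$, so the equation $(\code{A}\rho)f = \code{A}(\rho f)$ is immediate once we know by induction that the semantic arguments feeding into $\code{\cdot}$ are natural. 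Analogously, constructors and destructors such as $(t,u)$, $t.1$, $t.2$, $\lambda x{:}A.\,t$, $t\,u$, $\pabs{i}t$, $t\,r$, $\glue$ and $\ugl$ are defined pointwise by set-theoretic operations whose naturality is exactly part of the semantic definitions in \sect{sec:presheaf-semantics} (e.g.\ the equation $(\pabs{i}w)f = \pabs{j}w(f,i/j)$ for path abstraction, the restriction clause of $\sglue$ and the definition of $\sugl$).

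For partial elements $[\varphi_1\hmapsto u_1,\dots,\varphi_n\hmapsto u_n]$ the claim follows because the interpretation picks the $u_i$ whose face $\varphi_i$ equals $1_\FF$, and restriction preserves the truth of such face formulas (and the side compatibility conditions ensure the choice is independent of $i$). For substitutions, the cases $()$ and $(\sigma,x/t)$ reduce directly to naturality in the components via the inductive hypothesis.

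The one case where a small amount of care is needed is the composition
$\den{\Gamma;\comp^i~A~[\varphi\mapsto u]~a_0}\rho = \scomp_{\El{\den{\Gamma,i:\II;A}}}(I,j,\rho',\den{\Gamma;\varphi}\rho,\den{\Gamma,\varphi,i:\II;u}\rho',\den{\Gamma;a_0}\rho)$,
where $\rho' = (\rho\deg_j, i=j)$ involves a fresh name $j$. Here naturality is not automatic from pointwise set-theory but is exactly the uniformity equation required of a composition structure in Definition~\ref{def:comp-struct}:
\[
\scomp(I,j,\rho',\varphi,u,a_0)f = \scomp(J,k,\rho'(f,j=k),\varphi f,u(f,j=k),a_0 f).
\]
Combining this uniformity law with the inductive hypotheses applied to $\den{\Gamma,i:\II;A}$, $\den{\Gamma,\varphi,i:\II;u}$, $\den{\Gamma;\varphi}$ and $\den{\Gamma;a_0}$, and using that renaming the auxiliary fresh name does not change the value, gives the desired equation. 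Thus the composition case is the main (and only real) obstacle, and it is handled precisely by the uniformity clause that was built into the very definition of composition structure.
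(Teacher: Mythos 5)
Your structural induction on raw syntax is correct and is exactly the intended argument: the paper states Lemma~\ref{lem:term-ok} without proof, treating it as part of the routine Streicher-style verification, and every case indeed reduces either to the naturality built into the pointwise semantic operations (codes, $\pp$, $\qq$, $\app$, $\spabs$, $\sglue$, $\sugl$, systems via preservation of $\varphi\rho=1_\FF$ under restriction plus compatibility) or, in the one non-trivial case, to the uniformity equation of Definition~\ref{def:comp-struct} together with the identification $\rho'(f,j=k)=(\rho f\deg_k,i=k)$. No gaps; your singling out of the $\comp$ case as the only place where naturality is not automatic, being supplied instead by the uniformity clause of the composition structure, matches the design of the semantics.
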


To state the next theorem let us set $\den{\Gamma;\II} = \code{\II}$
and $\den{\Gamma;\FF} = \code{\FF}$ as elements of $\Ty_0
(\den{\Gamma})$.

\begin{theorem}[Soundness]
  \label{thm:sound}
  We have the following implications, and all occurrences of $\den{-}$
  in the conclusions are defined.  In~\eqref{item:sound-ty}
  and~\eqref{item:sound-ty-eq} we allow $A$ to be $\II$ or $\FF$.
  \begin{enumerate}
  \item if $\Gamma \der {}$, then $\den{\Gamma}$ is a cubical set;
  \item if $\Gamma \der A$, then $\den{\Gamma; A} \in \Ter
    (\den{\Gamma}; \UU_1)$;
  \item\label{item:sound-ty} if $\Gamma \der t : A$, then
    $\den{\Gamma; t} \in \Ter (\den{\Gamma}; \El {\den{\Gamma;A}})$;
  \item if $\Gamma \der A = B$, then $\den{\Gamma;A} = \den
    {\Gamma;B}$;
  \item\label{item:sound-ty-eq} if $\Gamma \der a = b : A$, then
    $\den{\Gamma;a} = \den {\Gamma;b}$;
  \item if $\Gamma \der \sigma : \Delta$, then $\den {\Gamma;\sigma}$
    restricts to a natural transformation $\den{\Gamma}\to
    \den{\Delta}$.
  \end{enumerate}
\end{theorem}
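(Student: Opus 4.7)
The plan is to prove all six clauses simultaneously by induction on the derivation of the judgment in question. Mutual induction is necessary because the inference rules intertwine the six forms of judgment---for example, the premise of a term-formation rule may mention a type well-formedness or a type-equality judgment---and because the interpretation function is partial, each inductive step must establish definedness alongside the asserted semantic property.

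The context-formation rules are dispatched directly. For $\emptyctxt \der {}$ we take $\den{\emptyctxt} = \top$; for $\Gamma, x : A \der {}$ the induction hypothesis yields $\den{\Gamma;A} \in \Ter(\den{\Gamma};\UU_1)$, whence $\El{\den{\Gamma;A}} \in \FTy(\den{\Gamma})$, and context comprehension produces a cubical set. The rules for $\Gamma, i : \II$ and $\Gamma, \varphi$ are analogous, using that $\II$ is fibrant and that $\den{\Gamma;\varphi} \in \Ter(\den{\Gamma};\FF)$ carves out a sub-cubical set. For the type and term formers---namely $\Pi$, $\Sigma$, $\NN$, $\Path$, and $\Glue$---I would unfold the relevant interpretation clause and invoke Theorem~\ref{thm:1} (and Theorem~\ref{thm:gluefibrant} in the case of $\Glue$) to see that the semantic counterpart is fibrant; coding then delivers the required element of $\Ter(\den{\Gamma};\UU_i)$. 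The conversion rule uses clause~4 of the theorem applied to the inductive hypothesis. Variables and substitution are handled by Lemmas~\ref{lem:weakening} and~\ref{lem:subst-ok}, and Lemma~\ref{lem:term-ok} supplies the naturality of each interpreted term along restriction maps.

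For the rules involving restriction and systems I would rely on Lemma~\ref{lem:face-weakening} to see that restricting the context to an extent yields a sub-cubical set, and that partial interpretations assemble correctly when their extents overlap. The admissible rules of Lemma~\ref{admissible} have exact semantic counterparts: $\den{\Gamma, \varphi, \psi}$ and $\den{\Gamma, \varphi \wedge \psi}$ agree as sub-cubical sets of $\den{\Gamma}$, $\den{\Gamma, 1_\FF} = \den{\Gamma}$, and the $\forall i$ rule reflects the fact that an extent independent of $i$ factors through the projection $\den{\Gamma}.\II \to \den{\Gamma}$. The interpretation of $\comp^i$ is then matched to Definition~\ref{def:comp-struct} after substituting a fresh name $j$ for $i$, and the stability of $\scomp$ under $\Delta \der \sigma : \Gamma$ is precisely the naturality condition in that definition.

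The main obstacle will be the sheer bookkeeping. Each inductive step must check definedness of every nested interpretation and verify the freshness side conditions that the partial interpretation function imposes on bound names, which requires careful use of $\alpha$-renaming together with the chosen-fresh-name convention. Beyond that, the equality judgments for composition in \sect{sec:composition}, for $\glue$ and $\ugl$ in Figure~\ref{glueing:rules}, and for the universe's closure under $\comp$ in \sect{subsec:compU} must each be matched to the defining equations of the corresponding semantic operations on fibrant types, of $\sGlue$ in Definition~\ref{def:semantic-glue}, and of the equivalence structure in Definition~\ref{def:equiv-structure}. This is essentially a transliteration of the syntactic arguments of Sections~\ref{sec:compositions}, \ref{sec:contr}, \ref{sec:glue}, and~\ref{subsec:compU} into the semantic language, but the number of equations to verify---including the elided congruence and $\eta$-rules---makes it the bulkiest and most error-prone part of the argument.
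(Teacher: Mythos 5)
Your plan---mutual induction on derivations over all six judgment forms, establishing definedness alongside each semantic claim, and discharging the individual rules via Lemmas~\ref{lem:face-weakening}--\ref{lem:term-ok} together with Theorems~\ref{thm:1} and~\ref{thm:gluefibrant} and the semantic composition/glueing/equivalence structures---is exactly the argument the paper intends; the paper states the theorem after setting up precisely those lemmas and leaves the (admittedly bulky) rule-by-rule verification implicit. Your proposal is correct and follows essentially the same route, including the correct identification of where the real work lies (definedness, freshness of bound names, and matching the judgmental equalities for $\comp$, glueing, and the universe to their semantic counterparts).
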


\section{Extensions: identity types and higher inductive types}
\label{sec:extensions}

In this section we consider possible extensions to cubical type
theory. The first is an identity type defined using path types whose
elimination principle holds as a judgmental equality. The
second are two examples of higher inductive types.

\subsection{Identity types}\label{sec:identitytypes}

We can use the path type to represent equalities. Using the
composition operation, we can indeed build a substitution function
$P(a) \rightarrow P(b)$ from any path between $a$ and $b$. However,
since we don't have in general the judgmental equality
$\transport^i~A~a_0 = a_0$ if $A$ is independent of $i$ (which is an
equality that we cannot expect geometrically in general, as shown in
Appendix~\ref{sec-singular}), this substitution function does not need to be the
constant function when the path is constant. This means that, as in
the previous model~\cite{BCH,simonlic}, we don't get an interpretation
of Martin-L\"of identity type~\cite{ML75} with the standard judgmental
equalities.

However, we can define another type which {\em does} give an
interpretation of this identity type following an idea of Andrew Swan.

\subsubsection*{Identity types}

The basic idea of $\Id~A~a_0~a_1$ is to define it in terms of
$\Path~A~a_0~a_1$ but also mark the paths where they are known to be
constant.  Formally, the formation and introduction rules are
\begin{mathpar}
  \inferrule{%
    \Gamma \der A \\ \Gamma \der a_0 : A \\ \Gamma \der a_1 : A
  }%
  { \Gamma \der \Id~A~a_0~a_1}%
  \and %
  \inferrule{%
    \Gamma \der \omega : \Path~A~a_0~a_1[\varphi \mapsto \pabs{i} a_0]
  }%
  {\Gamma \der (\omega,\varphi) : \Id~A~a_0~a_1}
\end{mathpar}
and we can define $\idrefl a = (\refl a, 1_\FF) : \Id~A~a~a$ for $a :
A$.  The elimination rule, given $\Gamma \der a :A$, is
\begin{mathpar}
  \inferrule {%
    \Gamma, x : A, \alpha : \Id~A~a~x \der C\\
    \Gamma \der d : C ( x/ a,\alpha / \idrefl a)\\
    \Gamma \der b : A\\
    \Gamma \der \beta : \Id~A~a~b
  }%
  {\Gamma \der \J_{x,\alpha.C}~d~b~\beta : C (x/b, \alpha / \beta) }
\end{mathpar}
together with the following judgmental equality in case $\beta$ is of
the form $(\omega, \varphi)$
\[
\J~d~b~\beta = \comp^i~C(x/\omega \; i, \alpha / \beta^* (i))~[\varphi
\mapsto d]~d
\]
where $\Gamma, i : \II \der \beta^* (i) : \Id~A~a~(\omega \; i)$ is given by
\[
\beta^*(i) = (\pabs{j} \omega \; (i \land j),\varphi \lor (i=0)).
\]
Note that with this definition we get $\J~d~a~(\idrefl a) = d$ as
desired.

The composition operation for $\Id$ is explained as follows.  Given
$\Gamma, i : \II \der \Id~A~a_0~a_1$, $\Gamma, \varphi, i : \II \der
(\omega,\psi) : \Id~A~a_0~a_1$, and $\Gamma \der (\omega_0,\psi_0) :
(\Id~A~a_0~a_1) (i0) [\varphi \mapsto (\omega (i0), \psi (i0))]$ we
have the judgmental equality
\[
\comp^i~(\Id~A~a_0~a_1)~[\varphi \mapsto
(\omega,\psi)]~(\omega_0,\psi_0) = %
(\comp^i~(\Path~A~a_0~a_1)~[\varphi \mapsto \omega]~\omega_0, \varphi
\land \psi (i1)).
\]

It can then be shown that the types $\Id~A~a~b$ and $\Path~A~a~b$ are
($\Path$)-equivalent. In particular, a type is ($\Path$)-contractible
if, and only if, it is ($\Id$)-contractible. The univalence axiom,
proved in \sect{sec:univalence-axiom} for the $\Path$-type, hence
holds as well for the $\Id$-type.\footnote{This has been formally
  verified using the \Haskell{} implementation:\\
  \url{https://github.com/mortberg/cubicaltt/blob/v1.0/examples/idtypes.ctt}}

\subsubsection*{Cofibration-trivial fibration factorization}

The same idea can be used to factorize an arbitrary map of (not
necessary fibrant) cubical sets $f : A \rightarrow B$ into a
cofibration followed by a trivial fibration. We define a ``trivial
fibration'' to be a first projection from a total space of a
contractible family of types and a ``cofibration'' to be a map that
has the left lifting property against any trivial fibration. For this
we define, for $b : B$, the type $T_f(b)$ to be the type of elements
$[\varphi \mapsto a]$ with $\varphi\der a : A$ and $\varphi \der f~a =
b : B$.
\begin{theorem}
The type $T_f(b)$ is contractible and the map
\[
  A \rightarrow (b : B) \times T_f(b), %
  \qquad %
  a \longmapsto (f~a,[1_\FF \mapsto a])
\]
is a cofibration.
\end{theorem}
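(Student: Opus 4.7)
The type $T_f(b)$ consists of pairs $[\varphi \mapsto a]$ where $\varphi : \FF$ and $a$ is a partial element of $A$ on $\varphi$ satisfying $f\,a = b$ on $\varphi$. The plan is to establish contractibility directly, by exhibiting an explicit center and explicit paths to every other point, and then to derive the lifting property of the cofibration from the contractibility of the fibres of the given trivial fibration via the $\ext$ operation of \sect{sec:contractible-types}.

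For contractibility, I take as center the element $c_0 \in T_f(b)$ determined by the choice $\varphi = 0_\FF$ (the partial element $a$ then being vacuously given); such a $c_0$ is always available, since the side condition $f\,a = b$ is vacuous at $0_\FF$. To any $y = [\varphi \mapsto a] : T_f(b)$ I associate the path
\[
p_y \; = \; \pabs{i} \, [\varphi \wedge (i = 1) \mapsto a],
\]
with $i : \II$ fresh. Its endpoints compute as $p_y \; 0 = [\varphi \wedge 0_\FF \mapsto a] = c_0$ and $p_y \; 1 = [\varphi \wedge 1_\FF \mapsto a] = y$, so that $(c_0, \lambda y.\; p_y)$ inhabits $\isContr \; T_f(b)$. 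Since the formula is closed-form in the data $(\varphi,a)$, it is manifestly uniform in the ambient context, hence stable under restriction along any $h : J \to I$.

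For the cofibration property, consider an arbitrary trivial fibration $w.1 \colon (y : Y) \times C\,y \to Y$ (with a contractibility witness for each fibre $C\,y$) and a commutative square whose left vertical arrow is our map $g(a) = (f\,a,\, [1_\FF \mapsto a])$, whose top edge is $\alpha : A \to (y : Y) \times C\,y$, and whose bottom edge is $\beta : (b : B) \times T_f(b) \to Y$. Given $(b, [\varphi \mapsto a])$, set $y = \beta(b, [\varphi \mapsto a])$. On the extent $\varphi$ one has $f\,a = b$ and $[\varphi \mapsto a] = [1_\FF \mapsto a]$, so commutativity of the square forces $y = (\alpha\,a).1$ there, making $(\alpha\,a).2$ a partial element of $C\,y$ of extent $\varphi$. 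Applying $\ext$ to the contractibility of $C\,y$ and this partial element produces a total $\tilde c : C\,y$, and I define the diagonal filler by $(b, [\varphi \mapsto a]) \mapsto (y, \tilde c)$. Commutativity with $w.1$ is automatic; agreement with $\alpha$ after precomposition with $g$ follows because in that case $\varphi = 1_\FF$, so $\ext$ simply returns the given (now total) partial element. The main subtlety I anticipate is verifying naturality under restriction of cubes, so that the filler is genuinely a morphism of cubical sets; this reduces to the uniformity of $\ext$, itself a consequence of the uniformity of composition noted in~\sect{sec:composition}.
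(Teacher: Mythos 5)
The paper states this theorem without giving a proof, so there is nothing to compare with line by line; your argument is essentially the intended one. Your contractibility proof is correct: the center $[0_\FF \mapsto\,]$ and the path $\pabs{i}[\varphi\wedge(i=1)\mapsto a]$ are exactly what one obtains by first observing that $T_f(b)$ carries an evident extension operation --- a partial element $[\psi\mapsto[\varphi\mapsto a]]$ of $T_f(b)$ extends to the total element $[\psi\wedge\varphi\mapsto a]$ --- and then running the proof of Lemma~\ref{contrinv} on that operation. Phrasing it via the extension operation would also hand you directly the fact that the projection $(b:B)\times T_f(b)\to B$ is a trivial fibration, which is the point of the surrounding factorization.

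The step you should tighten is the appeal to $\ext$ in the cofibration half. The operation of \sect{sec:contractible-types} is defined by a composition, $\ext~p~[\varphi\mapsto u]=\comp^i~C~[\varphi\mapsto p.2~u~i]~p.1$, so it is only available when the fibres $C\,y$ carry a composition structure. This subsection works with not necessarily fibrant cubical sets, and a family that is merely path-contractible (an inhabitant of $\isContr$) need not admit extensions of partial elements; so if ``contractible family of types'' is read as bare $\isContr$ with no fibrancy, your lifting step does not go through as written. The reading consistent with the footnote in \sect{sec:contractible-types} is that a trivial fibration is the projection of a family equipped with the extension operation of Lemma~\ref{contrinv} (equivalently, a contractible family of genuinely fibrant types), and under that reading your construction of the filler is correct: the key computation, that commutativity forces $y=(\alpha\,a).1$ on the extent $\varphi$ because $(b,[\varphi\mapsto a])$ restricts to $g(a)$ there, is right, and the filler restricts to $\alpha$ along $g$ since extending a total element ($\varphi=1_\FF$) returns that element, by the judgmental equality for compositions of extent $1_\FF$. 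Two smaller points: the contractibility witness must be given as a section over $Y$, not as an unstructured fibrewise choice, for your filler to be a map of cubical sets, and naturality then indeed reduces to the uniformity of the extension operation, as you anticipate.
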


The definition of the identity type can be seen as a special case of
this, if we take the $B$ the type of paths in $A$ and for $f$ the
constant path function.

\subsection{Higher inductive types}\label{sec:hits}

In this section we consider the extension of cubical type theory with
two different higher inductive types: spheres and propositional
truncation. The presentation in this section is syntactical, but it
can be directly translated into semantic definitions.


\subsubsection*{Extension to dependent path types}

In order to formulate the elimination rules for higher inductive
types, we need to extend the path type to {\em dependent path type},
which is described by the following rules. If $i : \II \der A$ and
$\der a_0 : A(i0),~a_1 : A(i1)$, then $\der \Path^i~A~a_0~a_1$. The
introduction rule is that $\der \pabs{i}{t} : \Path^i~A~t(i0)~t(i1)$
if $i : \II \der t : A$. The elimination rule is
$\der p \; r : A\subst{i}{r}$ if $\der p : \Path^i~A~a_0~a_1$ with
equalities $p \; 0 = a_0 : A(i0)$ and $p \; 1 = a_1 : A(i1)$.

\subsubsection*{Spheres}

We define the circle, $\Sp^1$, by the rules:
\begin{mathpar}
  \inferrule {\Gamma \der {}} {\Gamma\der \Sp^1} %
  \and %
  \inferrule {\Gamma \der {}} {\Gamma\der \base:\Sp^1} %
  \and %
  \inferrule {\Gamma \der r : \II} {\Gamma \der \LOOP(r) : \Sp^1}
\end{mathpar}
with the equalities $\LOOP(0) = \LOOP(1) = \base$.


Since we want to represent the {\em free} type with one base point and a loop,
we add composition as a {\em constructor} operation $\hcomp^i$:
\begin{mathpar}
  \inferrule {\Gamma, \varphi, i : \II \der u : \Sp^1 \\
              \Gamma \der u_0 : \Sp^1[\varphi \mapsto u(i0)]}
             {\Gamma \der \hcomp^i~[\varphi \mapsto u]~u_0 : \Sp^1} %
\end{mathpar}
with the equality $\hcomp^i~[1_\FF \mapsto u]~u_0 = u(i1)$.

Given a dependent type $x : \Sp^1 \der A$ and $a : A\subst{x}{\base}$
and $l : \Path^i~A\subst{x}{\LOOP(i)}~a~a$ we can define a function
$g : (x : \Sp^1) \rightarrow A$ by the equations\footnote{For the equation
$g~\LOOP(r) = l~r$, it may be that $l$ and $r$ are dependent on the same
name $i$, and we could not have followed this definition in the framework
of \cite{BCH}.} $g~\base = a$ and
$g~\LOOP(r) = l~r$ and
\[
g~(\hcomp^i~[\varphi\mapsto u]~u_0) =
\comp^i~A\subst{x}{v}~[\varphi\mapsto g~u]~(g~u_0)
\]
where $v = \Comp^i~\Sp^1~[\varphi\mapsto u]~u_0 =
\hcomp^j~[\varphi\mapsto u\subst{i}{i\wedge j}, (i = 0) \mapsto u_0]~u_0$.

This definition is non ambiguous since
$l~0 = l~1 = a$.

\medskip

We have a similar definition for $\Sp^n$ taking as constructors
$\base$ and $\LOOP(r_1,\dots,r_n)$.

\subsubsection*{Propositional truncation}

We define the propositional truncation, $\inh~A$, of a type $A$ by the
rules:
\begin{mathpar}
  \inferrule {\Gamma \der A} {\Gamma \der \inh~A} %
  \and %
  \inferrule {\Gamma \der a : A} {\Gamma \der \inc~a : \inh~A} %
  \and %
  \inferrule {\Gamma \der u_0 : \inh~A \\ \Gamma \der u_1 : \inh~A \\
              \Gamma \der r : \II}
             {\Gamma \der \squash(u_0,u_1,r) : \inh~A} %
\end{mathpar}
with the equalities $\squash(u_0,u_1,0) = u_0$ and $\squash(u_0,u_1,1) = u_1$.

As before, we add composition as a constructor, but only in the
form\footnote{This restriction on the constructor is essential for the
  justification of the elimination rule below.}
\begin{mathpar}
  \inferrule {\Gamma,\varphi,i : \II \der u : \inh~A \\
              \Gamma \der u_0 : \inh~A[\varphi \mapsto u(i0)]}
             {\Gamma \der \hcomp^i~[\varphi \mapsto u]~u_0 : \inh~A} %
\end{mathpar}
with the equality $\hcomp^i~[1_\FF \mapsto u]~u_0 = u(i1)$.

This provides only a definition of
$\comp^i~(\inh~A)~[\varphi\mapsto u]~u_0$ in the case where $A$ is
independent of $i$, and we have to explain how to define the general
case.

\medskip

In order to do this, we define first two operations
\begin{mathpar}
  \inferrule {\Gamma,i : \II \der A \\ \Gamma \der u_0 : \inh~A(i0)}
             {\Gamma \der \transp~u_0:\inh~A(i1)} %
  \and %
  \inferrule {\Gamma,i : \II \der A \\ \Gamma,i : \II \der u : \inh~A}
             {\Gamma \der \squeeze^i~u:\Path~(\inh~A(i1))~(\transp~u(i0))~u(i1)} %
\end{mathpar}
by the equations
\[
\begin{array}{lcl}
\transp~(\inc~a) & = & \inc~(\comp^i~A~[]~a) \\
\transp~(\squash(u_0,u_1,r)) & = & \squash(\transp~u_0,\transp~u_1,r) \\
\transp~(\hcomp^j~[\varphi\mapsto u]~u_0) & = & \hcomp^j~[\varphi\mapsto \transp~u]~(\transp~u_0)\\
 & & \\
\squeeze^i~(\inc~a) & = & \pabs{i}{\inc~(\comp^j~A(i\vee j)~[(i=1)\mapsto a(i1)]~a)} \\
\squeeze^i~(\squash(u_0,u_1,r)) & = & \pabs{k}{\squash(\squeeze^i~u_0~k,\squeeze^i~u_1~k,r\subst{i}{k})} \\
\squeeze^i~(\hcomp^j~[\varphi\mapsto u]~v) & = & \pabs{k}{\hcomp^j~S~(\squeeze^i~v~k)}
\end{array}
\]
where $S$ is the system
\[
[ \delta \mapsto \squeeze^i~u~k,~
  \varphi\subst{i}{k} \wedge (k=0) \mapsto \transp~u(i0),~
  \varphi\subst{i}{k} \wedge (k=1) \mapsto u(i1)]
\]
and $\delta = \forall i. \varphi$, using Lemma \ref{decomp}.

Using these operations, we can define the general composition
\begin{mathpar}
  \inferrule {\Gamma,i:\II \der A \\
              \Gamma,\varphi,i:\II \der u : \inh~A \\
              \Gamma \der u_0 : \inh~A(i0)[\varphi \mapsto u(i0)]}
             {\Gamma \der \comp^i~(\inh~A)~[\varphi \mapsto u]~u_0 :
                \inh~A(i1)[\varphi \mapsto u(i1)]} %
\end{mathpar}
by
$ \Gamma \der \comp^i~(\inh~A)~[\varphi \mapsto u]~u_0 =
\hcomp^j~[\varphi \mapsto \squeeze^i~u~j]~(\transp~u_0) : \inh~A(i1)$.

\medskip

Given $\Gamma \der B$ and
$\Gamma \der q : (x~y : B) \rightarrow \Path~B~x~y$ and
$f : A \rightarrow B$ we define $g : \inh~A \rightarrow B$ by the
equations
\[
\begin{array}{lcl}
g~(\inc~a) & = & f~a \\
g~(\squash(u_0,u_1,r)) & = & q~(g~u_0)~(g~u_1)~r \\
g~(\hcomp^j~[\varphi \mapsto u]~u_0) & = & \comp^j~B~[\varphi \mapsto g~u]~(g~u_0)
\end{array}
\]

\section{Related and future work}\label{section:conclusion}

Cubical ideas have proved useful to reason about equality in homotopy
type theory~\cite{LicataBrunerie}. In cubical type theory these
techniques could be simplified as there are new judgmental equalities
and better notations for manipulating higher dimensional cubes. Indeed
some simple experiments using the \Haskell{} implementation have shown
that we can simplify some constructions in synthetic homotopy
theory.\footnote{For details see:
  \url{https://github.com/mortberg/cubicaltt/tree/master/examples/}}

Other approaches to extending intensional type theory with
extensionality principles can be found
in~\cite{Altenkirch99,Polonsky14}. These approaches have close
connections to techniques for internalizing parametricity in type
theory~\cite{ttincolor}. Further, nominal extensions to
$\lambda$-calculus and semantical ideas related to the ones presented
in this paper have recently also proved useful for justifying type
theory with internalized parametricity~\cite{paramtt}.

The paper~\cite{GambinoSattler} provides a general framework for
analyzing the uniformity condition, which applies to simplicial and
cubical sets.

Large parts of the semantics presented in this paper have been
formally verified in NuPrl by Mark Bickford\footnote{For details see:
  \url{http://www.nuprl.org/wip/Mathematics/cubical!type!theory/}}, in
particular, the definition of Kan filling in terms of composition as
in \sect{sec:kan-filling} and composition for glueing as given in
\sect{sec:composition-glueing}.

\medskip

Following the usual reducibility method, we expect it to be possible
to adapt our presheaf semantics to a proof of normalization and
decidability of type checking.  A first step in this direction is the
proof of canonicity in~\cite{Huber16}.  We end the paper with a list
of open problems and conjectures:

\begin{enumerate}




\item Extend the semantics of identity types to the semantics of
  inductive families.

\item Give a general syntax and semantics of higher inductive types.

\item Extend the system with resizing rules and show normalization.

\item Is there a model where $\Path$ and $\Id$ coincide?

\end{enumerate}


\subparagraph*{Acknowledgements}

This work originates from discussions between the four authors around an implementation
of a type system corresponding to the model described in~\cite{BCH}. This implementation
indicated a problem with the representation of higher inductive types, e.g., the elimination rule
for the circle, and suggested the need of extending this cubical model with a diagonal 
operation. The general framework (uniformity condition, connections, semantics of spheres and propositional 
truncation) is due to the second author. In particular, the glueing operation with its composition was 
introduced as a generalization of the operation described in~\cite{BCH} transforming an equivalence
into a path, and with the condition $A = \Glue~[]~A$. In a first attempt, we tried to force ``regularity'', 
i.e., the equation $\transport~i~A~a_0 = a_0$ if $A$ is independent of $i$ (which seemed to be necessary
in  order to get filling from compositions, and which implies $\Path = \Id$).
There was a problem however for getting regularity for the universe, that was discovered by Dan Licata (from 
discussions with Carlo Angiuli and Bob Harper). Thanks to this discovery, it was realized that 
regularity is actually not needed for the model to work. In particular, the second author
adapted the definition of filling from composition as in Section \ref{sec:kan-filling},
the third author noticed that we can remove the condition $A = \Glue~[]~A$, and together with the last author,
they derived the univalence axiom from the glueing operation as presented in the appendix.
This was surprising since glueing was introduced
a priori only as a way to transform equivalences into paths, but was later explained by a remark of Dan Licata
(also presented in the appendix: we get univalence as soon as the transport map associated to this
path is path equal to the given equivalence). 
The second author introduced then the restriction operation $\Gamma,\varphi$
on contexts, which, as noticed by Christian Sattler, can be seen as an explicit syntax for the notion of 
cofibration, and designed the other proof of univalence in \sect{sec:univalence-axiom} from
discussions between Nicola Gambino, Peter LeFanu Lumsdaine and the third author. Not having regularity, the type of 
paths is not the same as the $\Id$ type but, as explained in \sect{sec:identitytypes}, we can recover
the usual identity type from the path type, following an idea of Andrew Swan.

\medskip

The authors would like to thank the referees and Mart\'in Escard\'o, Georges Gonthier,
Dan Grayson, Peter Hancock, Dan Licata, Peter LeFanu Lumsdaine,
Christian Sattler, Andrew Swan, Vladimir Voevodsky for many
interesting discussions and remarks.

\bibliographystyle{plainurl}
\bibliography{references}

\newpage

\appendix
\section{Details of composition for
  glueing}\label{appendix:composition-glueing}

We build the element~$\Gamma \der b_1 = \comp^i~B~[\psi\mapsto
b]~b_0:(\Glue~[\varphi\mapsto(T,\myeq{})]~A)(i1)$ as the element
$\glue~[\varphi(i1) \mapsto t_1]~a_1$ where
\begin{align*}
  \Gamma,\varphi(i1) \der & \; t_1 : T(i1)[\psi \mapsto b(i1)]\\
  \Gamma \der & \; a_1 : A(i1)[\varphi(i1) \mapsto \myeq{}(i1) \; t_1,\psi \mapsto (\ugl~b)(i1)]
\end{align*}

As intermediate steps, we gradually build elements that satisfy more
and more of the equations that the final elements $t_1$ and $a_1$
should satisfy. The construction of these is given in five steps.

Before explaining how we can define them and why they are well
defined, we illustrate the construction in \fig{fig:compglue},
with $\psi = (j=1)$ and $\varphi = (i=0) \vee (j=1)\vee (i=1)$.

We pose $\delta = \forall i.\varphi$ (cf.\ \sect{sec:pathtypes}), so
that we have that~$\delta$ is independent from~$i$, and in our example
$\delta = (j = 1)$ and it represents the right-hand side of the
picture.
\begin{enumerate}
\item The element $a'_1:A(i1)$ is a first approximation of~$a_1$, but
  $a'_1$ is not necessarily in the image of~$\myeq{}(i1)$
  in~$\Gamma,\varphi(i1)$;
\item the partial element $\inctxt{\delta}{t'_1:T(i1)}$, which is a
  partial final result for~$\inctxt{\varphi(i1)}{t_1}$;
\item the partial path $\inctxt{\delta}{\omega}$, between~$a'_1$ and
  the image of~$t'_1$;
\item both the final element $\inctxt{\varphi(i1)}{t_1}$ and a path
  $\inctxt{\varphi(i1)}{\alpha}$ between~$a'_1$ and
  $\myeq{}(i1) \; t_1$;
\item finally, we build $a_1$ from~$a'_1$ and~$\alpha$.
\end{enumerate}
\begin{figure}[!h]
  \centering
  $$
\begin{tikzpicture}
  \coordinate (A00) at (0,0);
  \coordinate (A10) at ++ (3,0) ;
  \coordinate (A'11) at ++ (3,2) ;
  \coordinate (A'01) at ++ (0,2) ;
  \coordinate (A01) at ($(A'01) + (0,1.5)$);
  \coordinate (A11) at ($(A'11) + (0,1.5)$);

  \coordinate (O) at ($(A00) + (-4,1)$);
  \draw[fun line] (O) -- node[left] () {$i$} ++(0,1);
  \draw[fun line] (O) -- node[below] () {$j$} ++(1,0);

  \begin{pgfonlayer}{background}
    \draw [path line] (A00) -- node [above] (a_0) {$\ugl~b_0$}
    (A10) -- coordinate (aj1)
    (A'11) -- coordinate (midA') (A'01);
    \draw [path line] (A01) -- coordinate (midA)
    node[above] () {Step 5: $a_1$} (A11);
    \draw (midA') node[below] {Step 1: $a'_1$} (A'01);
  \end{pgfonlayer}
  \draw (aj1) node[right, fill=white] () {$(\ugl~b)(j1)$};

  \begin{pgfonlayer}{background}
    \draw [fun line, <-] (A00) --
    node[left, xshift=-5] (f00) {$f$} ++ (-1,-1) coordinate (T00);
    \draw [fun line, <-] (A01) --
    node[left, xshift=-5] (f00) {$f$} ++ (-1,1) coordinate (T01);
    \draw [fun line, <-] (A10) --
    node[left, xshift=-5] (f00) {$f$} ++ (1,-1) coordinate (T10);
    \draw [fun line, <-] (A11) --
    node[left, xshift=-5] (f00) {$f$} ++ (1,1) coordinate (T11);
  \end{pgfonlayer}

  \fill (T11) circle [radius=2pt] node [right,xshift=10] () {Step 2:
    $\inctxt{\delta}{t'_1}$};

  \fill (A11) circle [radius=2pt] node [right,xshift=10,fill=white] ()
  {$\inctxt{\delta}{\myeq{}(i1) \; t'_1}$};

  \begin{pgfonlayer}{background}
    \draw [path line] (T00) -- node[below] (b_0)
    {$\inctxt{\varphi(i0)}{b_0}$}
    (T10) -- coordinate (bj1)
    (T11) -- node[above] (t_1) {Step 4: $\inctxt{\varphi(i1)}{t_1}$} (T01);
    \draw [fun line] (b_0) -- node[left] {$\myeq{}$} (a_0);
  \end{pgfonlayer}
  \draw (bj1) node[right, fill=white] () {$b(j1)$};

  \draw [comp line] (A'11) -- node[right, xshift=10,fill=white] ()
  {Step 3: $\inctxt{\delta}{\omega}$ constant on $\psi$} (A11);

  \draw [comp line] (midA') --
  node[left] () {Step 4': $\inctxt{\varphi(i1)}{\alpha}$} (midA);

\end{tikzpicture}
$$
\caption{Composition for glueing}
\label{fig:compglue}
\end{figure}

We define:
\begin{align*}
   \Gamma,\psi,i:\II \der a = & \; \ugl~ b: A[\varphi\mapsto \myeq{} \;  b]
  \\
  \Gamma\der a_0 = & \; \ugl~ b_0: A(i0)[\varphi(i0)\mapsto \myeq{}(i0) \; b_0,\psi\mapsto a(i0)]
\end{align*}

\subparagraph*{Step 1} We define~$a'_1$ as the composition of~$a$
and~$\ugl~b_0$, in the direction~$i$, which is well defined since
$\ugl~b_0 = (\ugl~b)(i0)$ over the extent $\psi$
\begin{equation}
  \label{eq:glue-a'1}
  \Gamma \der a_1' = \comp^i~A~[\psi\mapsto~a]~a_0:A(i1)[\psi\mapsto~a(i1)]
\end{equation}

\subparagraph*{Step 2} We also define~$t'_1$ as the composition of~$b$
and~$b_0$, in the direction~$i$:
\begin{equation}
  \label{eq:glue-t'1}
  \Gamma,\delta\der t_1' = \comp^i~T~[\psi\mapsto~b]~b_0:T(i1)[\psi\mapsto~b(i1)]
\end{equation}
\[\textrm{which is well defined because}\quad
\begin{cases}
  \Gamma,\delta,i:\II,\psi\der b : T \hfill
   \textrm{by Lemma \ref{admissible}} \\
  \Gamma,\delta\der b_0 : T(i0)[\psi\mapsto b(i0)] \hspace{1cm} \textrm{as~
    $\delta \leqslant \varphi(i0)$}
\end{cases}\]
Moreover, since
\[
\begin{cases}
  \Gamma,\delta,\psi,i:\II \der a = \myeq{} \; b \hspace{1cm} \textrm{by~
    $\delta \leqslant \varphi$ ~~~~} \\
  \Gamma,\delta\der~a_0=\myeq{}(i0)~b_0 \hfill \textrm{by~
    $\delta \leqslant \varphi(i0)$}
\end{cases}\]
we can re-express~$a'_1$ on the extent $\delta$
\[
\Gamma, \delta \der a_1' = \comp^i~A~[\psi\mapsto\myeq{}~b]~\left(\myeq{}(i0)~b_0\right)
\]
%

\subparagraph*{Step 3} We can hence find a path~$\omega$
connecting~$a'_1$ and~$\myeq{}(i1) \; t'_1$ in~$\Gamma,\delta$
using Lemma~\ref{pres}:
\[
\Gamma,\delta \der \omega = \pres{i}~\myeq{}~[\psi\mapsto b]~b_0 :
\left(\Path~A(i1)~a_1'~\left(\myeq{}(i1) \;
    t'_1\right)\right)[\psi\mapsto\pabs{j} a(i1)]
\]
Picking a fresh name~$j$, we have
\begin{equation}
  \Gamma,\delta,j:\II\der \omega~j : A(i1)[(j=0)\mapsto
  a'_1,(j=1)\mapsto \myeq{}(i1) \; t'_1,\psi\mapsto a(i1)]\label{eq:glue-omega}
\end{equation}


\subparagraph*{Step 4} Now we define the final element~$t_1$ as the
inverse image of~$a'_1$ by~$\myeq{}(i1)$, together with the
path~$\alpha$ between~$a'_1$ and~$\myeq{}(i1) \; t_1$, in
$\Gamma,\varphi(i1)\der$, using Lemma~\ref{equiv}:
\[\Gamma,\varphi(i1)\der(t_1,\alpha) =
\eq~\myeq{}(i1)~[\delta\mapsto~(t'_1,\omega),\psi\mapsto~\left(b(i1),\pabs{j}a'_1\right)]~a_1'\]
\[{\textrm{with}\quad}\begin{cases}
  \Gamma,\varphi(i1)\der t_1 :
  T(i1)[\delta\mapsto~t'_1,\psi\mapsto~b(i1)] 
  \\
  \Gamma,\varphi(i1)\der\alpha :
  \left(\Path~A(i1)~a'_1~\left(\myeq{}(i1) \; t_1\right)\right)
  [\delta\mapsto\omega,\psi\mapsto~\pabs{j} a'_1)]
\end{cases}
\]
These are well defined because the two systems in~$\delta$ and~$\psi$
are compatible:
\[\begin{cases}
  \Gamma, \delta, \psi \der t'_1 = b(i1) & \textrm{by~(\ref{eq:glue-t'1})} \\
  \Gamma, \delta, \psi \der \omega = \pabs{j}a'_1 &
  \textrm{by~(\ref{eq:glue-omega}) and~(\ref{eq:glue-a'1})}
\end{cases}
\]
%
Picking a fresh name~$j$, we have
\begin{equation}
\Gamma,\varphi(i1),j:\II\der \alpha~j : A(i1)[(j=0)\mapsto
  a'_1,~(j=1)\mapsto \myeq{}(i1) \; t_1,~\delta\mapsto~a'_1,~\psi\mapsto a(i1)]\label{eq:glue-alpha}
\end{equation}

\subparagraph*{Step 5} Finally, we define~$a_1$ by composition
of~$\alpha$ and~$a'_1$:
\[\Gamma\der~a_1:=\comp^j~A(i1)~[\varphi(i1)\mapsto\alpha~j,\psi\mapsto~a(i1)]~a'_1:
A(i1)[\varphi(i1)\mapsto\alpha~1,\psi\mapsto~a(i1)]\]
\[\textrm{which is well defined because}\quad
\begin{cases}
  \Gamma,j:\II,\varphi(i1),\psi \der \alpha~j = a(i1)
  &\textrm{by~(\ref{eq:glue-alpha})} \\
  \Gamma,\varphi(i1)\der \alpha~0 = a'_1
  &\textrm{by~(\ref{eq:glue-alpha})} \\
  \Gamma,\psi\der a(i1) = a'_1
  &\textrm{by~(\ref{eq:glue-a'1})}
\end{cases}
\]
and since $\Gamma,\varphi(i1)\der\alpha~1 = \myeq{}(i1) \; t_1$, we can
re-express the type of~$a_1$ in the following way:
\[
  \Gamma\der a_1 : A(i1)[\varphi(i1)\mapsto\myeq{}(i1) \;
  t_1,~\psi\mapsto a(i1)]
\]
Which is exactly what we needed to build
$\Gamma\der b_1 := \glue ~[\varphi(i1)\mapsto t_1]~a_1:
B(i1)[\psi\mapsto b(i1)]$.

\medskip

\noindent Finally we check that $b_1 = \comp^i~T~ [\psi\mapsto~b]~b_0$
on $\delta$:
\begin{align*}
  b_1  &= \glue ~[\varphi(i1)\mapsto t_1]~a_1 &&\textrm{by def.} \\
  &=t_1 : T(i1)[\delta\mapsto~t'_1,\psi\mapsto~b(i1)]
  && \textrm{as $\varphi(i1) = 1_\FF$} \\
  &=t'_1 && \textrm{as $\delta = 1_\FF$} \\
  &= \comp^i~T~[\psi\mapsto~b]~b_0 &&  \textrm{by def.}
\end{align*}

\section{Univalence from glueing}
\label{sec:univ-from-glue}

We also give two alternative proofs of the univalence axiom for
$\Path$ only involving the glue construction.%
\footnote{The proofs of the univalence axiom have all been formally
  verified inside the system using the \Haskell{} implementation. We
  note that the proof of Theorem~\ref{thm:unglueequiv} can be given
  such that it extends $\myeq{}.2$ and hence in
  Corollary~\ref{cor:equiv-contr} we do not need the fact that
  $\isEquiv~X~A~\myeq{}.1$ is a proposition. For details see:\\
  \url{https://github.com/mortberg/cubicaltt/blob/v1.0/examples/univalence.ctt}}
The first is a direct proof of the standard formulation of the
univalence axiom while the second goes through an alternative
formulation as in Corollary~\ref{cor:equiv-contr}.%
\footnote{The second of these proofs is inspired by a proof by Dan Licata from:\\
  \url{https://groups.google.com/d/msg/homotopytypetheory/j2KBIvDw53s/YTDK4D0NFQAJ}}

\begin{lemma}
  \label{lem:ua-prel}
  For $\Gamma \vdash A : \U$, $\Gamma \vdash B : \U$, and an
  equivalence $\Gamma \vdash \myeq : \Equiv~A~B$ we have the following
  constructions:
  \begin{enumerate}
  \item\label{item:uap1} $\Gamma \vdash \eqToPath \myeq :
    \Path\, \U\, A\, B$;
  \item\label{item:uap2} $\Gamma \vdash \Path \, (A \to B) \,
    (\transport^i (\eqToPath \myeq \, i))) \, \myeq.1$ is
    inhabited; and
  \item\label{item:uap3} if $\myeq = \eq^i (P \, i)$ for $\Gamma
    \vdash P : \Path \, \U\, A \, B$, then the following type is
    inhabited:
    \[
    \Gamma \vdash %
    \Path \, (\Path \, \U \, A \, B) \, (\eqToPath {(\eq^i (P \, i))})
    \, P
    \]
  \end{enumerate}
\end{lemma}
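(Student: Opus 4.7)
For part (1), I would set
\[
\eqToPath\,\myeq := \pabs{i}\,\Glue[(i=0) \mapsto (A,\myeq),\, (i=1) \mapsto (B,\ident_B)]\,B,
\]
exactly as in Example~\ref{exa:equivtopath}. The endpoints are $A$ at $i=0$ and $B$ at $i=1$ by the judgmental equality $\Glue[1_\FF \mapsto (T,\myeq)]\,A = T$.

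For part (2), I would compute $\transport^i(\eqToPath\,\myeq\;i)\,x = \comp^i\,(\eqToPath\,\myeq\;i)\,[\,]\,x$ by unfolding the $\Glue$-composition algorithm from \sect{sec:composition-glueing} with $\psi = 0_\FF$ and $\varphi = (i=0) \vee (i=1)$. Since $\delta = \forall i.\varphi = 0_\FF$, the intermediate partial elements $t_1'$ and $\omega$ are vacuous; since $\varphi(i1) = 1_\FF$, the subsequent $\eq$-step is taken at $\ident_B$ with empty extent and reduces to $t_1 = a_1'$ with $\alpha = \refl{a_1'}$. Using the $1_\FF$-rule for $\glue$, together with $\ugl\,x = \myeq.1\,x$ at $i=0$, the full composition simplifies to $\comp^i\,B\,[\,]\,(\myeq.1\,x)$. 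This is path-equal to $\myeq.1\,x$ via the filling $\Comp^i\,B\,[\,]\,(\myeq.1\,x)$, and $\lambda$-abstracting over $x$ (with direction appropriately reversed) yields the required element of $\Path\,(A \to B)\,(\transport^i(\eqToPath\,\myeq\;i))\,\myeq.1$.

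For part (3), set $\myeq := \eq^i(P\,i)$. The plan is to build
\[
H := \pabs{j}\pabs{i}\,\Glue[\psi(j,i) \mapsto (T(j,i),\, g(j,i))]\,P((1-j) \vee i),
\]
with formula $\psi(j,i) = (j=1) \vee (i=0) \vee (i=1)$. The base $P((1-j) \vee i)$ evaluates to $B$ at $j=0$ and to $P\,i$ at $j=1$, providing the correct ``bottom'' for both boundary paths. On the $(j=1)$ face I would set $T = P\,i$ and $g = \ident_{P\,i}$, so that $H(1,i) = P\,i$ by the $1_\FF$-rule. On $(i=0)$ (respectively $(i=1)$) I would set $T = A$ (resp.\ $B$) and choose a line of equivalences $g_0(j) : \Equiv\,A\,P(1-j)$ (resp.\ $g_1(j) : \Equiv\,B\,B$) so that at $j=0$ we recover $\myeq$ and $\ident_B$, giving $H(0,i) = \eqToPath\,\myeq\;i$.

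The main obstacle is matching the equivalences \emph{judgmentally} on the overlaps $(j=1) \wedge (i=0)$ and $(j=1) \wedge (i=1)$. On $(i=0)$ one needs $g_0(1) = \ident_A$; the natural candidate $\eq^k(P(k \wedge (1-j)))$ degenerates at $j=1$ to $\eq^k\,A$, which is only path-equal (not judgmentally equal) to $\ident_A$ in the absence of the regularity equation $\transport^k\,A\,a = a$ on constant lines. I expect the crux of the argument to be absorbing this mismatch, either by applying an outer composition that corrects along the known path from $\eq^k\,A$ to $\ident_A$, or by invoking contractibility of the appropriate types (\sect{sec:contractible-types}) and using the $\ext$-operation to extend partial equivalence data coherently. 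Once $g_0$ and $g_1$ have been assembled with the correct boundary behaviour, the two-dimensional $\Glue$ defines $H$ with endpoints $\eqToPath(\eq^i(P\,i))$ and $P$, as required.
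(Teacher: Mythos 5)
Parts~(1) and~(2) of your proposal are essentially viable, with one caveat on~(2): the $\eq$-step with empty extent does \emph{not} judgmentally reduce to $t_1 = a_1'$, $\alpha = \refl{a_1'}$. Unfolding, $\eq~\myeq(i1)~[\,]~a_1'$ is $\ext~(\myeq(i1).2~a_1')~[\,]$, i.e.\ a composition with empty system of the centre of contraction in the (constant) fiber type, and without regularity this is only path-equal, not equal, to the centre. So the unfolded transport is $\myeq.1$ followed by \emph{several} transports in constant types, not the single $\comp^i~B~[\,]~(\myeq.1~x)$ you state. Since your closing move (each such transport is path-equal to the identity by filling, then function extensionality) is exactly the paper's argument, this only costs bookkeeping, not the result.

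Part~(3) is where there is a genuine gap, and you have named it yourself without closing it: on the overlap $(j=1)\wedge(i=0)$ your system needs $g_0(1) = \ident_A$ \emph{judgmentally}, while the natural candidate degenerates to $\eq^k A$, which is merely path-equal to $\ident_A$ in the absence of regularity. The two repairs you gesture at (an outer correcting composition, or extending partial equivalence data by contractibility) are not carried out, and the side condition on systems requires judgmental agreement, so the term you write down is not well-formed as it stands. The paper avoids the problem instead of repairing it: it defines $\eqToPath\myeq$ with $(i=1)\mapsto(B,\eq^k B)$ rather than $(B,\ident_B)$, keeps the base of the square the \emph{constant} type $B$ (not your $P((1-j)\vee i)$), and uses only equivalences of the form $\eq^k$ of a line, taking
\[
\pabs{j}\pabs{i}\,\Glue~[\,(i=0)\mapsto(A,\eq^k(P\,k)),\ (i=1)\mapsto(B,\eq^k B),\ (j=1)\mapsto(P\,i,\eq^k(P\,(i\lor k)))\,]~B.
\]
All overlap conditions then hold judgmentally by the lattice laws $P(0\lor k) = P\,k$ and $P(1\lor k) = P\,1 = B$; the $j=0$ face is literally $\eqToPath(\eq^i(P\,i))~i$, the $j=1$ face is $P\,i$ by the $1_\FF$-rule, and the $i=0$, $i=1$ faces are $A$ and $B$. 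Note that your choice of $\ident_B$ in part~(1) — perfectly acceptable for~(1) and~(2), as the paper itself remarks — is precisely what forces the corner mismatch in~(3); the missing idea is the uniform use of $\eq^k$ applied to lines in the universe, including in the definition of $\eqToPath$ itself.
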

\begin{proof}
  For \eqref{item:uap1} we define
  \begin{align}
    \label{eq:def-eqToPath}
    \eqToPath \myeq = \pabs i {\Glue \, [ (i=0) \mapsto
      (A,\myeq), (i=1) \mapsto (B, \eq^k B)] \, B }.
  \end{align}
  Note that here $\eq^k B$ is an equivalence between $B$ and $B$ (see
  \sect{subsec:compU}). For \eqref{item:uap2} we have to closely look
  at how the composition was defined for $\Glue$. By unfolding the
  definition, we see that the left-hand side of the equality is equal
  $\myeq.1$ composed with multiple transports in a constant type;
  using filling and functional extensionality, these transports can be
  shown to be equal to the identity; for details see the formal proof.

  The term for \eqref{item:uap3} is given by:
  \begin{align*}
    \pabs {j} \pabs {i} \Glue \,[ %
    &(i=0) \mapsto (A, \eq^k (P\, k)),\\
    &(i=1) \mapsto (B, \eq^k B),\\
    &(j=1) \mapsto (P \, i, \eq^k (P (i \lor k)))]\\
    &B \qedhere
  \end{align*}
\end{proof}

\begin{corollary}[Univalence axiom]
  \label{cor:ua}
  For the canonical map
  \[
  \can : (A \, B : \U) \to \Path \,\U\,A\,B \to \Equiv~A~B
  \]
  we have that $\can \, A \, B$ is an equivalence for
  all $A : \U$ and $B : \U$.
\end{corollary}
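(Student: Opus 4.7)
The plan is to exhibit $\eqToPath$ from Lemma~\ref{lem:ua-prel}(1) as a quasi-inverse of $\can\,A\,B$, and then invoke the standard upgrade from a quasi-inverse to an equivalence. The canonical map sends $P:\Path~\U~A~B$ to an equivalence whose underlying function is $x \mapsto \transport^i(P\,i)\,x$; this is precisely the function $f$ constructed in the definition of $\eq^i\,(P\,i)$ in \sect{subsec:compU}. Since $\isEquiv~A~B~g$ is a proposition for any $g$ (any two inhabitants are path-equal, as each fiber type is contractible and contractibility is a proposition), the equivalences $\can\,A\,B\,P$ and $\eq^i\,(P\,i)$ agree in $\Equiv~A~B$ up to path.

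For the section, given $\myeq:\Equiv~A~B$, I would establish $\can\,A\,B\,(\eqToPath\,\myeq) = \myeq$ by observing that the first component of the left-hand side is $x\mapsto \transport^i(\eqToPath\,\myeq\,i)\,x$, which by part~(2) of Lemma~\ref{lem:ua-prel} is path-equal to $\myeq.1$; lifting this along propositionality of $\isEquiv$ and the componentwise characterization of paths in a $\Sigma$-type yields the required path in $\Equiv~A~B$. For the retraction, given $P:\Path~\U~A~B$, I want $\eqToPath\,(\can\,A\,B\,P) = P$; since $\can\,A\,B\,P$ and $\eq^i\,(P\,i)$ agree up to path by the previous observation, this reduces to $\eqToPath\,(\eq^i\,(P\,i)) = P$, which is exactly part~(3) of Lemma~\ref{lem:ua-prel}.

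With both a section and a retraction via the same map $\eqToPath$, the standard quasi-inverse-to-equivalence construction (Theorem~4.2.3 of \cite{hott-book}) then promotes $\can\,A\,B$ to an equivalence. The main obstacle I expect is bookkeeping: reconciling the several distinct ``path to equivalence'' constructions ($\can$, $\eq^i$, and the $\transport^i\,(\Equiv\,A\,-)\,\ident_A$ description of the Remark after \sect{subsec:compU}), and invoking propositionality of $\isEquiv$ at precisely the right points. None of these steps should be deep, but in the cubical formalism they require careful management of the relevant fillings to keep the coherences consistent.
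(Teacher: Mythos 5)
Your proposal is correct and takes essentially the same route as the paper's first proof: identify $\can\,A\,B\,P$ with $\eq^i\,(P\,i)$ using that $\isEquiv$ is a proposition, then use Lemma~\ref{lem:ua-prel}~(2) and~(3) to make $\eqToPath$ a two-sided inverse of $\can\,A\,B$, concluding by Theorems 4.2.3/4.2.6 of \cite{hott-book}. The one point where the paper is more careful is your claim that the underlying function of $\can\,A\,B\,P$ is \emph{precisely} the transport function $f$ of $\eq^i\,(P\,i)$: since $\can$ is defined by path induction and regularity fails, the paper instead establishes the agreement up to path via function extensionality, path induction to reduce to $\refl{A}$, and a filling argument showing that transport in a constant type is path-equal to the identity.
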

\begin{proof}[Proof 1]
  Let us first show that the canonical map $\can$ is path equal to:
  \[
  \eq = \lambda A \, B : \U. \, \lambda P : \Path\,\U\,A\,B. \, \eq^i (P\, i)
  \]
  By function extensionality, it suffices to check this pointwise.
  Using path-induction, we may assume that $P$ is reflexivity.  In
  this case $\can\,A\,A\,\refl{A}$ is the identity equivalence by
  definition.  Because being an equivalence is a proposition, it thus
  suffices that the first component of $\eq^i A$ is propositionally
  equal to the identity.  By definition, this first component is given
  by transport (now in the constant type $A$) which is easily seen to
  be the identity using filling (see \sect{sec:kan-filling}).

  Thus it suffices to prove that $\eq\,A\,B$ is an equivalence.  To do
  so it is enough to give an inverse (see Theorems 4.2.3 and 4.2.6 of
  \cite{hott-book}). But $\eqToPath$ is a left inverse by
  Lemma~\ref{lem:ua-prel}~\eqref{item:uap3}, and a right inverse by
  Lemma~\ref{lem:ua-prel}~\eqref{item:uap2} using that being an
  equivalence is a proposition.
\end{proof}
\begin{proof}[Proof 2]
  Points~\eqref{item:uap1} and~\eqref{item:uap2} of
  Lemma~\ref{lem:ua-prel} imply that~$\Equiv~A~B$ is a retract
  of~$\Path \,\U\,A\,B$. Hence $(X : \U) \times \Equiv~A~X$ is a
  retract of $(X : \U) \times \Path\,\U\,A\,X$.  But $(X : \U) \times
  \Path\,\U\,A\,X$ is contractible, so $(X : \U) \times \Equiv~A~X$ is
  also contractible as a retract of a contractible type. As
  discussed in~\sect{sec:univalence-axiom} this is an alternative
  formulation of the univalence axiom and the rest of this proof
  follows as there.
\end{proof}

Note that the first proof uses all three of the points of
Lemma~\ref{lem:ua-prel} while the second proof only uses the first
two. As the second proof only uses the first two points it is possible
to prove it if point~\eqref{item:uap1} is defined as in
Example~\ref{exa:equivtopath} leading to a slightly simpler proof of
point~\eqref{item:uap2}.

\section{Singular cubical sets}\label{sec-singular}

Recall the functor $\CC \to \Top, I \mapsto [0,1]^I$ given
at~\eqref{eq:georeal} in \sect{sec:cubecat}. In particular, the face
maps $(ib) \colon I-i \to I$ (for $b=0_\II$ or $1_\II$) induce the
maps $(ib)\colon [0,1]^{I-i} \to [0, 1]^{I}$ by $i(ib)u = b$ and
$j(ib)u = ju$ if $j\neq i$ is in $I$.  If $\psi$ is in $\FF(I)$ and
$u$ in $[0,1]^I$, then $\psi u$ is a truth value.

We assume given a family of idempotent functions
$r_I:[0,1]^I\times [0,1]\rightarrow [0,1]^I\times [0,1]$ such that
\begin{enumerate}
\item $r_I(u,z) = (u,z)$ if{f} $\partial_I u = 1$ or $z = 0$, and
\item for any {\em strict} $f$ in ${\sf Hom}(I,J)$ we have
  $r_J(f\times\id)r_I = r_J(f\times\id)$.
\end{enumerate}

\medskip

Such a family can for instance be defined as in the following picture
(``retraction from above center''). If the center has coordinate
$(1/2,2)$, then $r_I(u,z) = r_I(u',z')$ is equivalent to $(2-z')
(-1+2u) = (2-z) (-1+2u')$.
$$
\begin{tikzpicture}[scale=2]
  \draw [thick,solid] (0,0) --
  node [near start,below=-1pt] (DL) {}
  node [near end,below=-1pt] (DR) {}
  (1,0);
  \draw [thick,solid] (0,0) --
  node [near end,](L) {} (0,1);
  \draw [thick,solid] (1,0) -- node (R) {} (1,1);
  \draw [dashed] (0,1) -- (1,1);
  \def\raypt{(0.5,1.5)}
  \draw[->,>=stealth]\raypt -- (DL);
  \draw[->,>=stealth]\raypt -- (DR);
  \draw[->,>=stealth]\raypt -- (L);
  \draw[->,>=stealth]\raypt -- (R);
\end{tikzpicture}
$$

Property (1) holds for the retraction defined by this picture.
The property (2) can be reformulated as $r_I(u,z) =
r_I(u',z')\rightarrow r_J(f u,z) = r_J(f u',z')$. It also holds
in this case,
since $r_I(u,z) = r_I(u',z')$ is then equivalent to $(2-z') (-1+2u) = (2-z) (-1+2u')$,
which implies $(2-z') (-1+2f u) = (2-z)(-1+2 f u')$ if $f$ is strict.

Using this family, we can define for each $\psi$ in $\FF(I)$ an
idempotent function
\[
r_{\psi}:[0,1]^I\times [0,1]\rightarrow [0,1]^I\times [0,1]
\]
having for fixed-points the element $(u,z)$ such that $\psi u = 1$ or $z = 0$.
This function $r_{\psi}$ is completely characterized by the following properties:
\begin{enumerate}
\item $r_{\psi} = \id$ if $\psi = 1$
\item $r_{\psi} = r_{\psi}r_I$ if $\psi \neq 1$
\item $r_{\psi}(u,z) = (u,z)$ if $z = 0$
\item $r_{\psi}((ib)\times\id) = ((ib)\times\id)r_{\psi(ib)}$
\end{enumerate}

These properties imply for instance $r_{\partial_I}(u,z) = (u,z)$ if
$\partial_I u = 1$ or $z=0$ and so they imply $r_{\partial_I} = r_I$.
They also imply that $r_{\psi}(u,z) = (u,z)$ if $\psi u = 1$.

\medskip

From these properties, we can prove the uniformity of the family of
functions $r_{\psi}$.

\begin{theorem}
  If $f$ is in ${\sf Hom}(I,J)$ and $\psi$ is in $\FF(J)$, then
  $r_{\psi}(f\times\id) = (f\times\id)r_{\psi f}$.
\end{theorem}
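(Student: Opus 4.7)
The plan is to use the unique factorization $f = g \circ h$ in $\CC$, where $g$ is a face map and $h$ is strict, and to handle each case in turn before composing.

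\textbf{Face map case.} Write $g$ as an iterated composition $g = (i_1 b_1) \circ \cdots \circ (i_n b_n)$ of elementary face maps. Property~(4) transfers one elementary face past $r_\psi$ at a time, picking up a substitution in the subscript; by induction on $n$ this yields $r_\psi(g \times \id) = (g \times \id)\, r_{\psi g}$.

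\textbf{Strict case.} I want $r_\psi(h \times \id) = (h \times \id)\, r_{\psi h}$ for strict $h$. If $\psi = 1_\FF$ this is immediate from property~(1). Otherwise $\psi \neq 1_\FF$, and I chain the rewrites
\[
r_\psi(h \times \id) \stackrel{(2)}{=} r_\psi\, r_J (h \times \id) \stackrel{(b)}{=} r_\psi\, r_J (h \times \id)\, r_I \stackrel{(2)}{=} r_\psi (h \times \id)\, r_I,
\]
where (b) is the assumed identity $r_J(h \times \id)\, r_I = r_J(h \times \id)$ for strict $h$. Hence it suffices to verify the identity on fixed points $(u',z') = r_I(u,z)$, which by property~(1) of the family satisfy either $z' = 0$ or $\partial_I u' = 1$. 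If $z' = 0$, both sides collapse to $(hu', 0)$ using property~(3) (with an analogous computation $r_{\psi h}(u,z) = (u',0)$ on the right). If $\partial_I u' = 1$, then $u' = (ib)\, u''$ for some elementary face $(ib)$, so that $r_\psi(hu', z') = r_\psi(h \circ (ib) \times \id)(u'', z')$; I factor $h \circ (ib)$ again as face $\circ$ strict and close by induction on $|I|$, since the domain of the strict part strictly decreases.

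\textbf{Combining.} With both cases in hand,
\[
r_\psi(f \times \id) = r_\psi(g \times \id)(h \times \id) = (g \times \id)\, r_{\psi g}(h \times \id) = (g \times \id)(h \times \id)\, r_{\psi g h} = (f \times \id)\, r_{\psi f}.
\]

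The main obstacle is the strict case, specifically making the induction on the boundary sub-case $\partial_I u' = 1$ well-founded. A secondary subtlety is the degenerate possibility that $\psi h = 1_\FF$ while $\psi \neq 1_\FF$: this ought not to occur because a strict $h$ can hit the interior of $[0,1]^J$, so that $(\psi h)(u) = \psi(hu) = 1$ for all $u$ forces $\psi = 1_\FF$; the edge case is handled by appealing to the consequence $r_\psi(v,z) = (v,z)$ whenever $\psi v = 1$, which was noted just before the theorem as a further consequence of properties (1)--(4).
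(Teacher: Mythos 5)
Your proposal is correct and, despite the extra bookkeeping (face/strict factorization, reduction to the fixed points of $r_I$, separate treatment of the degenerate case $\psi h=1_\FF$), it is in essence the induction on the number of names in $I$ that the paper itself invokes, with the descent from $I$ to $I-i$ along an elementary face doing the inductive work. The only step you leave implicit is that closing the sub-case $\partial_I u'=1$ also needs property (4) applied to $r_{\psi h}$, i.e.\ rewriting $(h\times\id)\,r_{\psi h}\,((ib)\times\id)$ as $((h(ib))\times\id)\,r_{\psi h (ib)}$ before the inductive hypothesis for $h(ib)$ can be matched against both sides; this is immediate from the tools you already use.
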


This is proved  by induction on the number of element of $I$ (the
  result being clear if $I$ is empty).

A particular case is $r_J(f\times\id) = (f\times\id)r_{\partial_J
  f}$. Note that, in general, $\partial_Jf$ is not $\partial_I$.

\medskip

A direct consequence of the previous theorem is the following.

\begin{corollary} The singular cubical set associated to a topological space
has a composition structure.
\end{corollary}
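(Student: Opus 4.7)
The plan is to define the composition structure on $\sing(X)$, viewed as a fibrant type over the terminal context, by using the retractions $r_\varphi$ from the previous theorem to extend partial maps from the ``open box'' portion of a cube to the full cube, and then reading off the top face. Unpacking the definition, I need, for each $I$, fresh $i$, $\varphi \in \FF(I)$, partial element $u$ of $\sing(X)(I,i)$ of extent $\varphi$, and compatible bottom $a_0 \in \sing(X)(I)$, to produce an element of $\sing(X)(I)$. Using the adjunction between $\sing$ and geometric realization, the data $u$ and $a_0$ together with their agreement on $\varphi$ correspond to a single continuous map $g \colon S_\varphi \to X$, where
\[
S_\varphi = \{(v,z) \in [0,1]^I \times [0,1] : \varphi\, v = 1_\FF \text{ or } z = 0\}
\]
is the closed subspace that is precisely the fixed-point set of $r_\varphi$, by property~(1). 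I then define $\scomp(I, i, *, \varphi, u, a_0) \in \sing(X)(I)$ to be the continuous map $v \mapsto g(r_\varphi(v, 1))$, which is well-defined since $r_\varphi$, being idempotent, lands in $S_\varphi$.

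Next I would verify the two required properties. For extensibility on $\varphi$: whenever $\varphi\, v = 1_\FF$ the point $(v,1)$ already lies in $S_\varphi$, hence is fixed by $r_\varphi$, so the composition reduces to $g(v,1) = u(i1)(v)$. The same argument combined with $r_{1_\FF} = \id$ from property~(1) gives the equality $\scomp(I,i,*,1_\FF,u,a_0) = u(i1)$. For uniformity under $f \colon J \to I$ in $\CC$ with a fresh $j \notin J$, the restricted data $u(f, i{=}j)$ and $a_0 f$ glue to $g \circ (f \times \id) \colon S_{\varphi f} \to X$, using that $f \times \id$ sends $S_{\varphi f}$ into $S_\varphi$. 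Then the $f$-restriction of the composition is $v' \mapsto g(r_\varphi (f \times \id)(v', 1))$, while the composition of the restricted data is $v' \mapsto g((f \times \id)\, r_{\varphi f}(v', 1))$; these agree by the uniformity theorem $r_\varphi (f \times \id) = (f \times \id)\, r_{\varphi f}$ established just above.

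The main bookkeeping obstacle is the careful unpacking of the singular/realization adjunction restricted to sub-polyhedra, namely verifying that a partial element of $\sing(X)$ on extent $\varphi$ genuinely corresponds to a continuous map out of the sub-polyhedron of $[0,1]^I \times [0,1]$ cut out by $\varphi$, and that this correspondence is natural in $I$. Once this dictionary is in place the rest of the proof is forced: the two verifications reduce, respectively, to the fixed-point property of $r_\varphi$ and to the uniformity theorem for the family $(r_\psi)$.
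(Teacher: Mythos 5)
Your proof is correct and is essentially the argument the paper intends (the paper states the corollary as a direct consequence of the uniformity theorem without spelling it out): glue the partial element and the bottom face into a continuous map on the fixed-point set of $r_\varphi$ via the realization--singular adjunction, define the composition by precomposing with $r_\varphi$ at the top level $z=1$, and obtain the $1_\FF$-equation from $r_{1_\FF}=\id$ and uniformity of $\scomp$ from the theorem $r_\psi(f\times\id)=(f\times\id)r_{\psi f}$. The adjunction bookkeeping you flag is exactly the routine verification the paper leaves implicit, so nothing essential is missing.
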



\end{document}